\RequirePackage[l2tabu, orthodox]{nag}
\documentclass[11pt, onecolumn]{IEEEtran}

\usepackage{amssymb}
\usepackage{amsfonts}
\usepackage{epsfig}
\usepackage{subfigure}
\usepackage{calc}
\usepackage{color}
\usepackage{bm}
\usepackage{enumerate}
\usepackage{pdfsync}
\usepackage{framed}
\usepackage{arydshln}
\usepackage[fleqn]{amsmath}
\usepackage[nosort]{cleveref}
\usepackage{natbib}
\setcitestyle{numbers,square,comma}

\newtheorem{defn}{Definition}

\newtheorem{thm}{Theorem}[section]
\newtheorem{cor}[thm]{Corollary}

\newtheorem{lem}[thm]{Lemma}

\newtheorem{constr}[thm]{Construction}

\newcommand{\bit}{\begin{itemize}}
\newcommand{\eit}{\end{itemize}}
\newcommand{\bcor}{\begin{cor}}
\newcommand{\ecor}{\end{cor}}

\begin{document}

\title{The Storage vs Repair-Bandwidth Trade-off for Clustered Storage Systems}
\author{N. Prakash, Vitaly Abdrashitov and Muriel M{\'{e}}dard
	\thanks{N. Prakash, Vitaly Abdrashitov and Muriel M{\'{e}}dard are with the Research Laboratory of Electronics, Massachusetts Institute of Technology, USA (email: \{prakashn, vit, medard\}@mit.edu). A part of the results in the paper were presented as part of an invited talk at the 54th Annual Allerton Conference on Communication, Control, and Computing, 2016, Allerton Park and Retreat Center, Monticello, IL, USA. This work is in part supported by the Air Force Office of Scientific Research (AFOSR) under award No FA9550-14-1-043, and in part supported by the National Science Foundation (NSF) under Grant No.  CCF-1527270.}
}

\maketitle

\begin{abstract}
We study a generalization of the setting of regenerating codes, motivated by applications to storage systems consisting of  clusters of storage nodes. There are $n$ clusters in total, with $m$ nodes per cluster.  A data file is coded and stored across the $mn$ nodes, with each node storing $\alpha$ symbols. For availability of data, we require that the file be retrievable by downloading the entire content from any subset of $k$ clusters. Nodes represent entities that can fail. We distinguish between intra-cluster and inter-cluster bandwidth (BW) costs during node repair.  Node-repair in a cluster is accomplished by downloading $\beta$ symbols each from any set of $d$ other clusters, dubbed remote helper clusters, and also up to $\alpha$ symbols each from any set of $\ell$ surviving nodes, dubbed local helper nodes, in the host cluster. We first identify the optimal trade-off between storage-overhead and inter-cluster repair-bandwidth under functional repair, and also present optimal exact-repair code constructions for a class of parameters. The new trade-off is strictly better than what is achievable via space-sharing existing coding solutions, whenever $\ell > 0$. We then obtain sharp lower bounds on the necessary intra-cluster repair BW to achieve optimal trade-off. Under functional repair, random linear network codes (RLNCs) simultaneously optimize usage of both inter- and intra-cluster repair BW; simulation results based on RLNCs suggest optimality of the bounds on intra-cluster repair-bandwidth. Our bounds reveal the interesting fact that, while it is beneficial to increase the number of local helper nodes $\ell$ in order to improve the storage-vs-inter-cluster-repair-BW trade-off, increasing $\ell$ not only increases intra-cluster BW in the host-cluster, but also increases the intra-cluster BW in the remote helper clusters.  We also analyze resilience of the clustered storage system against passive eavesdropping by providing file-size bounds and optimal code constructions.
\end{abstract}


\section{Introduction}

We consider the problem of designing efficient erasure codes for fault-tolerant data storage in a clustered network of storage nodes. Nodes within a cluster are connected to each other via one network, while a second network provides connectivity between clusters.  In clustered networks, there is often a differentiation between intra- and inter-cluster bandwidth costs, and this occurs  because of factors like physical distance between the clusters or some other differentiating characteristic in the communications within and between clusters.  Typically, intra-cluster bandwidth cost is much less than inter-cluster bandwidth cost. A user file is erasure coded and stored across the nodes in the various clusters. From an availability perspective, it is of interest to encode and store a data file such that access to a subset of clusters allows reconstruction of the entire uncoded file. We consider nodes as failure domains, and require efficient repair of failed nodes in any cluster. Repair of a failed node in a cluster is performed by downloading \emph{helper data} from other clusters, and also a subset of surviving nodes in the host-cluster. We permit the possibility that a cluster that aids in the repair of a failed node gathers and processes helper-data from its various nodes, before sending it to the target-node, in order to decrease inter-cluster repair-bandwidth costs. A good erasure-code solution allows a desirable trade-off between storage-overhead and the inter- and intra-cluster repair BW costs for a given availability requirement. 

The model described above is motivated by applications to cloud storage settings, where user data is spread across distinct data-centers. For instance, the clusters could represent geographically separated data centers of a cloud-service provider or a content-delivery network. All major cloud service providers like Amazon \cite{amazonwhitepaper}, Microsoft Azure \cite{azurestorage} and Akamai \cite{akamai} provide options for geo-replicating user data in multiple data centers. Geo-replication across data centers improves both availability and durability of user data. Data-center unavailability occurs owing to network or power outages, software bugs and even security vulnerabilities. Note that in the event of unavailability, data is never lost and thus does not necessitate data-center wide rebuild. While the three systems mentioned above do not mention erasure coding across data centers, other practical systems indeed consider this possibility.  For example, Facebook's F$4$ blob storage system \cite{facebookf4} considers storing the xored content of two data centers  in a third data-center. Similarly, the authors of the Hitachi white-paper \cite{hitachi} suggest using Reed-Solomon like codes across data centers for  private clouds consisting of a small number of data-centers. Microsoft Giza~\cite{giza}, a geo-distributed storage system considers erasure coding data across as many as $11$ data centers spread over $3$ continents. Giza employs MDS codes like Reed-Solomon codes across data centers to handle availability requirement during data collection; i.e., a user connects to the nearest $k$ available data centers (assuming an $[n, k]$ MDS code is used) during data collection to download the data. The paper notes that (we quote) ``cross-DC erasure coding only becomes economically attractive if 1) there are workloads that consume very large storage capacity while incurring very
little cross-DC traffic; 2) there are enough cross-DC network bandwidth at very low cost". The paper provides justification for both these aspects;  we refer to \cite{giza} for details.

Our model is perhaps even more applicable to the setting of ``cloud of clouds", where user data is spread across data centers corresponding to multiple cloud service providers. Implementation studies that show the advantages of using Reed-Solomon-like erasure codes to store data in user-defined cloud-of-clouds appear  in \cite{racs, mulibcloud, depsky, cyrus}. Several reasons motivate a cloud-of-cloud setting, instead of a single cloud setting. The first is the need for high availability in a local geographic zone~\cite{mulibcloud} - individual cloud service providers might be limited in their number of geographic zone specific data centers. Another motivation for a cloud-of-cloud setting is the flexibility to avoid vendor lock-in~\cite{racs, mulibcloud}. With an ever increasing number of new cloud providers offering competitive pricing and features,  it might be of interest to migrate from an existing cloud provider to a new one. However, individual cloud providers charge the users differently for in-network (intra-cluster in our model) and out-of-network (inter-cluster) data movement. In this scenario, to decrease the cost of migration, it is beneficial to spread data across several providers, so that the user only needs to migrate data in the less competitive providers to the new provider. Finally, a reason for providing user-defined cloud-of-clouds is that of privacy concerns~\cite{depsky, cyrus, Olivera}, where security compromise of any one cloud provider does not compromise  user data.

In this work, we model an entire data-center as a cluster. We restrict ourselves to the storage of a single user file; the solution provided here can be applied independently to any of the user files. For storage of a single coded file, we assume an equal number of storage nodes in any of the clusters. While performing data collection, we assume a cluster to be either completely available or completely unavailable. Such an assumption is sensible in a multi-data-center cloud setting~\cite{hitachi}. In our model, we restrict ourselves to recovery from single node failure. If there are multiple node failures in the system, it is assumed that the recovery process happens sequentially, one node at a time. While catastrophic failure of an entire data-center is rare, correlated failures of nodes in a data center are an important issue reported in practice \cite{ford2010availability}. In this scenario, in our model, we parametrize the number of nodes, named local helper nodes, in the host-cluster that can aid in the repair of a node in the host-cluster.  Further, to keep the model simple, we ignore any hierarchical topologies that may be present inside a data-center (cluster), and simply assume equal cost connectivity between any two nodes inside a cluster. We also assume direct connectivity between any two clusters in the network.

A straightforward solution that guarantees availability of data, and minimizes inter-cluster repair-bandwidth costs is simply to use a product code consisting of two Maximum Distance Separable (MDS) codes - one across the clusters, and another within a cluster. While the solution entirely eliminates any inter-cluster bandwidth cost, it suffers from poor storage-overhead owing to the need to have redundancy in every cluster. At the other extreme, it is also possible to achieve highly optimized storage cost, at the expense of inter-cluster repair-bandwidth, by using minimum storage regenerating (MSR) codes \cite{dimakis} across the clusters. For instance, if we assume there are $n$ clusters in total, then  $(n, k)$ MSR codes across the clusters ensure that data is retrievable by accessing (entire) content of any $k$ clusters. However, the solution suffers from high inter-cluster repair-bandwidth cost. Our goal in this work, at a high level, is to explore alternate solutions which can smoothly trade-off storage-overhead against inter-cluster repair-bandwidth for given availability requirements. We show that it is indeed possible to achieve (see Fig. \ref{fig:spaceshare}) operating points which are strictly better than those obtained by space-sharing between product MDS codes and MSR codes. We also characterize the amount of intra-cluster repair-bandwidth needed for achieving the optimal trade-off between storage-overhead and inter-cluster repair bandwidth.

For the rest of the introduction, we first provide an abstract description of the system model used in this work. This is followed by a discussion of other related system models in the literature, a summary of our results, and also an example of the proposed erasure-coding solution.

\subsection{System Model} \label{sec:sys_model}

\begin{figure*}
	\centering
	\subfigure[Data Collection]{\label{fig:data_collec}\includegraphics[height=2.5in]{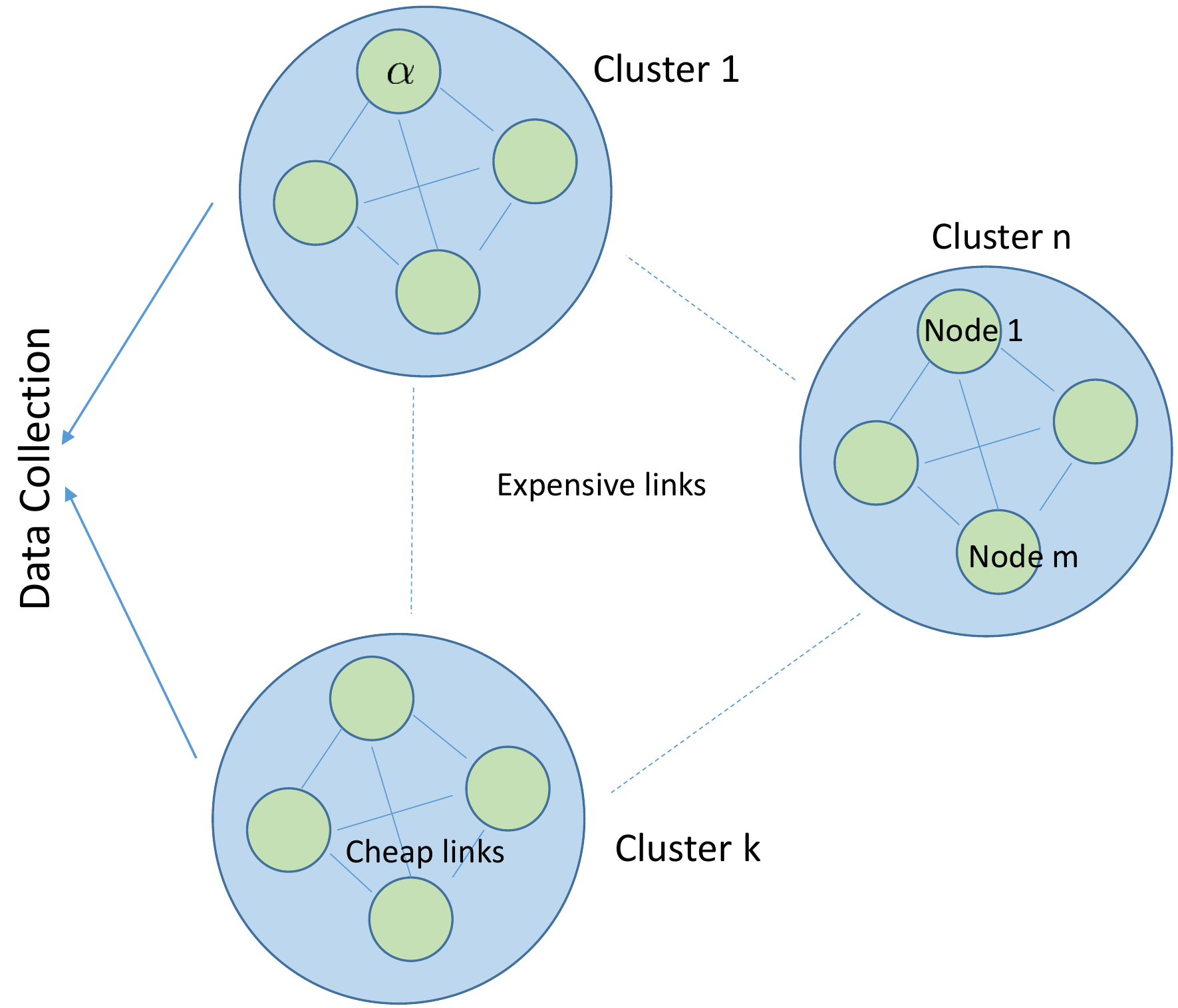}}
	\hspace{0.05in}
	\subfigure[Node Repair]{\label{fig:node_repair}\includegraphics[height=2.5in]{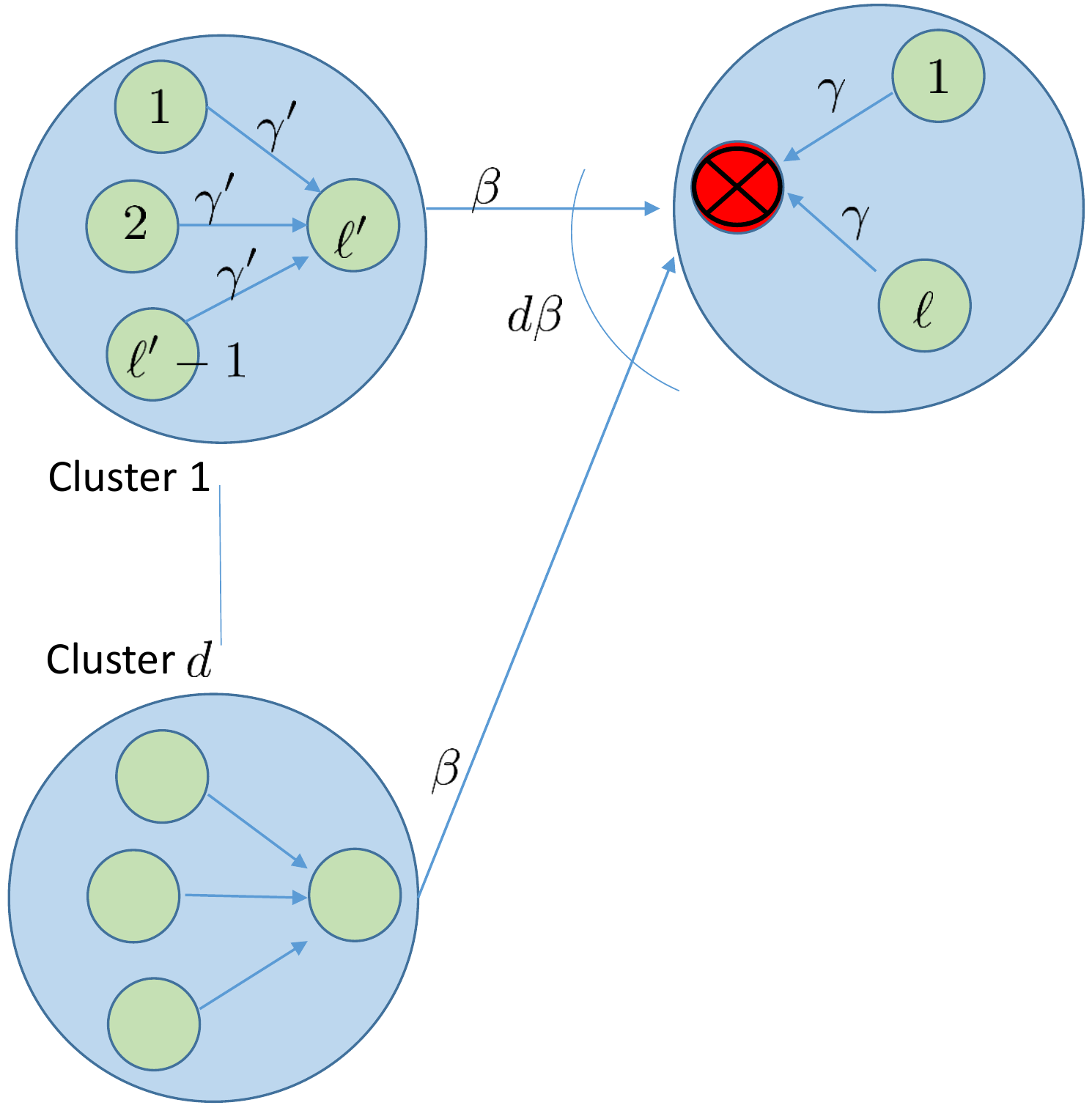}}
	\caption{System model for data collection and Node repair in a clustered storage network. Data Collection is accomplished by downloading entire contents of any $k$ clusters. Repair of a failed node in a cluster is accomplished by contacting $(i)$ any $\ell$ surviving nodes in the host-cluster, and downloading all their content, and $(ii)$ any $d$ other clusters, and downloading $\beta$ symbols from each of them.}
	\label{fig:sys_model}
\end{figure*}

We propose a natural generalization of the setting of regenerating codes (RC)~\cite{dimakis} for clustered storage networks. The network consists of $n$ clusters, with $m$ nodes in each cluster. The network is fully connected such that any two nodes within a cluster are connected via an intra-cluster link, and any two clusters are connected via an inter-cluster link. A node in one cluster that needs to communicate with another node in a second cluster does so via the corresponding inter-cluster link.  A data file of size $B$ symbols is encoded into $nm\alpha$ symbols, and stored across the $nm$ nodes such that each node stores $\alpha$ symbols. The symbols are assumed to come from a finite field $\mathbb{F}_q$ of $q$ elements. For data collection we have an availability constraint such that the entire content of any $k$ clusters be sufficient to recover the original data file~(Fig. \ref{fig:sys_model}). As mentioned before, nodes represent failure domains, and we restrict ourselves to the case of efficient recovery from single node failure. Node repair is parametrized by three parameters $d, \beta$ and $\ell$. We assume that the replacement of a failed node is in the same cluster as the failed node. The replacement node downloads $\beta$ symbols each from any set of $d$ other clusters, dubbed remote helper clusters. The $\beta$ symbols from any of the remote helper clusters are possibly a function of the $m\alpha$ symbols  present in the cluster - we assume that any one of the nodes in the cluster takes responsibility for computing these $\beta$ symbols before passing them outside the cluster\footnote{One could also assume the presence of a dedicated compute unit to compute the helper data. Such an assumption allows us to enforce a symmetric demand on the usage of intra-cluster bandwidth for all the nodes in the remote helper cluster. We will rely on the presence of such dedicated compute units in the information flow graphs (see Section \ref{sec:IFG})  used to derive file-size upper bounds.}. Further, we also permit the replaced node to download (entire) content from any set of $\ell$ other nodes, dubbed local helper nodes, in the host cluster, during the repair process. The quantity $d\beta$ represents the inter-cluster  repair-bandwidth. We refer to the overall code as the \emph{generalized regenerating code} (GRC) $\mathcal{C}_m$ with parameters $\{(n, k, d) (\alpha, \beta) (m, \ell)\}$.

The model reduces to the setup of RCs in \cite{dimakis}, when $m = 1$ (in which case, $\ell = 0$ automatically). We shall refer to the setup in \cite{dimakis} as the classical setup or classical regenerating codes. Our generalization has two additional parameters $\ell$ and $m$ when compared with the classical setup. As in the classical setup we consider both the notions of functional and exact repair. Under exact repair, the content of the repaired node is identical to that of the failed node; while in functional repair, the repair content permits data collection and repair of additional failed nodes. The first goal of the paper is to obtain a trade-off between storage-overhead $nm\alpha / B$ and inter-cluster repair-bandwidth $d\beta$ for an $\{(n, k, d) (\alpha, \beta) (m, \ell)\}$ GRC. We further note that, unlike the classical setup, the generalized setup permits $d < k$. In our model, whenever $d > 0$, we assume that the encoding function does not introduce any local dependence among the content of the various nodes of a cluster\footnote{Under linear encoding, the coded content of cluster $i, 1 \leq i \leq n$ can be written as $\widehat{\bf m}G_i$, where $\widehat{\bf m}$ is the message vector of length $B$, and $G_i$ is a $B \times m\alpha$ matrix. In this case, when we say that the encoding not introduce local dependence, we mean that the matrix $G_i$ has full column-rank}; for example, the model excludes the possibility of a local parity node within a cluster, which would hold the component-wise sum in $\mathbb{F}_q^{\alpha}$ of the other nodes' data. As we shall show, the case $d = 0$ is a special one where local dependence is necessary.

The model described above does not consider intra-cluster bandwidth incurred during repair. Intra-cluster bandwidth is needed, firstly, to compute the $\beta$ symbols in any remote helper cluster, and, secondly, to download content from $\ell$ local helper nodes in the host cluster. In order to characterize the amount of intra-cluster bandwidth that is needed to establish optimal trade-off between storage-overhead and inter-cluster repair-bandwidth, we consider the repair model shown in Fig. \ref{fig:node_repair}. In this model, the replacement node downloads at most $\gamma, \gamma \leq \alpha$ symbols  from each of the $\ell$ local helper nodes from the host-cluster. With regard to a remote helper cluster, we assume that the $\beta$ symbols contributed by it are only a function of at most $\ell', \ell' \leq m$ nodes of the cluster. We make the assumption that any set of $\ell'$ nodes can be used to compute the $\beta$ symbols. Further, we limit the amount of data that each of these $\ell'$ nodes can contribute to at most $\gamma', \gamma' \leq \alpha$ symbols. A second goal of this paper is to identify necessary requirements on the parameters $\gamma, \ell', \gamma'$ that are needed to guarantee optimal trade-off between storage-overhead and inter-cluster repair-bandwidth.

A summary of the various parameters used in the description of the system model appears in Table \ref{tab:parameters}.

\begin{table}[tb]
	\caption{Notation used in our System Model. The last three parameters relate to intra-cluster repair-bandwidth}
	\label{tab:parameters}
	\centering

	\begin{tabular}{cl}
		\textbf{Symbol} & \textbf{Definition}\\
		\hline
		$n$ & total number of clusters in the system\\
		$m$ & number of storage nodes in each cluster\\
		$k$ & number of clusters required for data collection \\
		$d$ & number of remote helper clusters providing helper data during node repair\\
		$\ell$ & number of local helper nodes providing helper data during node repair \\
		$q$ & finite field size for data symbols \\
		$\alpha$ & number of symbols each storage node holds for one coded file \\
		$\beta$ & size of helper data downloaded from each remote helper cluster during  node repair, in symbols \\ \hdashline
		$\gamma$ & size of helper data downloaded from each local helper during  node repair, in symbols \\
		$\ell'$ & number of nodes in a remote helper cluster that contribute toward computing the $\beta$ helper symbols of the cluster \\
		$\gamma'$ & size of the data provided by each of $\ell'$ nodes at remote helper cluster to compute the cluster helper data, in symbols \\
		\hline
	\end{tabular}
\end{table}

\subsection{Related Work}

Regenerating codes were originally introduced in \cite{dimakis} for simultaneously optimizing storage overhead and repair bandwidth for flat storage systems. By a flat storage system, it is meant that every node in the storage system is connected to every other storage node via some logical link, where all logical links incur the same bandwidth cost for communication per bit. Further, the data collector also connects to any of the storage nodes via links of similar cost.  There has since then been significant progress in the area of classical RCs in terms of code constructions, finding optimal trade-offs under exact repair, and practical implementation. Below, we review variations of RCs that have been proposed for non-flat topologies, and see how our model relates to these existing variations. We shall also comment on how the model of locally repairable codes~\cite{gopalan2012locality} relates to our model.

\subsubsection{Regenerating Code Variations for Clustered Topologies}

Regenerating code variations for data-center like topologies consisting of racks and nodes are considered in \cite{plee_isit2016_doubleregen, hu2017optimal, clust_stor_Moon, gaston2013realistic, Gaston_nonhom, ozan_xyregen}. In  \cite{plee_isit2016_doubleregen}, \cite{clust_stor_Moon} and \cite{gaston2013realistic}, the authors distinguish between inter-rack (inter-cluster) and intra-rack (intra-cluster) bandwidth costs. Further, the works \cite{plee_isit2016_doubleregen} and  \cite{clust_stor_Moon}  permit pooling of intra-rack helper data to decrease inter-rack bandwidth. Also, all three works allow taking help from host-rack nodes during repair. Unlike our model, for data collection all three works simply require file retrievability from any set of $k$ nodes irrespective of the racks (clusters) to which they belong. In other words, the notion of clustering applies only to repair, and not data collection, and this is a major difference with respect to our model. Thus while these variations are suitable for modeling the node-rack topologies present within a data center, they do not model situation  of erasure coding across data centers with the availability requirement as considered in this work. The work \cite{hu2017optimal} applies the theoretical results of \cite{plee_isit2016_doubleregen} for the practical setting of Hadoop file system. The work in \cite{Gaston_nonhom} is a variation of that in \cite{gaston2013realistic} for a two-rack model, where the per-node storage capacity of the two racks differ. In \cite{ozan_xyregen}, the authors consider a two-layer storage setting like ours, consisting of several blocks (analogous to clusters as considered in this work) of storage nodes. A different clustering approach is followed for both data collection and node repair. For data collection, one accesses $k_c$ nodes each from any of $b_c$ blocks. Though \cite{ozan_xyregen} focuses on node repair, the model assumes possible unavailability of the whole block where the failed node resides, and as such uses only nodes from other blocks for repair. Further, unlike our model in this work, the authors do not differentiate between  inter-block and intra-block bandwidth costs. The framework of twin-codes introduced in \cite{twincode} is also related to our model and implicitly contains the notion of clustering. In \cite{twincode} nodes are divided into two sets. For data collection, one connects to any $k$ nodes in the same set. Recovery of a failed node in one set is accomplished by connecting to $d$ nodes in the other set. However, there is no distinction between intra-set and inter-set bandwidth costs, and this becomes the main difference with our model.

\subsubsection{Regenerating Code Variations for Heterogeneous Systems}

Several works \cite{nihar_flexible, Shum_hetero, ernvall2013capacity, akhlaghi2010cost, li2010tree, hetero_regen_time} study variations of RCs in varied settings, with different combinations of node capacities,  link costs, and amount of data-download-per-node. The main difference between our model and these works is that none of them explicitly considers clustering of nodes while performing data collection. In \cite{nihar_flexible}, the authors introduce flexible regenerating codes for a flat topology of storage nodes, where uniformity of download is enforced neither during data collection, nor during node-repair. References \cite{Shum_hetero}, \cite{ernvall2013capacity} consider systems where the storage and repair-download costs are non-uniform across the various nodes. The authors of \cite{Shum_hetero}, as in \cite{nihar_flexible}, allow  a replacement node to download an arbitrary amount of data from each helper node. In  \cite{akhlaghi2010cost}, nodes are divided into two sets, based on the cost incurred while these nodes aid during repair. As noted in \cite{Gaston_nonhom}, the repair model of \cite{akhlaghi2010cost} is different from a clustered network, where the repair cost incurred by a specific helper node depends on which cluster the replacement node belongs to. The works of \cite{li2010tree} and \cite{hetero_regen_time} focus on minimizing regeneration time rather regeneration bandwidth in systems with non-uniform point-to-point link capacities. Essentially, each helper node is expected to find the optimal path, perhaps via other intermediate nodes, to the replacement node such that the various link capacities are used in a way to transfer all the helper data needed for repair in the shortest possible time. It is interesting to note both of these works permit pooling of data at an intermediate node, which gathers and processes any relayed data with its own helper data. Recall that our model (and the one in \cite{plee_isit2016_doubleregen}) also considers pooling of data within a remote helper cluster, before passing on to the target cluster.

\subsubsection{Locally Repairable Codes}

Locally repairable codes (LRCs), introduced in \cite{gopalan2012locality}, are motivated by the need to carry out efficient node repair in hierarchical storage systems, such as data centers. The subject of LRCs, like regenerating codes, has attained significant attention both in theory and practice, since its introduction. Recovery of a node failure within a cluster is first attempted locally, in order to minimize cross-cluster bandwidth; if too many nodes in the cluster are unavailable, global parity nodes, spread across various clusters, aid in recovery. However, LRCs do not model clustering of nodes while carrying out data collection, and this is once again a major difference with our model. To draw further similarities with LRCs, we note that in our model, during the repair of a node, the helper data from the $\ell$ nodes from the same cluster can be considered as \emph{local-helper data}. However if $d >0$, unlike in LRCs, local-helper data alone is not sufficient for single-node-repair, since in this work we assume the encoding function of the generalized regenerating code to introduce no dependency among the content of the $m$ nodes in any cluster. For $d > 0$, lack of dependencies within each cluster reduces storage overhead when compared to LRCs; at the same time,
local-helper data allows us to achieve a trade-off between storage-overhead and the inter-cluster-repair-bandwidth better than that of a system with $m$ stacked classical RCs, in which the set of $i$-th nodes in all clusters stores the data encoded with $i$-th RC.
The case $d = 0$ corresponds to one having no remote helper clusters, so that the repair is carried out entirely locally. The difference in this case of $d = 0$ with model of LRCs is notion of clustering of nodes, while performing data collection.

A tabular summary of related works appears in Table \ref{tab:relatedworks}.

\renewcommand{\arraystretch}{1.5}
\begin{table}[tb]
	\caption{Summary of related works, and key differences from generalized regenerating codes (GRC)}
	\label{tab:relatedworks}
	\centering

	\begin{tabular}{p{3cm}p{13.5cm}}
		\textbf{Reference} & \textbf{Focus}\\
		\hline

		\multicolumn{2}{c}{RCs in clustered topologies} \\
		\citet{plee_isit2016_doubleregen,clust_stor_Moon,gaston2013realistic} & Rack-based topology with cluster-based repair, allow use of cheaper local-rack helper data. \cite{plee_isit2016_doubleregen} and  \cite{clust_stor_Moon} permit pooling of intra-rack helper data to decrease inter-rack bandwidth. Unlike GRC, the works assume node-based data collection with no notion of clustering for data collection \\
		\citet{Gaston_nonhom} & Similar to \citet{gaston2013realistic} with 2 racks with different per-node storage \\
		\citet{ozan_xyregen} & Two-layered storage with nodes grouped in blocks, block failure model, repair and data collection from fixed number of nodes from surviving blocks. All links have same costs \\
		\citet{twincode} & Two sets of nodes, repair from $d$ nodes of the other set, data collection from $k$ nodes of any single set. All links have same costs\\ \hline
		\multicolumn{2}{c}{RCs in heterogeneous systems} \\
		\citet{nihar_flexible} & Non-uniform amount of data downloaded from nodes during repair or data collection, without any notion of clustering \\
		\citet{Shum_hetero,ernvall2013capacity} & Non-uniform storage sizes and repair BW costs \\
		\citet{akhlaghi2010cost} & Repair BW cost can take two different values, depending on the helper node used, this is different from clustering \\
		\citet{li2010tree,hetero_regen_time} & Aim to transmit the repair helper data in shortest amount of time by finding the optimal path from helper nodes to replacement nodes through a network with non-uniform link capacities \\ \hline
		\citet{gopalan2012locality} & Inherent notion of clustering for node repair, and single node is performed locally with the help of other nodes in the clusters. Thus unlike GRC, LRCs assume dependency among nodes in a cluster for single node repair. Further, there is no notion of clustering for data collection. \\ \hline
	\end{tabular}
\end{table}
\renewcommand{\arraystretch}{1.0}

\subsection{Our Results}

\subsubsection{Upper Bound on File Size $B$}

Under the setting of functional repair, the file-size $B$ is shown to be upper bounded by
\begin{eqnarray}\label{eq:file_size}
B & \leq & B^* \ = \ \ell k\alpha + (m - \ell)\sum_{i=0}^{k-1} \min \{\alpha, (d-i)^+\beta \},
\end{eqnarray}
where we use the notation $a^+$ to mean $\max(a, 0)$, for any integer $a$. The bound is shown by considering the \emph{information-flow graph} (see Section \ref{sec:IFG}) under functional-repair, and calculating the minimum cut.  For any finite information-flow graph, the achievability of \eqref{eq:file_size} follows from results in network coding~\cite{KoetterMedard}. This establishes the optimality of \eqref{eq:file_size} when we know an upper bound on the number of node failures that occur during the life-time of the system. In practice, random linear network codes (RLNCs)~\cite{rlnc} can be used to achieve near-optimal operating points.

For fixed values of $B = B^*, n, k, d>0, \ell, m$, \eqref{eq:file_size} gives a normalized trade-off (see Fig. \ref{fig:tradeoff}) between storage-overhead $nm\alpha/B$ and inter-cluster-repair-bandwidth-overhead $d\beta/\alpha$ (see Section \ref{sec:tradeoff} as to why we refer to $d\beta/\alpha$ as repair-bandwidth-overhead). For any $m$, when $\ell = 0$, the trade-off is exactly same as that of the classical regenerating codes~\cite{dimakis}. When $\ell > 0$ (implies $m > 1$), the  trade-off is strictly better than that of the classical setup.

\begin{figure}
	\centering
	\includegraphics[width=95mm]{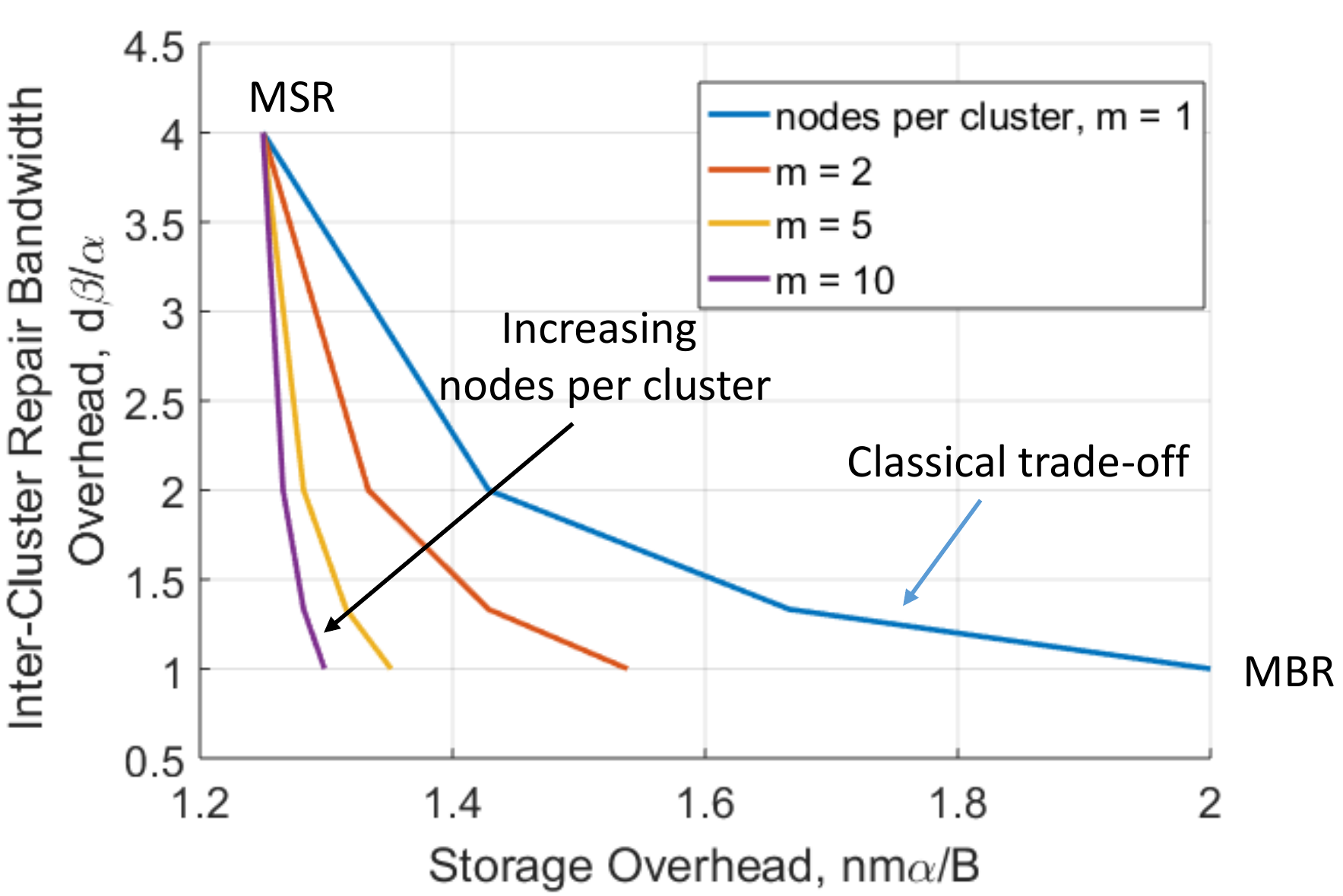}
	\caption{Trade-off between storage-overhead $nm\alpha/B$ and inter-cluster repair-bandwidth-overhead $d\beta/\alpha$, for an $(n = 5, k = 4, d = 4)$ clustered storage system, with $\ell = m-1$. }
	\label{fig:tradeoff}
\end{figure}

\vspace{0.1in}

\subsubsection{Optimal Code Constructions}

We present optimal codes for the minimum storage-overhead  and the minimum inter-cluster  repair-bandwidth-overhead  operating points of the trade-off, for a class of parameters and under the setting of exact repair. These two operating points corresponding to MSR and MBR codes, respectively. The operating points of the MSR and MBR  codes are characterized by relations $B = mk\alpha$ and $\alpha = d\beta$ respectively. The construction works by suitably combining $\ell$ $[n, k]$ vector MDS codes over $\mathbb{F}_q^{\alpha}$ and $(m - \ell)$ classical exact-repair MSR (MBR) codes. We also present an optimal code construction for functional-repair, which  tolerates an arbitrary number of failures\footnote{The network-coding based achievability works only if there is a known upper bound on the number of repairs that occur over the duration of operation of the system.}, for the case $\ell = m-1, d \geq  k$. The code is constructed by combining  $m-1$  $[n, k]$ vector MDS codes over $\mathbb{F}_q^{\alpha}$, and an $\{(n, k, d), (\alpha, \beta)\}$ functional-repair code from \cite{Wu_regen} for the classical setting.  The code construction in \cite{Wu_regen} is an instance of a functional-repair code for the classical setting, which tolerates an arbitrary number of failures and repairs for the duration of operation of the system.

\vspace{0.1in}

\subsubsection{Lower bound on Intra-Cluster Bandwidth Related Parameters}

We calculate lower bounds on the intra-cluster-bandwidth related parameters $\gamma, \ell', \gamma'$, shown in Fig. \ref{fig:sys_model}, under assumption that \eqref{eq:file_size} is achieved with equality. While studying the impact of any one of these parameters, we ignore the effects of the other two; for example, the lower bound on $\gamma$ is obtained under the assumption that $\ell' = m$ and $\gamma' = \alpha$, etc.

Under functional repair with $d > 0$, the per-node intra-cluster bandwidth needed from the host-cluster is lower bounded by
\begin{eqnarray} \label{eq:gamma}
\gamma & \geq & \gamma^* = \alpha - (d-k+1)^+\beta.
\end{eqnarray}
When $d<k$, the bound gives $\gamma\geq\alpha$, i.e. the entire content of the local helper nodes must be used. For $d \geq k$, at the MBR point characterized by $\alpha = d\beta$, the bound gives $\gamma\geq(k-1)\beta$, and at the MSR point characterized by $\alpha=(d-k+1)\beta$, the bound gives $\gamma \geq 0$. The trivial bound at the MSR point is indeed optimal, since optimal (achieving equality in \eqref{eq:file_size}) codes at the MSR point can be achieved by simply stacking $m$ classical $(n,k,d),(\alpha,\beta)$ MSR codes. In this case, no local help is needed for repair, and hence $\gamma = 0$ is indeed optimal at the MSR point. In fact, under functional repair, the bound in \eqref{eq:gamma} is optimum not just at the MSR point; we prove the converse statement that, as long as $\gamma \geq \gamma^*$, it is indeed possible to achieve the optimal file-size in $\eqref{eq:file_size}$ for any set of parameters, as long as there is a known upper bound on the number of repairs in the system.

We provide bounds for the parameters $\ell'$ and $\gamma'$ which characterize intra-cluster bandwidth from remote helper cluster under the assumption \footnote{The values of $\alpha$ in the  range $(d-k+2)\beta < \alpha \leq (d-k+1)\beta$ corresponds to the region in the trade-off between the minimum-storage operating point and the next corner point. The bound in \eqref{eq:intra_bound_helper2} does not apply when $\alpha$ is in this range.} that $\alpha \geq (d-k+2)\beta$, and $d \geq k$. Specifically we show the parameter $\ell'$ is no less than $m$, i.e., $\ell' = m$, and
\begin{eqnarray}
\gamma' & \geq & \frac{\beta}{m - \ell}. \label{eq:intra_bound_helper2}
\end{eqnarray}
Under functional repair, RLNCs simultaneously optimize usage of both inter-cluster and intra-cluster bandwidths. Our simulations based on RLNCs indicate the tightness (achievability) of the bound in \eqref{eq:intra_bound_helper2}, under functional repair (see Fig. \ref{fig:rlnc_sims}). We do not have an analytical converse for this bound.

The bounds on $\ell'$ and $\gamma'$ highlight the necessary trade-off between the system capacity $B^*$ and the remote helper intra-cluster bandwidth\footnote{We note that we quantify the remote helper intra-cluster bandwidth as $\ell'\gamma'$ though from Fig. \ref{fig:sys_model}, one gets the impression that this is $(\ell'-1)\gamma'$. This discrepancy arises because, in our IFG analysis (see Section \ref{sec:IFG}), we assume the presence of a dedicated external compute node which connects to all the $\ell'$ helper nodes and generates the helper data. Such an assumption makes the IFG modeling symmetric with respect to the helper nodes. The choice of  $\ell'\gamma'$ instead of $(\ell'-1)\gamma'$ as the amount of helper bandwidth is purely matter of convenience. The nature of results do not change even if one assumed $(\ell'-1)\gamma'$ as the amount of helper bandwidth.}  $\ell'\gamma' = m\gamma'$, via parameter $\ell$, the key parameter that distinguishes our model from  the classical model. Our bounds reveal the interesting fact that, while it is beneficial to
increase the number of local helper nodes $\ell$ in order to improve the storage-vs-inter-cluster-repair-bandwidth trade-off, increasing $\ell$ not only increases intra-cluster repair-bandwidth in the host-cluster, but also increases the intra-cluster repair-bandwidth in the remote helper clusters.  For example, if we consider MBR codes (having minimum inter-cluster-repair-bandwidth), we see that their storage-overhead approaches that of MSR codes for large $m$ as $\ell$ gets close to $m$. However, a high value of $\ell$ also increases the remote helper cluster bandwidth; indeed, $m\gamma'$ surges as $m-\ell$ approaches $1$; see Fig. \ref{fig:system_plots} for an illustration.

\vspace{0.1in}

\subsubsection{Security under Passive Eavesdropping - Bounds and Codes}
We study resilience of the clustered storage system against passive eavesdropping. An eavesdropper (say, Eve) gains access to the entire content of any subset of $e$ clusters, where $1 \leq e \leq k$. Eve also gets to observe all the helper data downloaded for repair of nodes in these $e$ clusters. The properties of data collection and disk repair remain same as in the case of no eavesdropper. The setting is along the lines of that considered in \cite{pawar_security}, where authors study security under the classical RC framework. The maximum file size $B^{(s)}$ that can be securely stored such that  Eve does not gain any information about the file is shown to be upper bounded by
\begin{eqnarray}\label{eq:file_size_secure}
B^{(s)} & \leq & \ell (k-e)\alpha + (m - \ell)\sum_{i=e}^{k-1} \min \{\alpha, (d-i)\beta \}.
\end{eqnarray}
We also present explicit optimal secure codes for the MBR point under the setting of exact repair. Like in the case of no security, an optimal secure code is constructed by suitably precoding a combination of $m$ component codes, but this time the component codes themselves are secure codes. A code at the MBR point is constructed by combining $\ell$ secure MDS codes for the wiretap-II channel~\cite{wiretap}\cite{arunkumar_secure}, and $(m-\ell)$ classical exact-repair secure MBR  codes~\cite{pawar_security}\cite{rashmi_secure}. Our security results are straightforward extensions of the security results for classical RCs, given our own results for the no-security case.

\begin{figure*}
	\centering
	\subfigure[Effect of local cluster repair-bandwidth $\gamma$, for an $(n  = 3, k = 2, d = 2), (\ell = 2, m = 3), (\alpha = 8, \beta = 4), \ell' = m, \gamma'=\alpha$ system.]{\label{fig:gamma}\includegraphics[height=1.5in]{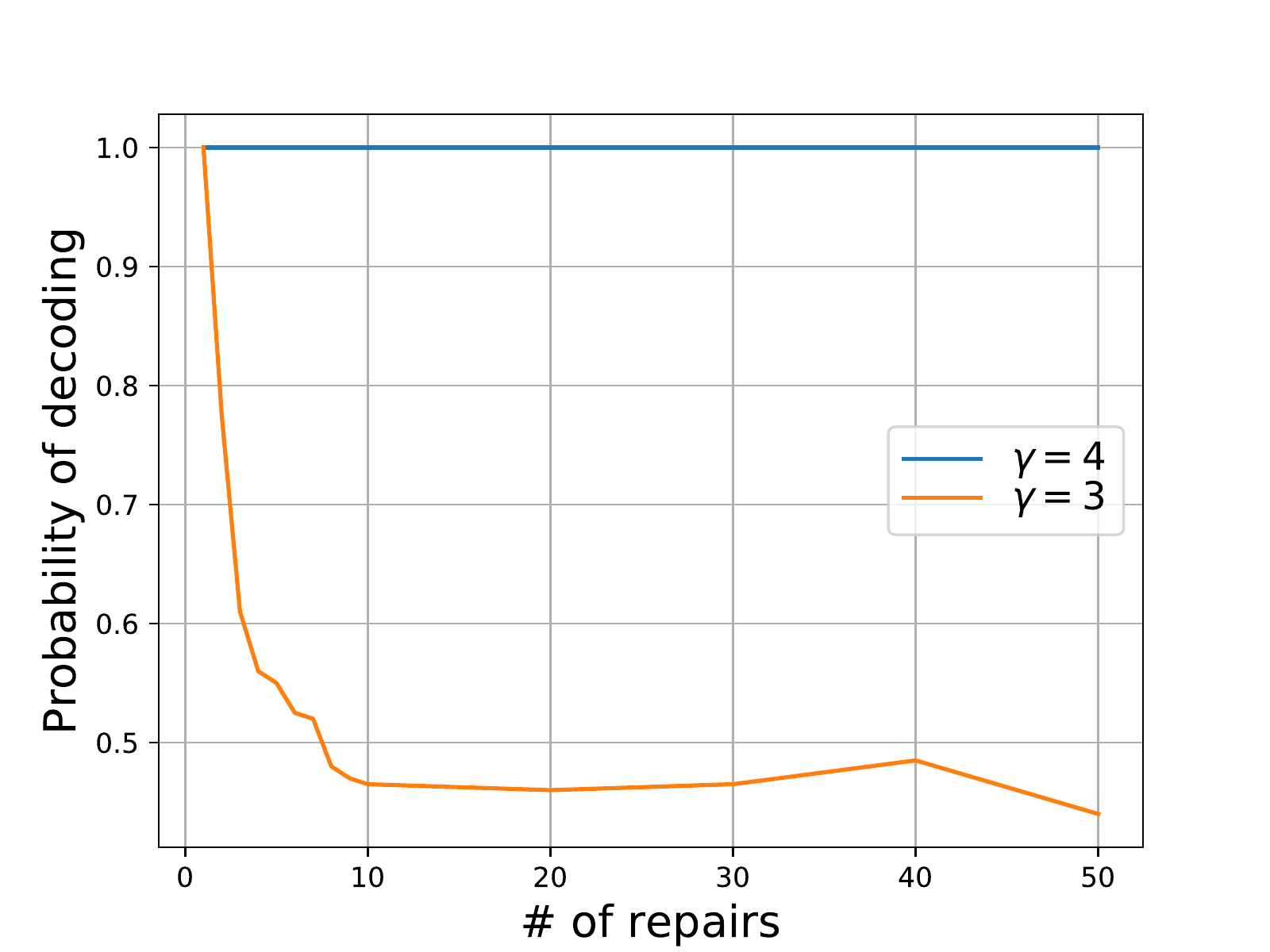}}
	\hspace{0.1in}
	\subfigure[Effect of number of remote helper nodes $\ell'$ for an $(n  = 3, k = 2, d = 2), (\ell = 1, m = 3), (\alpha = 8, \beta = 4), \gamma = \gamma' = \alpha$ system.]{\label{fig:elldash}\includegraphics[height=1.5in]{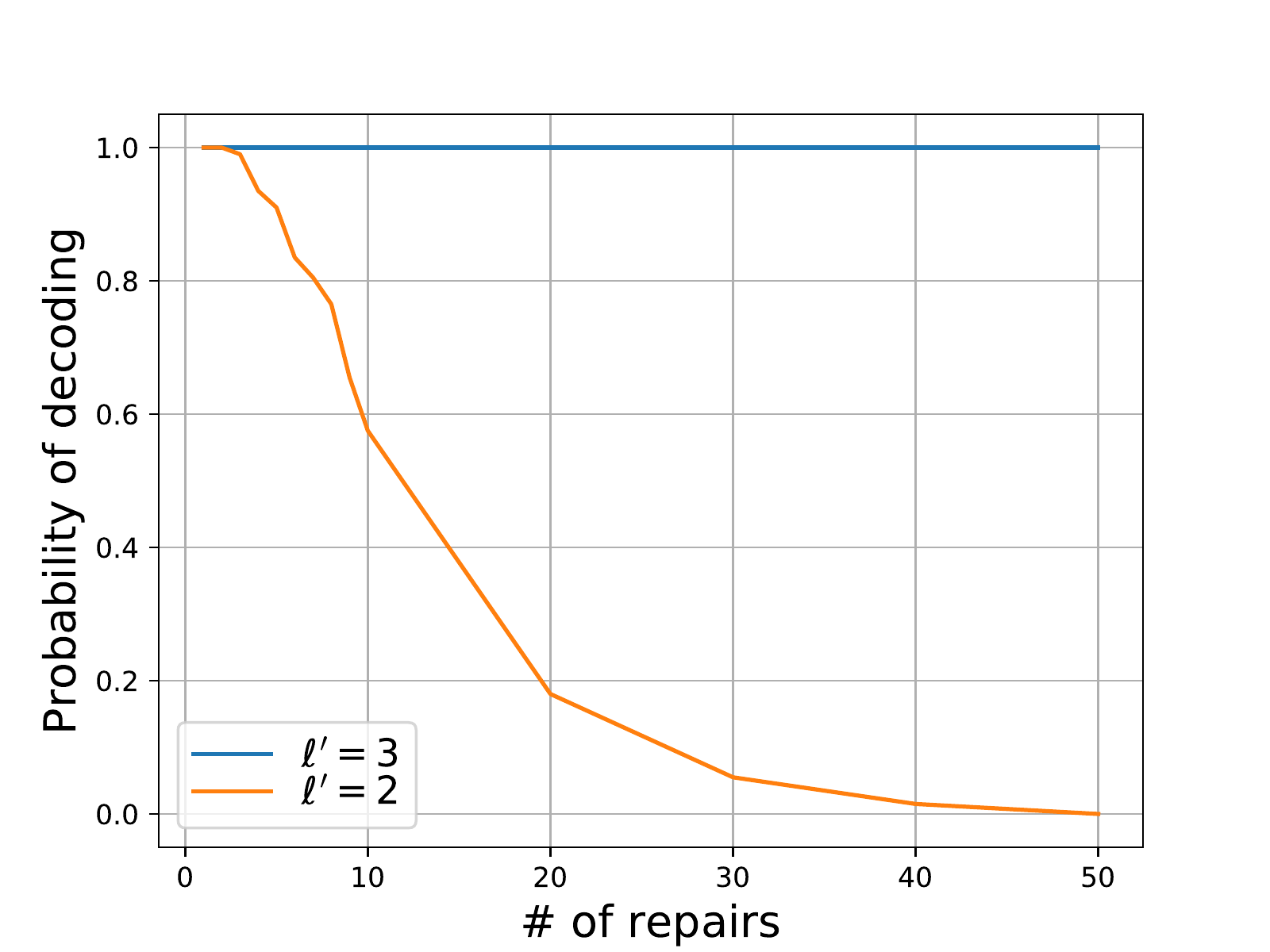}}
	\subfigure[Effect of remote helper cluster repair-bandwidth $\gamma'$, for an $(n  = 3, k = 2, d = 2), (\ell = 1, m = 3), (\alpha = 8, \beta = 4), \ell' = m, \gamma=\alpha$ system.]{\label{fig:gammadash}\includegraphics[height=1.5in]{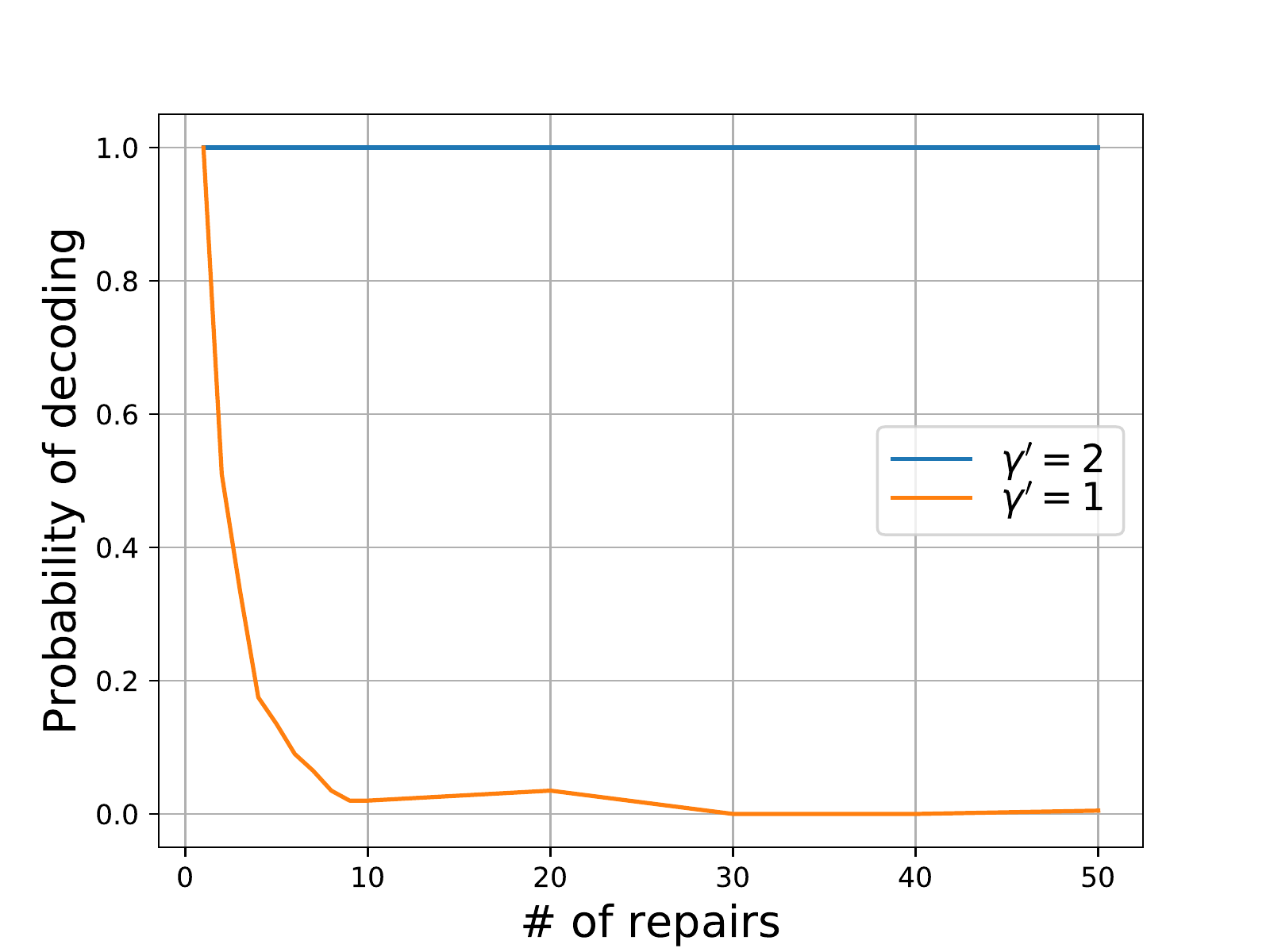}}
	\caption{Simulation results showing probability of successful data collection against number of node-repairs performed, for a clustered storage system employing random linear network codes (RLNCs) with sufficiently large field-size. The three figures respectively indicate the impact of the intra-cluster-bandwidth related parameters $\gamma, \ell'$ and $\gamma'$  on the probability of decoding.}
	\label{fig:rlnc_sims}
\end{figure*}

\begin{figure}
	\centering
	\includegraphics[width=145mm]{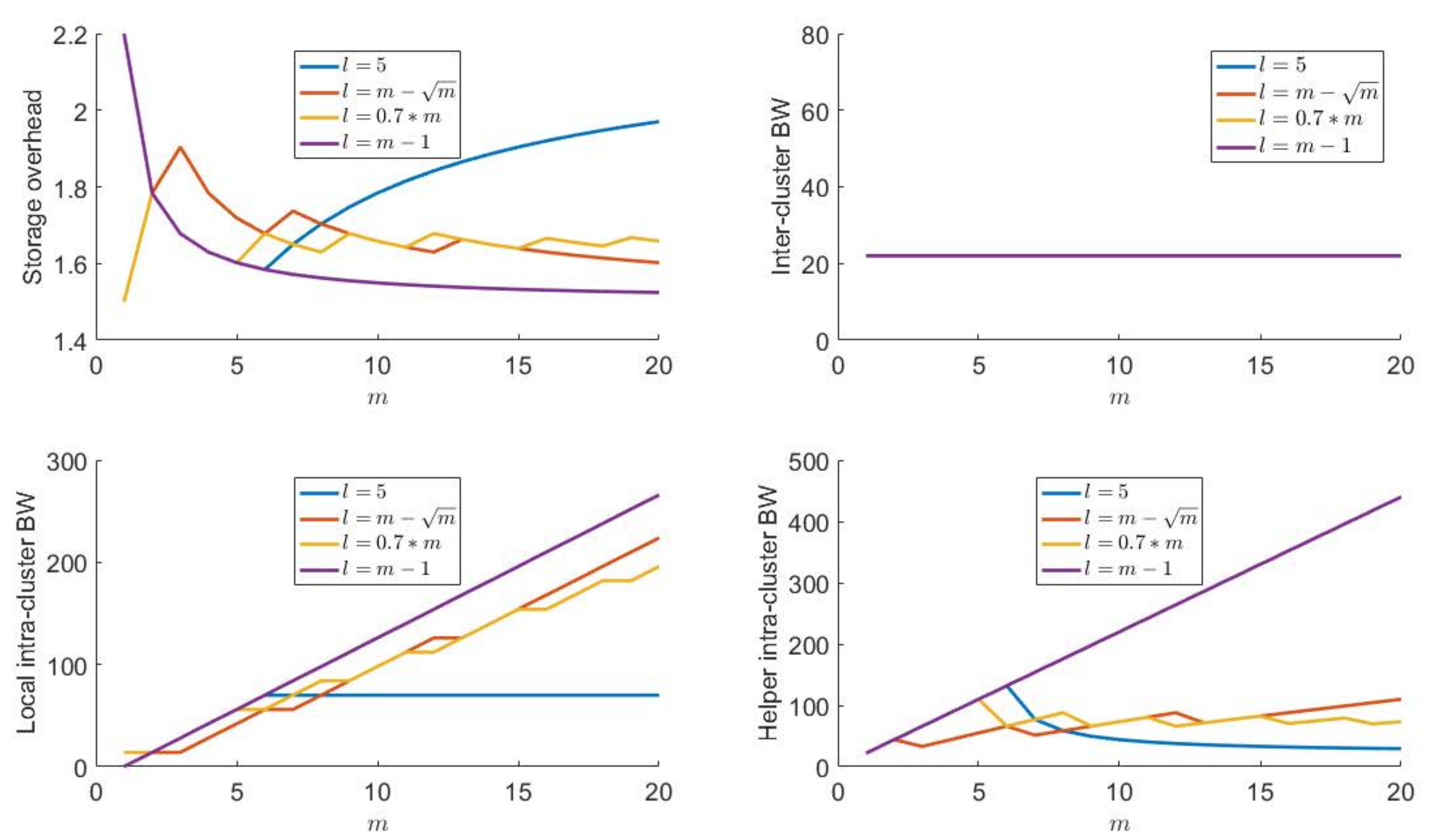}
	\caption{Illustrating the impact of number of local helper nodes on the various performance metrics. We operate at the minimum inter-cluster repair-bandwidth (MBR) point, with parameters $\{(n = 12, k = 8, d = n-1)(\alpha = d \beta, \beta = 2)\}$. Storage-overhead is $\frac{mn\alpha}{B^*}$, where $B^*$ is calculated using \eqref{eq:file_size}. Inter-cluster BW is $d\beta$. Local and helper intra-cluster BWs are respectively calculated using \eqref{eq:gamma} and \eqref{eq:intra_bound_helper2}. We see that while $\ell = m-1$ is ideal in terms of optimizing storage and inter-cluster BW, it imposes the maximum burden on intra-cluster BW.}
	\label{fig:system_plots}
\end{figure}

\subsection{An Example} \label{sec:ex}

\begin{figure}
	\centering
	\includegraphics[width=95mm]{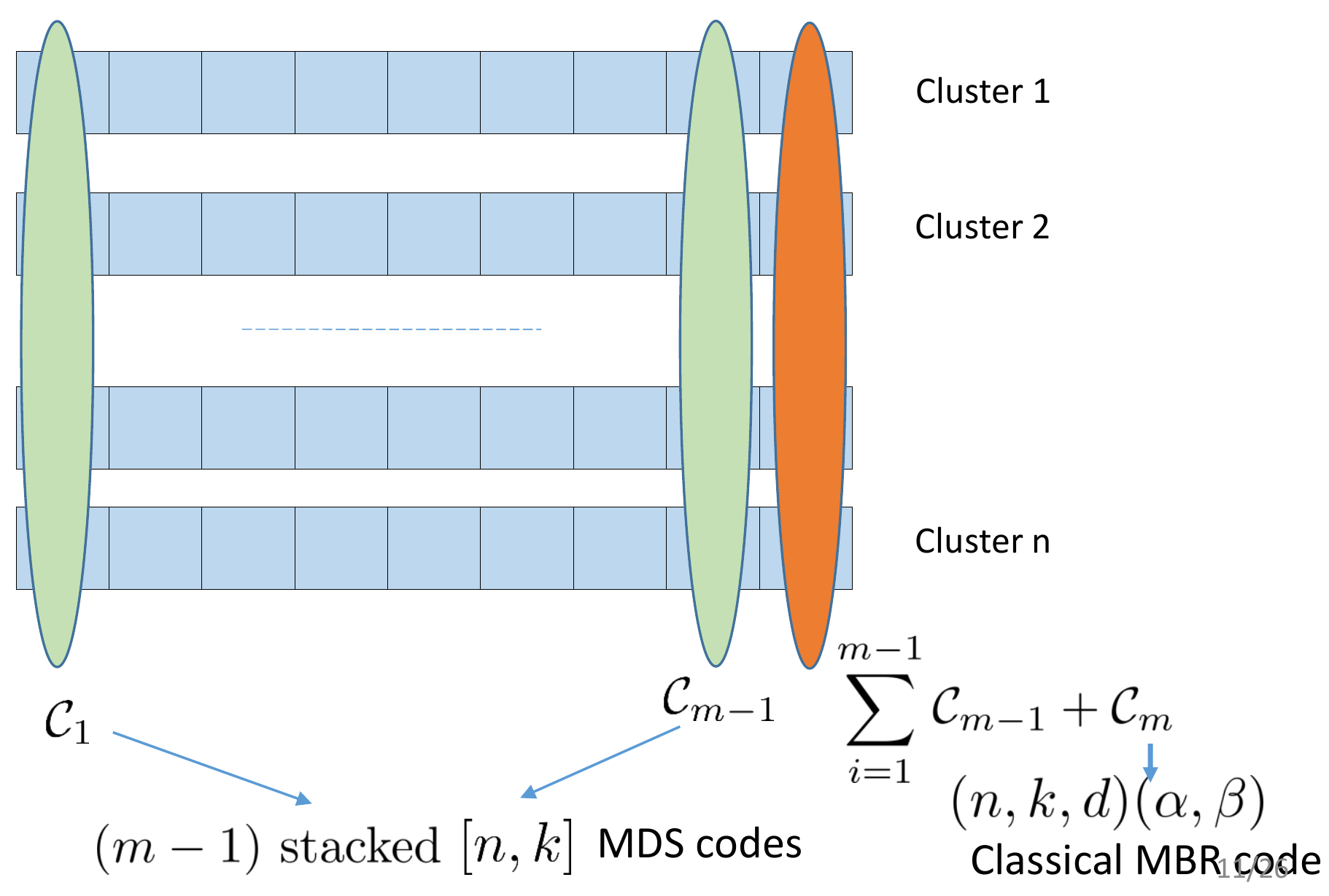}
	\caption{Illustration of an $(n = 4, k = 3, d = 3)(m = 4, \ell = 3)$ generalized regenerating code attaining minimum intra cluster repair-bandwidth.}
	\label{fig:example_code}
\end{figure}

\begin{figure}
	\centering
	\includegraphics[width=85mm]{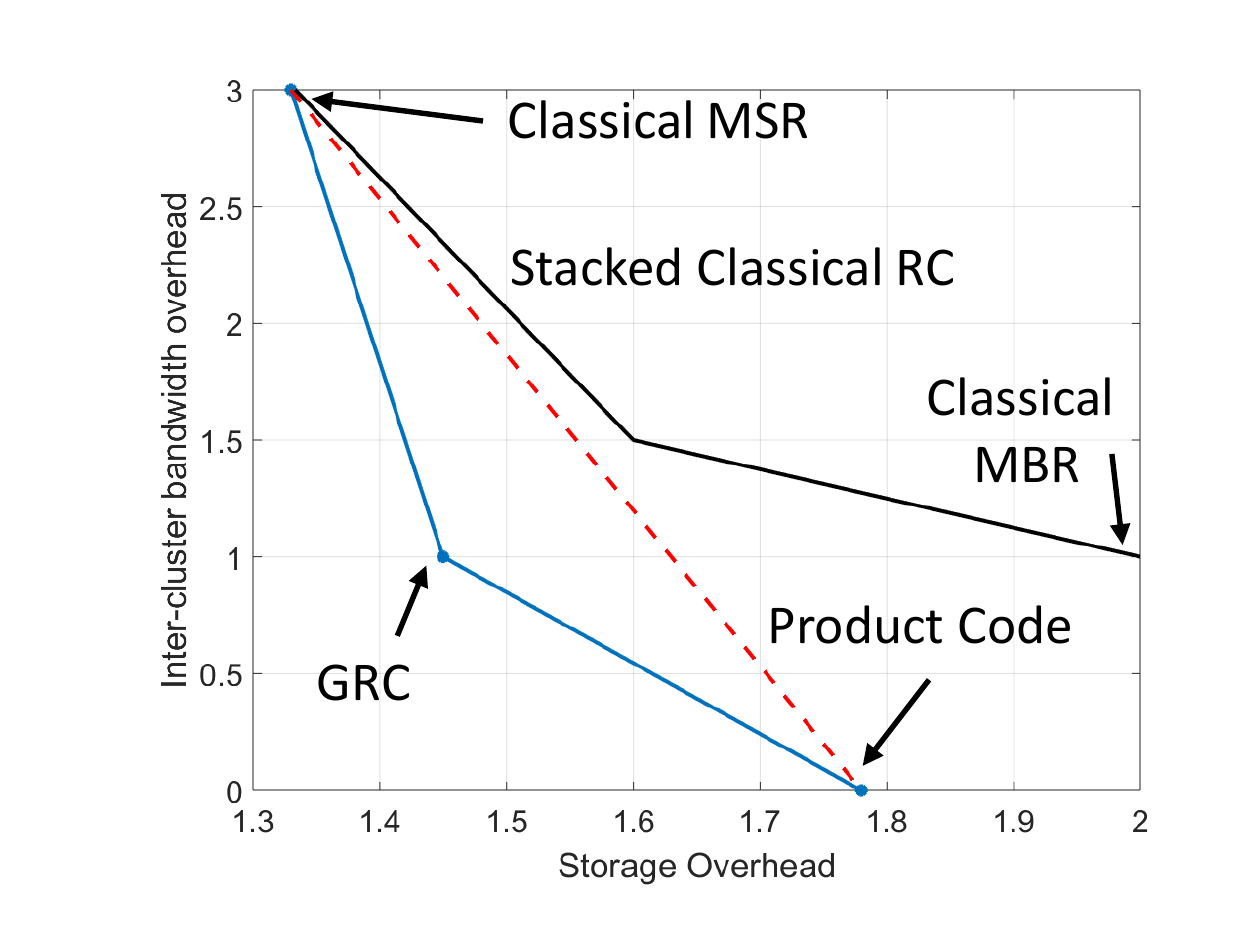}
	\caption{Comparing storage-overhead and inter-cluster repair-bandwidth-overhead of three coding options for an $(n = 4, k = 3)$ clustered storage system, with $m = 4$. The three options are $(i)$ product code comprised of two $[4, 3]$ simple parity check codes, $(ii)$ stacking $4$ $(n = 4, k = 3, d = 3)$ classical regenerating codes, and $(iii)$ an $(n = 4, k = 3, d = 3)$ generalized regenerating code with $\ell = 3$, operating at the MBR point. }
	\label{fig:spaceshare}
\end{figure}

Consider a system consisting of $n = 4$ clusters, with $m = 4$ nodes/cluster, where we have the availability requirement that content from any set of $k = 3$ clusters suffice for data collection. Let us consider three coding options, which permit single node repair: $(i)$ A product code consisting of two $[n = m = 4, k = 3]$  simple parity check codes, one across the clusters, and another within a cluster. In this case, repair happens entirely within a cluster, and there is no inter-cluster bandwidth. Note that this corresponds to the case of $d = 0$ in our framework. $(ii)$ Stacking $m$ classical $(n, k, d = 3)$ RCs - by this we mean that corresponding nodes from the $n$ clusters employ the classical RCs. This corresponds to the case of $\ell = 0$ in our framework. $(iii)$ A $(n, k, d = 3) (m = 4, \ell = 3)$ GRC, which is constructed as follows (see Fig. \ref{fig:example_code}): We stack $m - 1 = 3$  $[n = 4, k = 3]$ simple parity check codes (over the vector alphabet $\mathbb{F}_q^{\alpha}$) to populate coded data in all but the last nodes of all the clusters. In the last column (corresponding to the last node of all clusters), we place the sum of the three previous codes plus a classical $(n = 4, k = 3, d= 3) (\alpha = d\beta, \beta = 1)$ MBR code. Constructions of classical MBR codes appear in \cite{prod_matrix}. Data collection property of the code is straightforward. For repair of any node, the last node of each of the remote helper clusters (with the help of the remaining $3$ nodes) first extracts the MBR code, and passes the helper data for the MBR code to the replacement node. The latter regenerates the MBR code content first, and uses the helper data from the $3$ local nodes to finally recover the original stored content.  The storage-overhead vs inter-cluster-repair-bandwidth trade-off achieved by these three options is shown in Fig. \ref{fig:spaceshare}. We see that the framework of GRCs introduced here, offers operating points which are strictly better than those that can be achieved by space sharing the first two options.

We note that in the above example, the product-code solution carries the extra advantage of handling node-availability during data collection. In other words, when using the product code, one can chose to download content from any $3$ nodes from any of the $k$ clusters. This is sometimes beneficial when the $4^{th}$ node is temporarily unavailable. No such feature is present in the GRC. In the current paper, since we deal with the case of recovery from single node failure, enforcing availability (local dependence within a cluster) essentially implies restricting oneself to using product codes, and nothing else. A key insight that we wish to convey from Fig. 6. is the fact that while dealing with single node repair, it is potentially beneficial to sacrifice availability (of nodes during data collection, offered by product codes) in order to achieve operating points that are strictly better than those obtained by space sharing the two schemes. 

The rest of the document is organized as follows. In Section \ref{sec:special}, we present simple proof of the file-size bound in \eqref{eq:file_size} for the  special cases $d = $ and $\ell =0$. As mentioned before, the general case of the file-size bound is based on the notion of information-flow graph (IFG) models; these IFG models are developed in \ref{sec:IFG}, followed by the derivation of the file-size bound under functional repair for general set of parameters in Section \ref{sec:file_size_bound}.  The exact-repair and functional-repair code constructions appear in Section \ref{sec:code}. Bounds on parameters relating to intra-cluster bandwidth under functional repair, are discussed in Section \ref{sec:intra_cluster_bandwidth}. Section \ref{sec:security} considers security under passive eaves-dropping. Finally, our conclusions and directions for future work appear in Section \ref{sec:con}.

\section{File size bound for Special Cases} \label{sec:special}

In this section, we consider certain special cases of the setting of generalized regenerating codes, and identify the corresponding storage vs inter-cluster-repair-bandwidth trade-offs. The following cases are considered $1)$ $d = 0$, which corresponds to the case when repair of a node is carried out entirely with the help of other local helper nodes in the cluster. $2)$ $\ell = 0$, which corresponds to the case when repair of a node is carried out solely with the help of remote clusters, without taking any help from local nodes in the host-cluster. 

\subsection{Case $d = 0$ : No Inter-Cluster Help for Repair}

When $d = 0$, node repair is accomplished by contacting any set of $\ell$ other nodes in the host cluster. Since any set of $k$ clusters should be sufficient to decode the whole file, it follows that the file-size $B$ is upper bounded by
\begin{eqnarray}
	B & \leq & \ell k \alpha.
\end{eqnarray}
The achievability of the above bound follows by using an $[n, k] \times [m, \ell]$ product code, with both component codes being MDS codes over $\mathbb{F}_q^{\alpha}$. In fact, the parameter $\alpha$ is redundant for this case; one may choose $\alpha = 1$ in practice. It is clear that there is no storage vs inter-cluster-repair-bandwidth trade-off offered by this case.

\subsection{Case $\ell = 0$: No local helper data} \label{sec:ell0}

We next consider the special case when repair is performed without any help from nodes in the host-cluster. Let $\mathcal{C}_m$ denote a GRC of file-size $B$ with parameters $\{(n, k, d)(\alpha, \beta)(m, \ell = 0)\}$. Let $B^*_m$ denote that maximum possible file-size for any GRC having parameters $\{(n, k, d)(\alpha, \beta)(m, \ell = 0)\}$.  We note that the code $\mathcal{C}_1$ denotes a classical RC, and under functional-repair, we know that~\cite{dimakis}
\begin{eqnarray} \label{eq:dimakis_opt}
	B_1^* & = & \sum_{i=0}^{k-1} \min \{\alpha, (d-i)\beta \}.
\end{eqnarray}

\begin{thm} \label{thm:no_side_info}
	The optimal file size under the setting of functional-repair GRCs, for the case of no local helper nodes, is given by
	\begin{eqnarray}
		B^*_m & = & mB_1^* \ = \ m\sum_{i=0}^{k-1} \min \{\alpha, (d-i)\beta \}.
	\end{eqnarray}
\end{thm}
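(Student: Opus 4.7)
Plan: The proof splits into achievability and converse.

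\textbf{Achievability ($B^*_m \geq m B_1^*$).} I would stack $m$ independent optimal functional-repair classical regenerating codes of parameters $(n,k,d)(\alpha,\beta)$, each carrying $B_1^*$ symbols. Assign the $j$-th classical code to slot $j$ of every cluster: node $j$ of cluster $c$ stores whatever node $c$ of the $j$-th classical code would store. Since $\ell=0$, repair uses no intra-cluster help anyway; to repair node $j$ of cluster $c$, simply invoke the $j$-th classical code's repair procedure, pulling $\beta$ symbols from the $j$-th node of $d$ helper clusters, which fits within the inter-cluster budget $d\beta$. Data collection from any $k$ clusters hands over the $j$-th node of each, enough to decode each of the $m$ classical codes and recover the full $m B_1^*$-symbol file.

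\textbf{Converse ($B^*_m \leq m B_1^*$).} I would exhibit an adversarial schedule of failures and repairs whose information-flow graph has min-cut at most $m B_1^*$; max-flow-min-cut and \cite{KoetterMedard} then force $B \leq m B_1^*$. Fix the $k$ clusters $C_1,\ldots,C_k$ that the data collector will query. Proceed in $k$ stages; in stage $i$ sequentially fail and repair each of the $m$ nodes of $C_i$, where every single-node repair uses $\min(i-1,d)$ helper clusters drawn from $\{C_1,\ldots,C_{i-1}\}$ in their post-repair state, and the remaining $(d-i+1)^+$ helper clusters drawn from $\{C_{i+1},\ldots,C_n\}$ in their initial state. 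The schedule is feasible whenever $n \geq d+1$, which we may assume.

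Let $\mathbf{X}_i$ denote the $m\alpha$ symbols stored in $C_i$ after stage $i$. Every helper contribution from $\{C_1,\ldots,C_{i-1}\}$ is deterministically a function of $\mathbf{X}_1,\ldots,\mathbf{X}_{i-1}$, so the only fresh randomness entering $C_i$ in stage $i$ is the $(d-i+1)^+ \cdot m\beta$ helper symbols collectively supplied over the $m$ repairs by the remaining helper clusters. Therefore
\begin{equation*}
H(\mathbf{X}_i \mid \mathbf{X}_1,\ldots,\mathbf{X}_{i-1}) \;\leq\; \min\bigl\{m\alpha,\; m(d-i+1)^+\beta\bigr\} \;=\; m\min\bigl\{\alpha,(d-i+1)^+\beta\bigr\},
\end{equation*}
and the chain rule gives $B \leq \sum_{i=1}^{k} H(\mathbf{X}_i \mid \mathbf{X}_1,\ldots,\mathbf{X}_{i-1}) \leq m \sum_{i=0}^{k-1} \min\{\alpha,(d-i)^+\beta\} = m B_1^*$.

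The main subtlety I expect is ensuring correctness when outside-DC clusters are scarce (that is, $n < k+d$), so that some helpers in stage $i$ must be drawn from $\{C_{i+1},\ldots,C_k\}$. Their initial state is still independent of the conditioning variables $\mathbf{X}_1,\ldots,\mathbf{X}_{i-1}$ in the IFG; the $m\beta$ symbols they contribute are absorbed into the same $(d-i+1)^+ m\beta$ count, and the coarser $m\alpha$ cap in the min keeps the bound tight regardless. Apart from this bookkeeping, the argument is a clean adaptation of the classical Dimakis et al. cutset analysis carried out at cluster rather than node granularity, with the factor $m$ appearing because each stage processes $m$ nodes simultaneously.
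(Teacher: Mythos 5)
Your achievability argument coincides with the paper's: stack $m$ independent optimal classical regenerating codes, one per node slot, so data collection and repair decouple across the $m$ copies. Your converse, however, takes a genuinely different route. The paper's converse is a three-line reduction: it fuses the $m$ nodes of each cluster into a single super-node of capacity $m\alpha$, observes that repairing the super-node (by repairing its $m$ constituent nodes one at a time from the same $d$ helper clusters) downloads at most $\widehat{\beta}\leq m\beta$ per helper, and then simply quotes the classical-RC file-size bound applied to the resulting classical code $\widehat{\mathcal{C}}_1$. You instead fix an explicit adversarial failure/repair schedule and run a chain-rule entropy argument directly on the IFG, establishing $H(\mathbf{X}_i \mid \mathbf{X}_1,\ldots,\mathbf{X}_{i-1}) \leq m\min\{\alpha,(d-i+1)^+\beta\}$ for each queried cluster. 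Both are correct and arrive at the same bound; the paper's reduction is slicker and off-loads the work to a cited result, while yours is self-contained and is essentially the $\ell=0$ specialization of the cut-set argument the paper later uses for the general file-size bound of \Cref{thm:file_size}. One cosmetic point: the remark that the outside helpers' initial state is ``independent of the conditioning variables $\mathbf{X}_1,\ldots,\mathbf{X}_{i-1}$'' is not true (all cluster contents derive from the same source file) — but it is also never used. The step you actually need relies only on the cardinality bound for the fresh outside-helper symbols together with the fact that helper data from $\{C_1,\ldots,C_{i-1}\}$ is a deterministic function of the conditioning, so the argument goes through regardless of any independence claim.
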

\begin{proof}
	The achievability part of the proof is straightforward; the optimal code is constructed by simply stacking $m$ classical codes $\mathcal{C}_1$ each of which achieves the bound in \eqref{eq:dimakis_opt}. By stacking, we mean that the code $\mathcal{C}_1$ is deployed across the corresponding nodes from all $n$ clusters. In this case, note that during node-repair, there is no pooling of content from various nodes of a remote helper cluster; repair happens as though there is only one code $\mathcal{C}_1$ in the system.
	
	For showing the upper bound on the file size, we note that given a code $\mathcal{C}_m$ with file-size $B$ having parameters $\{(n, k, d)(\alpha, \beta)(m, \ell = 0)\}$, one can construct a functional-repair classical regenerating code $\widehat{\mathcal{C}}_1$, also with file-size $B$, and having parameters $\{(n, k, d), (m\alpha, \widehat{\beta})\}$, where $\widehat{\beta} \leq m\beta$. For this, we simply assume the contents of all $m$ nodes of any cluster $i$ of $\mathcal{C}_m$, to be the contents of node $i$ of  $\widehat{\mathcal{C}}_1, 1 \leq i \leq n$. Clearly $\widehat{\mathcal{C}}_1$ retains the data collection property. For node repair in $\widehat{\mathcal{C}}_1$, we perform individual repairs of each of the $m$ nodes, but with the same set of $d$ remote helper clusters. In this case, we know that
	\begin{eqnarray}
		B & \leq & \sum_{i=0}^{k-1} \min \{m\alpha, (d-i)\widehat{\beta} \} \\
		& \leq & \sum_{i=0}^{k-1} \min \{m\alpha, (d-i)m{\beta} \} \\
		& = & \ m\sum_{i=0}^{k-1} \min \{\alpha, (d-i)\beta \}.
	\end{eqnarray}
\end{proof}

\vspace{0.1in}

Theorem \ref{thm:no_side_info} implies that, under functional repair, the normalized trade-off between storage-overhead $nm\alpha/B$ and inter-cluster-repair-bandwidth-overhead $d\beta/\alpha$ for an $\{(n, k, d)(\alpha, \beta)(m, \ell = 0)\}$ GRC is identical for any $m$; specifically, it is identical to the trade-off of an $\{(n, k, d)(\alpha, \beta)\}$ functional-repair classical RC.

\section{Information Flow Graph Model} \label{sec:IFG}

In this section, we describe the information flow graph (IFG) models used to derive the various bounds in this work. The models are generalizations of the one used in \cite{dimakis} for the case of classical regenerating codes. Under functional repair, the problem is one of multicasting the source file to an arbitrary number of data collectors over the IFG. The IFG characterizes the data flows from the source to a data collector, and also reflects the sequence of failures and repairs in the storage system. Two models of IFGs will be used; the first one will be used in two  scenarios: $1)$ to derive the trade-off between storage-overhead and inter-cluster repair-bandwidth overhead. While obtaining this trade-off, we ignore the effects of intra-cluster bandwidth, $2)$ to find the optimal local helper node intra-cluster bandwidth $\gamma$, which is needed to establish the optimal trade-off between storage-overhead and inter-cluster repair-bandwidth overhead. We wish to note that while obtaining the bound on $\gamma$, we do not impose any limitations on $\gamma', \ell'$, i.e, we assume that $\gamma' = \alpha$ and $\ell' = m$. A second related  model will be used while deriving the lower bounds on the parameters $\ell', \gamma'$, which relate to the intra-cluster repair bandwidth needed in the remote helper clusters. In this second model, we shall assume that $\gamma = \alpha$, i.e., we ignore the effects of limited local helper-node intra-cluster bandwidth, while calculating bounds on remote helper-node intra-cluster bandwidth. We describe the two models next.

\subsection{IFG Model for Storage vs Inter-Cluster Repair Bandwidth Trade-off} \label{sec:IFG_inter}

\begin{figure*}
	\centering
	\includegraphics[height=2.5in]{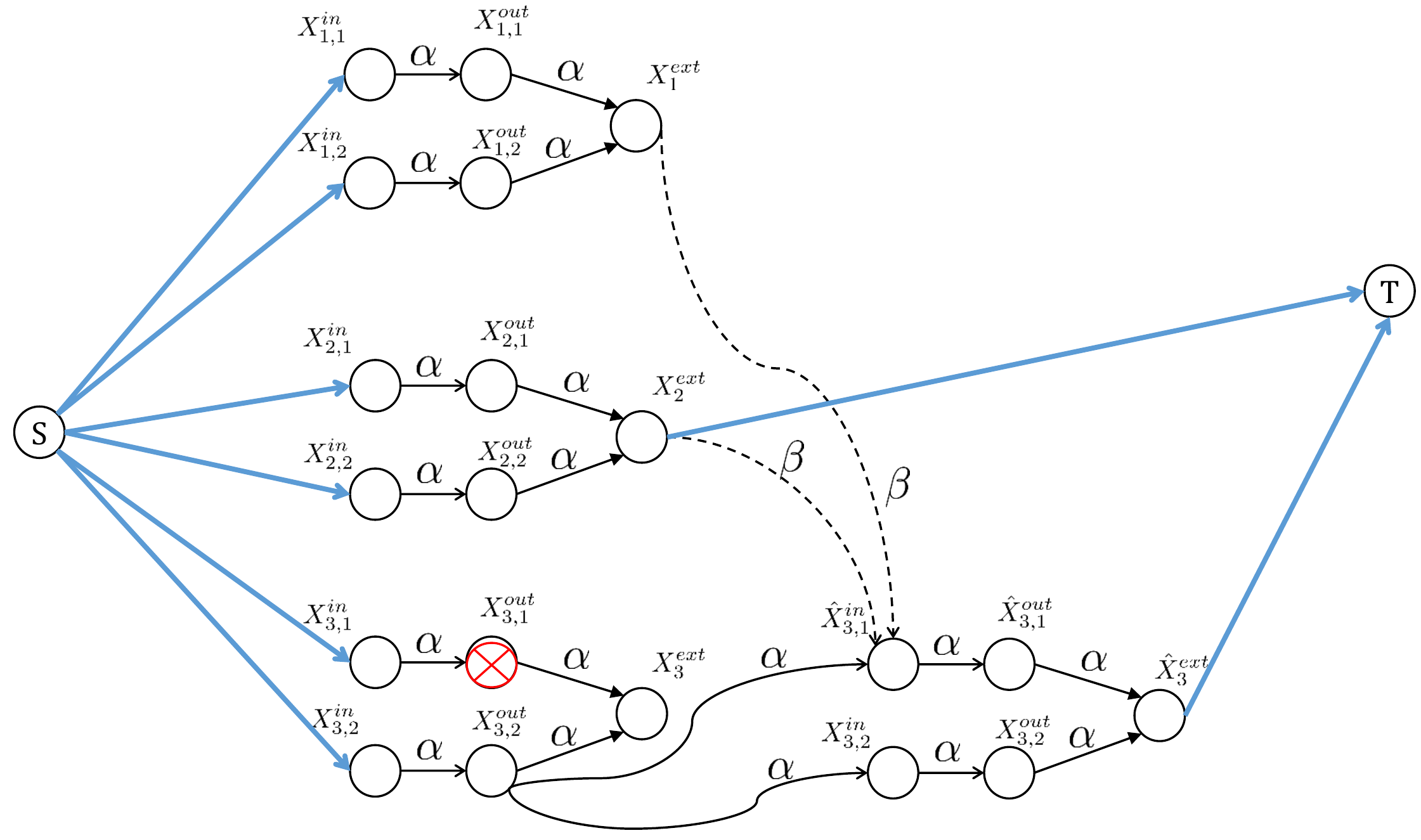}
	\caption{An example of the IFG model representing the notion of generalized regenerating codes, when intra-cluster bandwidth is ignored. In this figure, we assume $(n = 3, k = 2, d= 2) (m = 2, \ell = 1)$. The model is a generalization of one the used in \cite{dimakis} for the setting of classical regenerating codes.}
	\label{fig:IFG_inter}
\end{figure*}

Let $X_{i, j}$ denote the physical node $j$ in cluster $i$, $ 1 \leq i \leq n, 1 \leq j \leq m$. Recall that capacity of any node is $\alpha$. In the IFG, the physical node is represented by the pair of nodes $X_{i,j}^{in}$ and $X_{i,j}^{out}$, with an edge of capacity $\alpha$ going from $X_{i,j}^{in}$ to $X_{i,j}^{out}$. The nodes $X_{i,j}^{in}$ and $X_{i,j}^{out}$ will be respectively referred to as the in-node and out-node corresponding to the physical node $X_{i,j}$.  We will write $(X_{i,j}^{in} \to X_{i,j}^{out})$ to denote that there is an edge from going from $X_{i,j}^{in}$ to $X_{i,j}^{out}$.  With a slight abuse of notation, we will let $X_{i,j}$ to also denote the pair $(X_{i,j}^{in}, X_{i,j}^{out})$ of the graph nodes. Cluster $i$  also has an additional external node, denoted as $X_i^{ext}$. Each out-node $X_{i, j}^{out}, 1  \leq j \leq m$ is connected to $X_i^{ext}$ via an edge of capacity $\alpha$. The external node $X_i^{ext}$ is used to transfer data outside the cluster, and thus serves two purposes: $1)$ it represents a single point of contact to the cluster, for a data collector which connects to this cluster, and $2)$ it represents the compute unit which generates the $\beta$ symbols for repair of any node in a different cluster.

The source node $S$ represents the original placement of the encoded source file into the $nm$ storage nodes. $S$ connects to the in-nodes of all physical storage nodes in their original state $(S\to X_{i,j}^{in}), \forall i\in [n], \forall j\in [m]$, via links of infinite capacity. The sink node $T$ represents a data collector, it connects to the external nodes of an arbitrary subset of $k$ clusters $(X_{i}^{ext}\to T)$ also via links of infinite capacity.

Each cluster at any moment has $m$ \emph{active} nodes. When a physical node $X_{i,j}$ fails, it becomes \emph{inactive}, and its replacement node, say  $\widehat{X}_{i,j}$, becomes active instead (see Fig. \ref{fig:IFG_inter} for an illustration). The replacement node $\widehat{X}_{i,j}$ is regenerated by downloading $\beta$ symbols from any $d$ nodes in the set $\{X_{i'}^{ext}, 1 \leq i' \leq n, i' \neq i\}$. The replacement node also connects to any subset of $\ell$ nodes in the set $\{X_{i, j'}^{out}, 1 \leq j' \leq m, j' \neq j \}$. The capacity of the links $\{(X_{i, j'}^{out} \to \widehat{X}_{i,j}^{in}), 1 \leq j' \leq m, j' \neq j \}$ depend on whether we use the model for finding the inter-cluster-bandwidth vs storage trade-off, or we use it for finding bounds on local helper bandwidth $\gamma$. These links have capacity $\alpha$ and $\gamma$ in the former and latter cases, respectively.

In our model, recall that we focus on one repair at a time. In this scenario, along with the replacement of $X_{i,j}$ with $\widehat{X}_{i,j}$, we will also \textit{copy} all the remaining $m-1$ nodes, as they are, in the cluster $i$, and represent them with new identical pair of nodes  $(X_{i,j'}^{in}, X_{i,j'}^{out}), 1 \leq j' \leq m, j' \neq j$. We shall also a have a new external node for the cluster, which connects to the new $m$ out-nodes. Thus, in the IFG modeling, we say that the entire old cluster (where the failed node resides) becomes \emph{inactive}, and gets replaced by a new \emph{active} cluster. For either data collection or repair, we connect to external nodes of the active clusters. Note that, at any point in time, a physical cluster contains only one active cluster in the IFG, and $f_i$ inactive clusters in the IFG, where $f_i \geq 0$ denotes the total number of failures and repairs experienced by the various nodes in the cluster. We shall use the notation $\mathcal{X}_i(t), 0 \leq t \leq f_i$ to denote the cluster that appears in IFG after the $t^{\text{th}}$ repair associated with cluster $i$. The clusters  $\mathcal{X}_i(0), \ldots,  \mathcal{X}_i(f_i-1)$ are inactive, while  $\mathcal{X}_i(f_i)$ is active, after $f_i$ repairs.
The nodes of  $\mathcal{X}_i(t)$ will be denoted by $X_{i,j}^{in}(t), X_{i,j}^{out}(t), {X}_{i}^{ext}(t), 1 \leq j \leq m$. With a slight abuse of notation, we will let $\mathcal{X}_i(t)$ to also denote the collection of all $2m + 1$ nodes in this cluster.  We write $X_{i, j}(t)$ to denote the pair $(X_{i,j}^{in}(t), X_{i,j}^{out}(t))$; again, with a slight abuse of notation,  we shall use $X_{i, j}(t)$ to also denote the node $j$ in  cluster $i$ after the $t^{\text{th}}$ repair (in cluster $i$). We further use notation $Fam(i)$ to denote the union of all nodes in all inactive clusters, and the active cluster, corresponding to the physical cluster $i$ after $t$ repairs in cluster $i$, i.e., $Fam(i) = \cup_{t = 0}^{f_i}\mathcal{X}_i(t)$. We have avoided indexing $Fam(i)$ with the parameter $t$ as well, to keep the notation simple. The value of $t$ in our usage of the notation
$Fam(i)$ will be clear from the context.

\begin{figure*}
	\centering
	\includegraphics[height=2.5in]{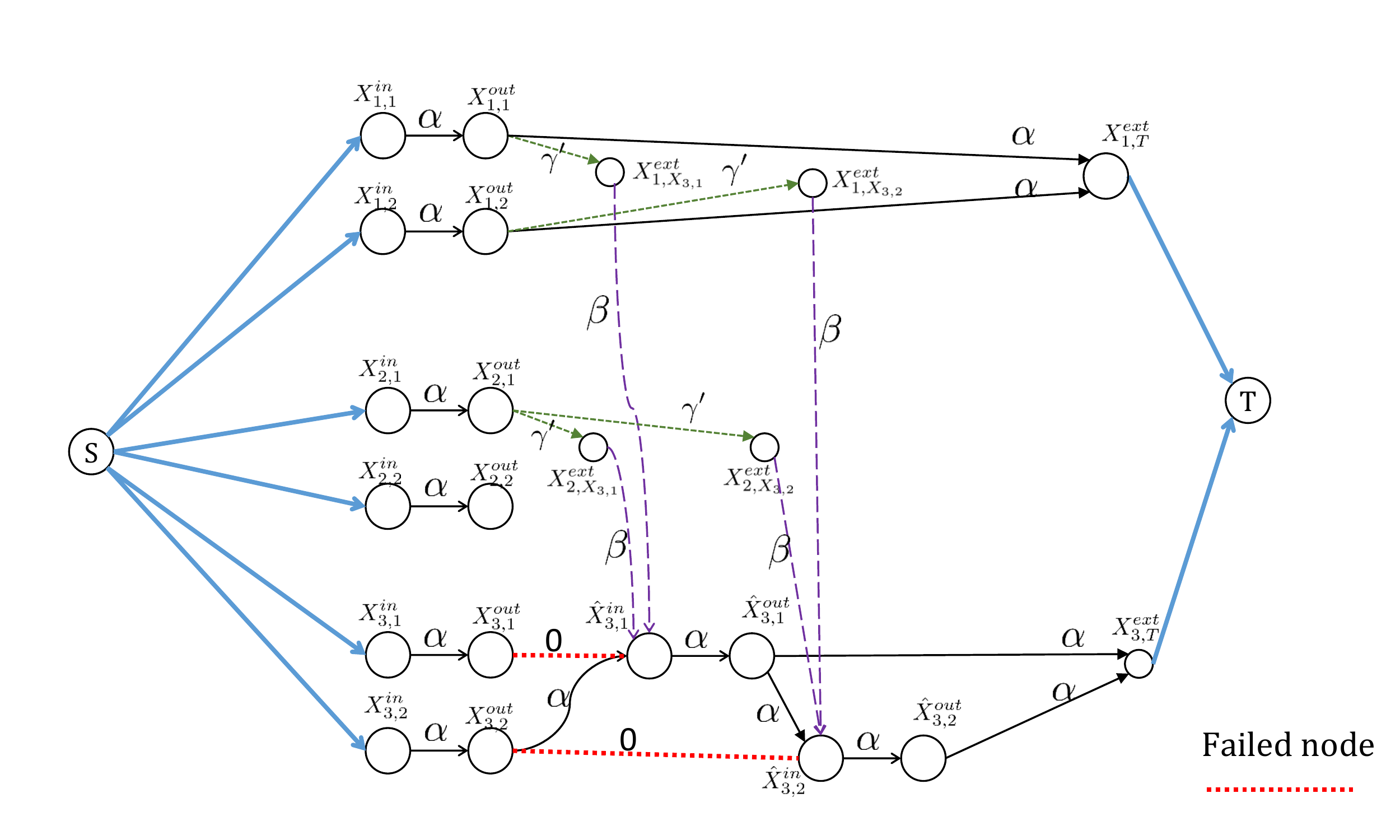}
	\caption{An example of the second IFG model, with limited intra-cluster bandwidth in the remote helper clusters. The model assumes that $\gamma = \alpha$. In this example, we assume $(n = 3, k = 2, d= 2) (m = 2, \ell = 1) (\ell'=1)$.}
	\label{fig:IFG_inter_2}
\end{figure*}

\subsection{IFG Model for Finding Bounds on Intra-Cluster Repair Bandwidth} \label{sec:IFG_intra}
We now describe the model used to obtain lower bounds on the parameters $\gamma', \ell'$. Unlike in the case of bounds for file-size $B$  and local helper node intra-cluster bandwidth $\gamma$, where we also show converses, for $\gamma'$ and $\ell'$ we do not provide converses to the lower bounds. When only dealing with lower bounds, we can significantly simplify the model described above. In addition to making these simplifications, we also add some structure to the model to enable usage of $\gamma'$ and $\ell'$. We describe these changes next.

In the second model, each physical node $X_{i,j}$ is again represented by the pair of nodes $(X_{i,j}^{in}, X_{i,j}^{out})$, such that the edge $(X_{i,j}^{in} \to X_{i,j}^{out})$ has capacity $\alpha$. In this model, an external node(s) is(are) added dynamically to a cluster whenever it aids in either data collection or repair of another node. Each instance of an external node is used exactly once, either for a repair or a data collection operation.  Whenever a physical node $X_{i,j}$ fails, we say that it becomes \emph{inactive}, and its replacement node, say  $\widehat{X}_{i,j} = (\widehat{X}_{i,j}^{in}, \widehat{X}_{i,j}^{out})$, becomes \emph{active} in the same cluster. The remaining $m-1$ nodes are not replicated, as in the previous model.
Thus, in the second model, there is only a single graph cluster corresponding to a physical cluster. On the graph representation, the replacement node is visually linked to the replaced failed node with a red dotted line (see Fig. \ref{fig:IFG_inter_2} for an example).

We explain the repair and data collection operation in the IFG in more detail next. During repair, the replacement node $\widehat{X}_{i,j}$ connects to any subset of $\ell$ active nodes in the same cluster via links of capacity $\alpha$. It also downloads $\beta$ symbols each from $d$ remote helper clusters via their external nodes. If replacement node $\widehat{X}_{i,j}$ downloads helper data from cluster $i'$, an external node, $X_{i',X_{i,j}}^{ext}$, is added to the IFG, such that the edge $(X_{i',X_{i,j}}^{ext} \to \widehat{X}_{i,j}^{in})$ has capacity $\beta$.  External node $X_{i',X_{i,j}}^{ext}$ also connects locally to a subset of $\ell'$ active out-nodes of cluster $i'$ via links of capacity $\gamma'$. Note how we index the external node of cluster $i'$ that aids in the repair of ${X}_{i,j}$. Every time cluster $i'$ acts as a remote helper cluster toward the repair of any node, we add a new external node in a manner similar to $X_{i',X_{i,j}}^{ext}$.
Finally, a data collector, $T$, connects to cluster $i$ via the external node $X_{i, T}^{ext}$, which  in turn connects to all $m$ active out-nodes in the cluster via links of capacity $\alpha$. We index the external node also with $T$ since the $m$ active nodes that form part of the cluster evolves over time.

In comparison with previous model, we do not time-index the sequence of failures in the current model. This is because, in our proof of bounds for $\gamma'$ and $\ell'$, we only consider system evolutions in which each node fails at most once. In this case, we find it convenient simply to  denote the replacement node of $X_{i, j}$ as $\widehat{X}_{i,j}$.

\section{File size bound for General Parameters} 
\label{sec:file_size_bound}

In this section, we derive the file-size bound in \eqref{eq:file_size} under the setting of functional repair, for arbitrary set of code parameters. We further use this bound to characterize the storage-overhead vs inter-cluster-repair-bandwidth-overhead trade-off. Intra-cluster bandwidth is ignored in this section. Thus, for the repair of any node, the entire content of $\ell$ local helper nodes can be used; similarly,  the entire content ($m\alpha$ symbols) of each remote helper cluster is used to generate its $\beta$ helper symbols.

\vspace{0.1in}

\begin{thm} \label{thm:file_size}
	The file size $B$ of a functional repair generalized regenerating code having parameters $\{(n, k, d)$ $(\alpha, \beta)$ $(m, \ell)\}$ is upper bounded by
	\begin{eqnarray}
		B & \leq & B^*  \ = \ \ell k\alpha + (m - \ell)\sum_{i=0}^{k-1} \min \{\alpha, (d-i)^+\beta \}. \label{eq:file_size_1}
	\end{eqnarray}
	Further, if there is a known upper bound on the number of repairs that occur for the duration of operation of the system, the above bound is sharp, i.e., $B^*$ gives the functional repair storage capacity of the  system.
\end{thm}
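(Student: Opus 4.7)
The plan is to extend the cut-set argument of Dimakis et al.~\cite{dimakis} from classical regenerating codes to the IFG model of Section~\ref{sec:IFG_inter}. Under functional repair, the multicast min-cut theorem (Koetter--Medard~\cite{KoetterMedard}) equates the capacity with the minimum, over admissible IFG evolutions and over data collectors $T$, of the min-cut from the source $S$ to $T$. Since a known upper bound on the number of repairs makes the collection of admissible IFGs finite, achievability at the identified min-cut value will follow from RLNC over a sufficiently large field. I would therefore split the proof into two parts: exhibit one admissible IFG carrying an $S$--$T$ cut of value $B^*$, giving $B \leq B^*$, and then verify that every admissible IFG has min-cut at least $B^*$, giving sharpness under bounded repair.

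For the upper bound, fix an ordering $C_0,\ldots,C_{k-1}$ of the $k$ clusters queried by $T$, and within each $C_i$ designate a set $L_i$ of $\ell$ ``local-helper'' slots and let $R_i$ denote the remaining $m-\ell$ slots. Build the failure/repair schedule cluster-by-cluster: for each $C_i$ in order, first repair the slots in $L_i$ arbitrarily, and then repair each slot in $R_i$ using the current $L_i$ as its $\ell$ local helpers together with $\min(i,d)$ of the already-processed clusters $C_0,\ldots,C_{i-1}$ (supplemented, if $i<d$, by additional clusters) as its $d$ remote helpers. Define the cut $(U,\bar U)$ by placing in $U$ the source $S$, every node of the $n-k$ unqueried clusters, and, within the latest state of each $C_i$, the in-nodes $X^{\mathrm{in}}$ of the slots in $L_i$; additionally move an $R_i$-slot's in-node into $U$ whenever $(d-i)^+\beta \geq \alpha$. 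Everything else, including $T$, belongs to $\bar U$.

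Tallying the edges crossing from $U$ to $\bar U$ within each $C_i$ yields $\ell\alpha$ from the $L_i$ intra-node edges $X^{\mathrm{in}} \to X^{\mathrm{out}}$, plus $\min\{\alpha,(d-i)^+\beta\}$ per $R_i$ slot: the local-helper edges into an $R_i$ node originate at $L_i$-out-nodes lying in $\bar U$, and the remote-helper edges from the already-processed clusters $C_0,\ldots,C_{i-1}$ originate at external nodes in $\bar U$, so neither contributes; what remains is either the $R_i$ slot's own $X^{\mathrm{in}} \to X^{\mathrm{out}}$ edge of capacity $\alpha$ or its incoming edges from the $d-i$ still-upstream remote external nodes of total capacity $(d-i)^+\beta$, whichever we chose. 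Summing over $i = 0,\ldots,k-1$ then reproduces exactly the right-hand side of \eqref{eq:file_size_1}, so the min-cut of the constructed IFG is at most $B^*$, giving $B \leq B^*$.

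For the sharpness half, I would verify that this per-cluster accounting is in fact uniform over all admissible evolutions: for any IFG and any $(U,\bar U)$ separating $S$ from a $T$ that queries $k$ clusters, the contribution of the $(i+1)$-st processed cluster to the cut is at least $\ell\alpha + (m-\ell)\min\{\alpha,(d-i)^+\beta\}$, irrespective of which $d$ remote and $\ell$ local helpers were actually chosen during that cluster's repairs. This lower-bound direction is where the main difficulty lies, because adversarial within-cluster repair orderings can shuffle the identity of local helpers and recycle contributions across time steps. I expect to handle it by a double induction -- outer on the cluster-ordering index $i$ and inner on the per-cluster repair count -- mirroring the classical Dimakis argument but with an extra layer of bookkeeping to track flow through local helpers. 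Once this uniform lower bound is in hand, Koetter--Medard supplies a linear network code over a sufficiently large finite field achieving rate $B^*$ at every sink in the (finite) folded DAG, and RLNC attains the same bound with high probability.
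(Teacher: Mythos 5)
Your upper-bound half is essentially the paper's argument: the same failure schedule (repair the $m-\ell$ non-helper slots cluster by cluster, using already-processed clusters as remote helpers), the same cut, and the same per-slot choice between the $\alpha$ in-to-out edge and the $(d-i)^+\beta$ incoming remote edges, summing to $B^*$. (Your extra preliminary step of also repairing the $L_i$ slots is harmless but unnecessary; the paper simply never fails those nodes and includes the original in-to-out edges $X_{i,j}^{in}(0)\to X_{i,j}^{out}(0)$, $j\le\ell$, in the cut.)

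The gap is in the sharpness half, which you correctly flag as ``the main difficulty'' but then only sketch. The per-cluster claim you propose to prove -- that the $(i{+}1)$-st processed cluster contributes at least $\ell\alpha+(m-\ell)\min\{\alpha,(d-i)^+\beta\}$ to every cut -- is the \emph{conclusion} of the paper's relaxation, not a lemma you can attack directly, and the ``double induction'' plan does not obviously give a handle on it. The paper's actual argument is not an induction: it takes an arbitrary $S$--$T$ cut, topologically sorts the IFG so that all nodes of each cluster-generation $\mathcal{X}_i(\tau)$ are contiguous, greedily picks $k$ sink-side external nodes $Y_1,\ldots,Y_k$ one from each distinct physical cluster's family, and for each chosen $Y_i=X_i^{ext}(t_i)$ introduces the auxiliary count $a_i$ of out-nodes feeding $Y_i$ that lie on the source side; for the remaining $m-a_i$ node-indices it tracks the \emph{earliest sorted} out-node version on the sink side, and does a case analysis (in-node on source side / in-node on sink side and not a replacement / in-node on sink side and a replacement). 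The crucial observation your sketch is missing is the bound on how many local-helper out-nodes of a replacement can lie on the sink side -- at most $j-1$, because of the topological ordering of the chosen out-node versions -- which is what gives the term $(\ell-j+1)^+\alpha$ in the cut bound $\sum_i\bigl(a_i\alpha+\sum_{j>a_i}\min(\alpha,(\ell-j+1)^+\alpha+(d-i+1)^+\beta)\bigr)$. Only after this is established does maximizing over $0\le a_i\le m$ collapse the expression to $\ell k\alpha+(m-\ell)\sum_{i=0}^{k-1}\min\{\alpha,(d-i)^+\beta\}$. Without that intermediate $a_i$-parametrized bound and the local-helper counting, your ``uniform per-cluster'' claim remains an unverified assertion, and that is exactly the part the theorem needs.
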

\begin{proof}
	The proof technique is similar to the proof of the bound under functional repair for the setting of classical regenerating codes~\cite{dimakis}. The problem of functional repair is one of multicasting, and thus for finding the desired upper bound on the file-size, it is enough if we exhibit a cut in an IFG, for a specific sequence of failures and repairs, which separates the source from the sink, such that the value of the cut is the desired upper bound. We shall then show that, for any valid IFG, independent of the specific sequence of failures and repairs,  $B^*$ is indeed a lower bound on the minimum possible value of any $S-T$ cut. The achievability result, when there is known upper bound on the number of failures and repairs, will then follow from results in network coding~~\cite{KoetterMedard}.
	
	We begin with the proof of the upper bound. We consider a sequence of $k(m-\ell)$ failures and repairs, as follows: Physical nodes $X_{i, \ell + 1}, X_{i,\ell+2}, \ldots X_{i,m}$ fail in this order in cluster $i = 1$, then in cluster $i=2$, and so on, until cluster $i=k$. In the IFG, (see Section \ref{sec:IFG_inter}), this corresponds to the sequence of failures of nodes $X_{1, \ell+1}(0), X_{1, \ell+2}(1), \ldots, X_{1, m}(m - \ell-1), X_{2, \ell+1}(0), \ldots, X_{2, m}(m - \ell-1), \ldots, X_{k, m}(m - \ell-1)$, in the respective order. The replacement node $X_{i,\ell + t}(t)$ for $X_{i,\ell+t}(t-1), 1 \leq t \leq m-\ell$ draws local helper data from $X_{i, 1}(t-1), X_{i,2}(t-1), \cdots, X_{i,\ell}(t-1)$, and  remote helper data from the clusters $\mathcal{X}_1(m-\ell),\cdots,\mathcal{X}_{i-1}(m-\ell)$ and from some set of $d-\min\{i-1, d\} = (d-i+1)^+$ other active clusters in the IFG. An example is shown in Fig. \ref{fig:cut} for a set of system parameters that is same as those used in Fig. \ref{fig:IFG_inter}.

	\begin{figure}
		\centering
		\includegraphics[width=115mm]{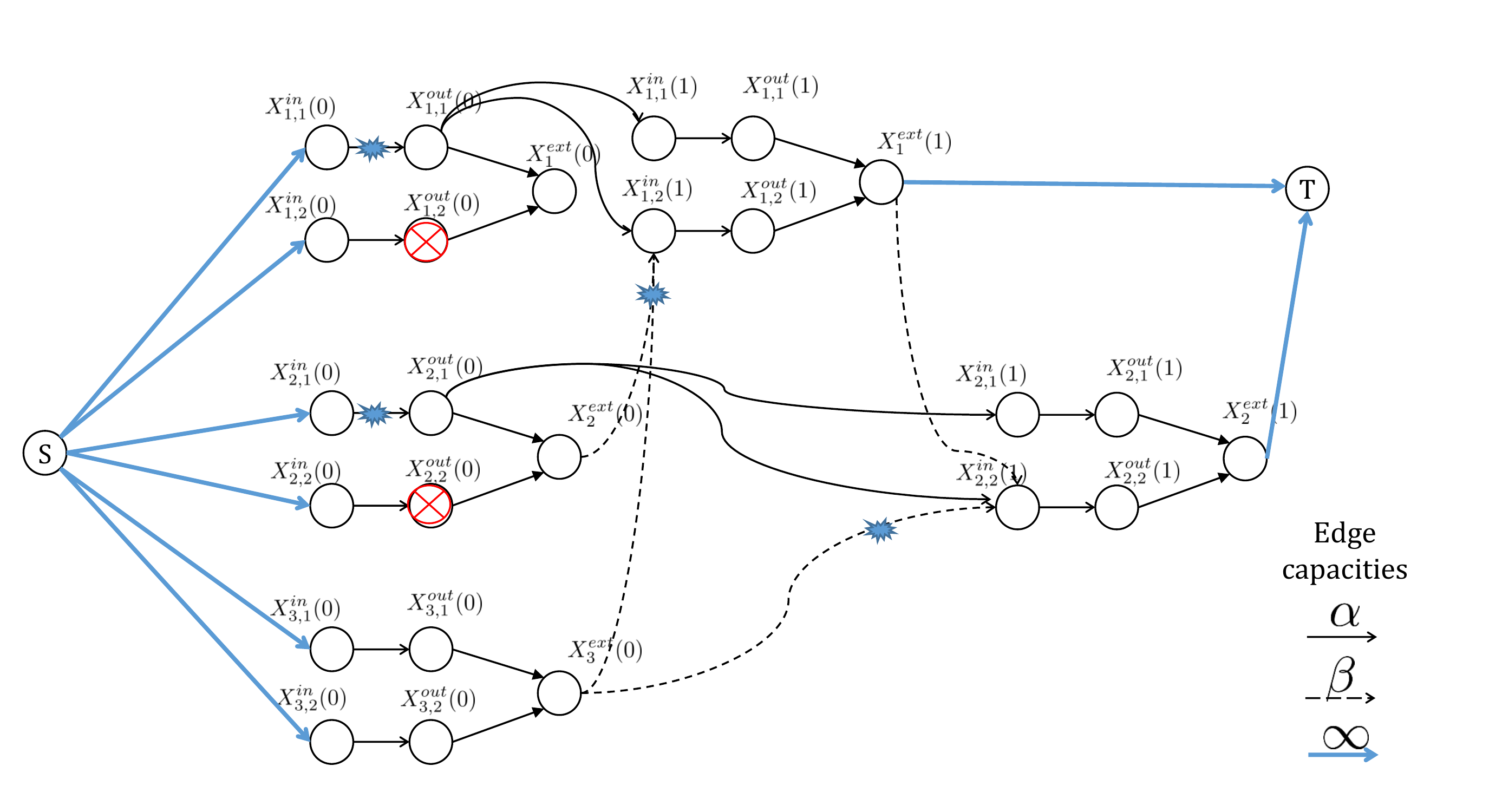}
		\caption{An example of the information flow graph used in cut-set based upper bound for the file-size. In this figure, we assume $(n = 3, k = 2, d= 2) (m = 2, \ell = 1)$. We also indicate a possible choice of $S-T$ cut that results in the desired upper bound.}
		\label{fig:cut}
	\end{figure}
	
	Let data collector $T$ connect to clusters $\mathcal{X}_1(m-\ell), \ldots, \mathcal{X}_k(m-\ell)$. Consider the $S-T$ cut consisting of the following edges of the IFG:
	\begin{itemize}
		\item $\{(X_{i,j}^{in}(0) \to X_{i,j}^{out}(0), i\in [k], j\in [\ell]\}$.  Total capacity of these edges is $kl\alpha$.
		\item For each $i\in [k], t \in [m -\ell]$, either the set of edges $\{(X_{i'}^{ext}(0) \to X_{i,\ell+ t}^{in}(t)), i'\in \{$remote helper cluster indices for the replacement node $X_{i,\ell+ t}^{in}(t) \} \backslash [\min\{i-1, d\}]\}$,  or the edge $(X_{i,\ell + t}^{in}(t) \to  X_{i,\ell+t}^{out}(t))$. Between the two possibilities, we pick the one which has smaller capacity. In this case, the total capacity of this part of the cut is given by $\sum_{i=1}^{k} \sum_{j=\ell+1}^{m} \min \{\alpha, (d-\min\{i-1, d\})\beta \} = (m - \ell)\sum_{i=1}^{k} \min \{\alpha, (d-i+1)^+\beta \}$.
	\end{itemize}
	The value of the cut is given by $kl\alpha +  (m - \ell)\sum_{i=1}^{k} \min \{\alpha, (d-i+1)^+\beta \} = B^*$, which proves our upper bound. In the example on Fig. \ref{fig:cut} for $(n = 3, k = 2, d= 2) (m = 2, \ell = 1)$, first, $m-\ell=1$ node fails in cluster $1$ and downloads helper data from clusters $2,3$, second, a node fails in cluster $2$ and downloads helper data from clusters $1,3$. The data collector connects to clusters $1,2$. A minimal cut for $2\beta\leq \alpha$ is shown on the figure, and has value $2\alpha + 3\beta=B^*$.
	
	We next show that for any valid IFG (independent of the specific sequence of failures and repairs),  $B^*$ is indeed a lower bound on the minimum possible value of any $S-T$ cut. Consider any $S-T$ cut, and let IFG$_S$ and IFG$_T$ denote the two resultant disconnected parts of the IFG corresponding to the nodes $S$ and $T$, respectively.  Since node $T$ connects to $k$ external nodes via links of infinite capacity, we only consider cuts such that  IFG$_T$ has at least $k$ external nodes corresponding to active clusters. Next, we observe that the IFG is a directed acyclic graph, and hence there exists a topological sorting of nodes of the graph such that  an edge exists between two nodes $A$ and $B$ of the IFG only if $A$ appears before $B$ in the sorting. Further, we consider a topological sorting such that
	all in-, out- and external nodes of the cluster $\mathcal{X}_i(\tau)$ appear together in the sorted order, $\forall i,\tau$.

	Now, consider the sequence $\mathcal{E}$ of all the external nodes (which are part of both active and inactive clusters) in IFG$_T$ in their sorted order. Let $Y_1$ denote the first node in this sequence. Without loss of generality let $Y_1 \in Fam(1)$. Next, consider the subsequence of $\mathcal{E}$ which is obtained after excluding all the external nodes in $Fam(1)$ from $\mathcal{E}$. Let $Y_2$ denote the first  external node in this subsequence.  We continue in this manner until we find the first $k$ external nodes $\{ Y_1, Y_2, \ldots, Y_k\}$ in $\mathcal{E}$, such that each of the $k$ nodes corresponds to a distinct physical cluster. Once again, without loss of generality, we assume that $Y_i \in Fam(i), 2 \leq i \leq k$. Let us assume that $Y_i = X_i^{ext}(t_i)$, for some $t_i$. Now, consider the $m$ out-nodes $X_{i, 1}^{out}(t_i), \ldots, X_{i, m}^{out}(t_i)$ that connect to $X_i^{ext}(t_i)$. Among these $m$ out-nodes, let $a_i, 0 \leq a_i \leq m$ denote the number of out-nodes that appear in IFG$_S$.  Without loss of generality let these be the nodes $X_{i, 1}^{out}(t_i), X_{i, 2}^{out}(t_i), \ldots, X_{i, a_i}^{out}(t_i)$. Next, corresponding to the out-node $X_{i, j}^{out}(t_i), a_i + 1 \leq j \leq m$, consider its past versions $\{ X_{i, j}^{out}(t), t < t_i\}$ in the IFG, and let $X_{i, j}^{out}(t_{i,j})$, for some $t_{i, j} \leq t_i$ denote the first sorted node that appears\footnote{It may be noted that even though $X_{i, j}^{out}(t_{i,j})$ appears in IFG$_T$, the corresponding external node $X_{i}^{ext}(t_{i,j})$ appears in IFG$_S$. This is due to our assumption that $Y_i = X_i^{ext}(t_i)$ is the first external node, corresponding to physical cluster $i$, that appears in IFG$_T$. } in  IFG$_T$. Without loss of generality, let us also assume that the nodes  $\{X_{i, j}^{out}(t_{i,j}), a_i + 1 \leq j \leq m\}$ are sorted in the order $X_{i, a_i+1}^{out}(t_{i,a_i + 1}),  X_{i, a_i+2}^{out}(t_{i,a_i + 2}), \ldots, X_{i, m}^{out}(t_{i,m})$. An illustration is provided in Fig. \ref{fig:cut_converse}.
	
	\begin{figure}
		\centering
		\includegraphics[width=115mm]{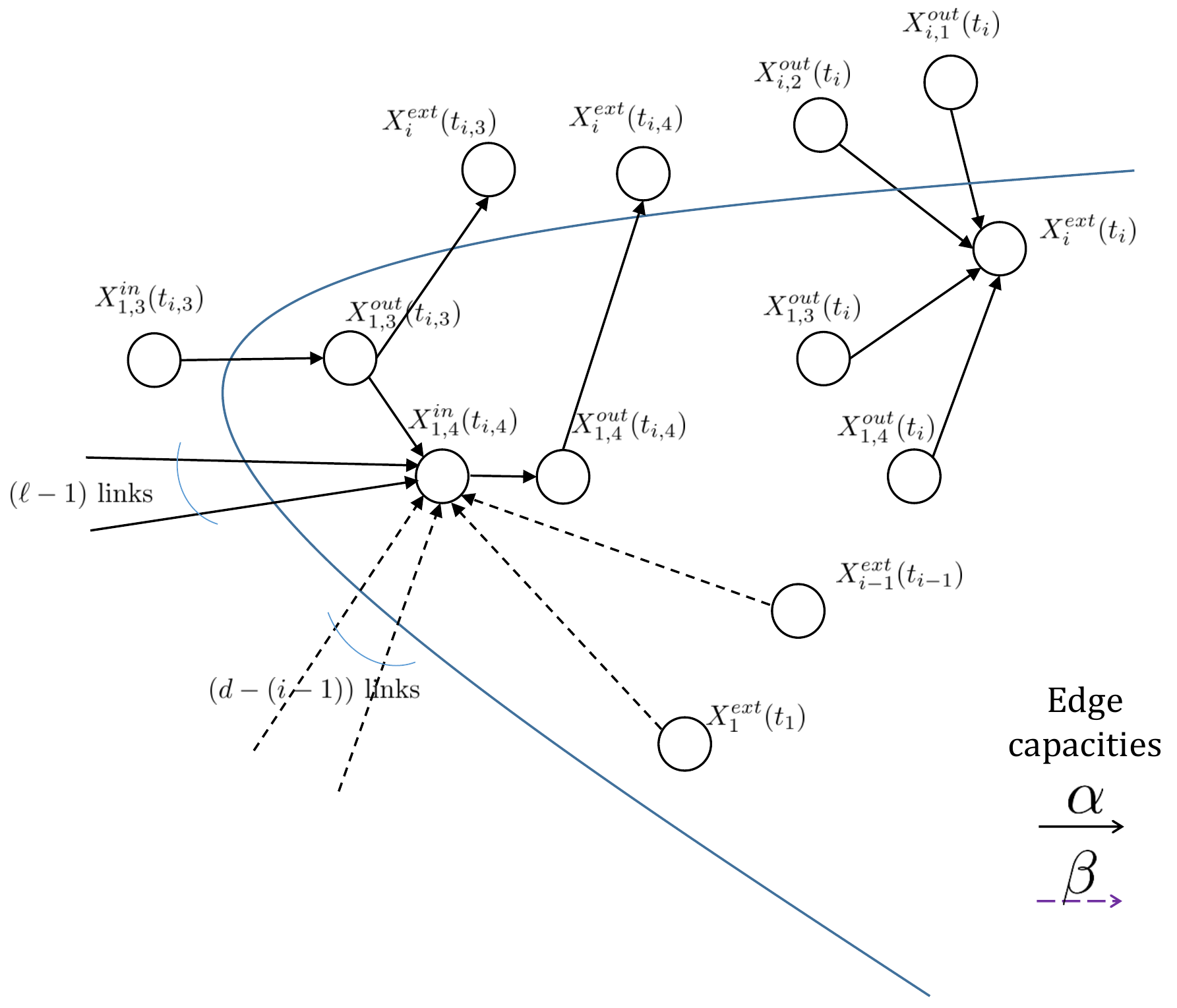}
		\caption{An example of how any $S-T$ cut in the IFG affects nodes in $Fam(i)$.
			In the example, we assume $m = 4$. With respect to the description in the text, $a_i = 2$. Further, 	the node $X_{i, 4}(t_{i, 4})$ is a replacement node in the IFG. }
		\label{fig:cut_converse}
	\end{figure}
	
	To obtain a lower bound on the value of the $S-T$ cut, we make the following observations:
	\begin{itemize}
		\item The $a_i$ edges $\{(X_{i, j}^{out}(t_i) \to X_{i}^{ext}(t_i)), 1 \leq j \leq a_i \}$ are part of the cut. These contribute a total value of $a_i\alpha$.
		\item For any node $X_{i, j}^{out}(t_{i, j}), a_i + 1 \leq j \leq m$, if the corresponding in-node $X_{i, j}^{in}(t_{i, j})$ belongs to IFG$_S$, then the edge $(X_{i, j}^{in}(t_{i,j}) \to X_{i, j}^{out}(t_{i,j}))$ appears in the cut, and contributes a value of $\alpha$ to the cut. Now, consider the case when the in-node $X_{i, j}^{in}(t_{i, j})$ belongs to IFG$_T$. In this case, consider the following two sub cases:
		\begin{itemize}
			\item The node $X_{i, j}(t_{i, j})$ is not a replacement node: This means that, either the edge $(X_{i, j}^{out}(t_{i,j}-1) \to X_{i, j}^{in}(t_{i,j}))$ appears in the cut, if $t_{i,j} > 0$, or the edge $(S \to X_{i, j}^{in}(t_{i,j}))$ appears in the cut, if $t_{i,j} = 0$. In any case, the contribution to the overall value of the cut is at least $\alpha$.
			\item The node $X_{i, j}(t_{i, j})$ is a replacement node of $X_{i, j}(t_{i, j}-1)$: We know that $\ell$ local helper nodes and $d$ external nodes are involved in repair. It is straightforward to see that out of the $\ell$ local helper nodes, at most $(j-1)$ belong to IFG$_T$. To see this, note that the potential candidates for the local helper nodes that appear in IFG$_T$ correspond to the physical nodes\footnote{It may be noted that we count the physical nodes $X_{i, 1}, \ldots, X_{i, a_i}$ among the possible set of local helpers, although we assume that $X_{i, j}(t_i), 1 \leq j \leq a_i$ appears in IFG$_S$. This is because, we cannot discount the possibility that $X_{i, j}(t_{i,j'}-1)$ appears in IFG$_T$, for $j \leq a_i, j' > a_i$.} $X_{i, 1}, X_{i, 2}, \ldots, X_{i, j-1}$. The version of the physical node $X_{i, j'}, j' > j$ , if it aids in the repair process, appears in IFG$_S$ because of our definition of $X_{i,j'}(t_{i, j'})$. Next, note that out of the $d$ external nodes, at most $(i-1)$ belong to IFG$_T$. In this case, the contribution to the value of the cut, due to the edges that aid in repair, is lower bounded by $(\ell - j+1)^+\alpha + (d - (i-1))^+\beta$.
		\end{itemize}
	\end{itemize}
	Based on the observations above, the value of the cut is lower bounded by
	\begin{eqnarray}
		\text{mincut}(S-T) & \geq & \sum_{i = 1}^{k} \left( a_i\alpha  + \sum_{j = a_i + 1}^{m}\min(\alpha, (\ell - j+1)^+\alpha + (d - (i-1))^+\beta) \right) \label{eq:useinintra}\\
		& =  & a_ik\alpha + \sum_{i = 1}^{k} \sum_{j = a_i + 1}^{\ell} \alpha  +  \sum_{i = 1}^{k} \sum_{j = \max(\ell, a_i) + 1}^{m} \min(\alpha, (d - (i-1))^+\beta) \label{eq:useinintra1}\\
		& = & \max(a_i, \ell)k\alpha + (m - \max(a_i, \ell))\sum_{i = 1}^{k} \min(\alpha, (d - (i-1))^+\beta) \\
		& \geq & \ell k\alpha + (m - \ell)\sum_{i = 1}^{k} \min(\alpha, (d - (i-1))^+\beta),
	\end{eqnarray}
	for any $a_i, 0 \leq a_i \leq m$. This completes the proof of the converse.
\end{proof}

\vspace{0.1in}

\begin{defn}[Optimal Code]
	Code $\mathcal{C}_m$ is said to be optimal or capacity achieving, if its file-size $B = B^*$, where $B^*$ is as given in Theorem \ref{thm:file_size}.
\end{defn}

\subsection{Minimum Storage and Minimum Inter-Cluster Bandwidth Operating Points} \label{sec:tradeoff}
We now define the MSR and MBR operating points based on the bound in Theorem \ref{thm:file_size}, whenever $d > 0$. The operating point $d\beta = \alpha$ will be identified with the MBR operating point. An optimal code $\mathcal{C}_m$ at the MBR operating point will be referred to as an MBR code. To see the rationale behind the definition of the MBR operating point, we recall our assumption that  whenever $d > 0$, the encoding function does not introduce any dependency among the content of the nodes of a cluster.  In this case, assuming that $B$ symbols of the uncoded file are uniformly distributed over  $\mathbb{F}_q^{B}$, it is straightforward to see the necessity of the condition $d\beta \geq \alpha$ for any code $\mathcal{C}_m$.

Towards defining the MSR point, we note that $B \leq \ell k \alpha + (m -\ell) \min(d, k) \alpha$. The MSR code is an optimal code having $B = \ell k \alpha + (m -\ell) \min(d, k) \alpha$, and has the lowest possible inter-cluster repair bandwidth. It is clear that the parametrization of the MSR operating point depends on relation between $d$ and $k$.  For $d \geq k$, the optimal file-size of $B = mk\alpha$ is achievable only if $(d-k+1)\beta \geq \alpha$, and thus the operating point $(d-k+1)\beta = \alpha$ will be identified with the MSR operating point.  For $ 1 \leq d \leq k-1$, note that $B \leq \ell k\alpha + (m-\ell)d\alpha$, and equality is achieved only if $\beta \geq \alpha$. The operating point $\alpha = \beta$ will be identified with the MSR operating point for the case when $ 1 \leq d \leq k-1$.

Based on the definition of the MSR and MBR operating points, we note that for $d \geq k$, the range of $\alpha$ considered while plotting the trade-off is given by $(d-k+1)\beta \leq \alpha \leq d\beta$, and when $ 1 \leq d \leq k-1$, the range of $\alpha$ is given by $ \beta \leq \alpha \leq d\beta$.

\section{Code Constructions} \label{sec:code}

In this section, we describe our optimal code constructions. Two constructions are presented; the first one is an instance of an exact repair code, and results in optimal codes at the MSR and MBR points under the setting of generalized regenerating codes; the second construction is a functional-repair regenerating code. Both codes can withstand any number of repairs for the duration of operation of the system. The exact repair code withstands any number of repairs by definition, since after each repair the data on all nodes is the same as at the start of system operation. This logic does not hold for functional repair codes, because the repaired node content is generally different from the original one. Network-coding based achievability proofs for functional-repair work only if there is a known upper bound on the number of repairs that occur over the lifetime of the system. Our functional repair code relies on the construction in \cite{Wu_regen}, which allows our code to operate for arbitrarily many repairs. For both constructions, we rely on existing optimal classical regenerating codes that are linear. By a linear regenerating code, we mean that both encoding and repair are performed via linear combinations of either the input or the coded symbols, respectively. The first construction generates an optimal $(n, k, d) (\alpha, \beta)(m, \ell)$ code for any $m, \ell \leq m-1, 1 \leq d \leq n-1$, whenever an optimal $(n, k, \min(k, d))(\alpha, \beta)$ classical exact repair linear regenerating code exists. Our functional repair code construction is limited to the case $\ell = m-1, d \geq k$.

For a linear $(n, k, d) (\alpha, \beta)$ classical regenerating code that encodes a data file of size $B$ symbols, one can associate a generator matrix $G$ of size $B \times n\alpha$. Without loss of generality, the first $\alpha$ columns of $G$ generates the content of node $1$, and so on. We say that two $(n, k, d) (\alpha, \beta)$ classical linear regenerating codes $\mathcal{C}_1$ and $\mathcal{C}_2$, having generator matrices $G_1$ and $G_2$ are identical, if $G_1 = G_2$. Since we assume that repair is also a linear operation, it follows that the set of linear functions that define repair operations can be taken the same for  $\mathcal{C}_1$ and $\mathcal{C}_2$ that are identical.  The following lemma will be used while we prove  optimality of the exact-repair construction. The proof of the lemma is straightforward, and is omitted.

\vspace{0.1in}

\begin{lem} \label{lem:regen_combo}
	Let $\{\mathcal{C}_i, 1 \leq i \leq s, s \geq 1\}$ denote identical $(n, k, d) (\alpha, \beta)$ classical  exact repair linear regenerating codes.  Let ${\bf c}_i \in \mathbb{F}_q^{n\alpha}$ denote a generic codeword of $\mathcal{C}_i, 1 \leq i \leq s$, where the first $\alpha$ symbols of ${\bf c}_i$  are the content of node $1$, and so on. Suppose that we are given $a_i \in \mathbb{F}_q, 1 \leq i \leq s$ such that not every $a_i$ is $0$, define a new $n$-length array code $\mathcal{C}$ over $\mathbb{F}_q^{\alpha}$ as $\mathcal{C} = \{ \sum_{i = 1}^{s}a_i {\bf c}_i,  {\bf c}_i \in \mathcal{C}_i\}$. Then, the code $\mathcal{C}$ is an $(n, k, d) (\alpha, \beta)$ classical  exact repair linear regenerating code over $\mathbb{F}_q$, and is identical to $\mathcal{C}_i, 1 \leq i \leq s$.
\end{lem}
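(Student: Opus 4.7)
The plan is to exploit two facts: that the codes are identical (so they share a common generator matrix $G$ of size $B\times n\alpha$), and that all operations involved---encoding, the helper maps at each node, and the decoding map at the replacement node---are $\mathbb{F}_q$-linear.

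First, since the $\mathcal{C}_i$ are identical linear codes, I would write every codeword ${\bf c}_i \in \mathcal{C}_i$ as $\widehat{\bf m}_i G$ for some message $\widehat{\bf m}_i \in \mathbb{F}_q^B$. Then any element of $\mathcal{C}$ can be expressed as
\[
\sum_{i=1}^{s} a_i {\bf c}_i \ =\ \Bigl( \sum_{i=1}^{s} a_i \widehat{\bf m}_i \Bigr) G.
\]
Because some $a_j\neq 0$, the linear map $(\widehat{\bf m}_1,\ldots,\widehat{\bf m}_s)\mapsto \sum_i a_i\widehat{\bf m}_i$ is surjective onto $\mathbb{F}_q^B$, so as a set $\mathcal{C}=\{\widehat{\bf m} G : \widehat{\bf m} \in \mathbb{F}_q^B\}$. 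This shows $\mathcal{C}$ has generator matrix $G$, hence is identical to each $\mathcal{C}_i$ (which in particular inherits data collection from any $k$ clusters).

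Second, for the exact-repair property, I would fix once and for all the linear helper and reconstruction maps used by the common code. Specifically, when node $j$ needs to be repaired, each helper node $i'$ applies a fixed $\mathbb{F}_q$-linear map $f_{j,i'}$ to its $\alpha$-symbol content to produce $\beta$ helper symbols, and the replacement node applies a fixed linear combining map $g_j$ to the collected helper symbols to output $\alpha$ symbols matching the failed node. Applying these same linear maps in $\mathcal{C}$ to the node contents (which are the $\mathbb{F}_q$-linear combinations $\sum_i a_i \cdot (\text{content of node in } \mathcal{C}_i)$) and invoking linearity, the output of $g_j$ equals $\sum_i a_i$ times the output of $g_j$ on each $\mathcal{C}_i$'s content, which in turn equals the content of node $j$ in $\mathcal{C}$. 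Thus the inherited repair strategy is exact and uses exactly $d$ remote helpers with $\beta$ symbols each (the same parameters are preserved because the maps are unchanged).

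The only real obstacle is being careful that the helper and repair maps can be chosen uniformly across the $\mathcal{C}_i$, which follows immediately from the assumption that the codes are identical and linear (so the same choice of maps works for all of them). Once linearity is invoked on both the generator side and the repair side, the rest is bookkeeping, which is why the authors omit the proof.
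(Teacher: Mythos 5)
The paper omits the proof of this lemma, stating only that it is straightforward, so there is no authorial proof to compare against; your argument is a correct and complete one. The surjectivity observation establishes that $\mathcal{C}$ has generator matrix $G$ (hence is identical to each $\mathcal{C}_i$ in the paper's sense), and the linearity argument correctly shows that the shared linear helper maps and reconstruction maps commute with the scalar combination $\sum_i a_i(\cdot)$, so exact repair with $d$ helpers and $\beta$ symbols per helper is preserved.
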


\vspace{0.1in}

\subsection{Exact Repair Code Construction}

We begin with a description of the code, and then show its data collection and repair properties. The construction itself is a generalization of the example presented in  Section \ref{sec:ex}.

\vspace{0.1in}

\begin{constr} \label{constr:exact}
	Let $\mathcal{C}_j, 1 \leq j \leq \ell$ denote $[n, k]$ MDS array codes over $\mathbb{F}_q^{\alpha}$. The amount of data that can be encoded with these $\ell$ codes is $\ell k \alpha$. Next, let $\mathcal{C}_{j}, \ell + 1 \leq j \leq m$ denote $(n, k, d' = \min(d, k))(\alpha, \beta)$ classical exact repair linear regenerating codes (also over $\mathbb{F}_q$), each having a file size $B' = \sum_{i = 0}^{k - 1}\min(\alpha, (d' - i)\beta)$. We require that all the codes $\mathcal{C}_{j}, \ell + 1 \leq j \leq m$ are identical. For encoding, we first divide the data file of size $B^* = \ell k \alpha + (m - \ell)B'$ into $m$ stripes, such that first $\ell$ have size $k \alpha$, and the last $m-\ell$ have size $B'$. Stripe $j, 1  \leq j \leq m$ is encoded by $\mathcal{C}_j$ to generate the coded symbols  ${\bf c}_j = [c_{1, j}, c_{2, j}, \ldots, c_{n\alpha, j}]^T$.  Next, consider an $m \times m$ invertible matrix $A$ over $\mathbb{F}_q$ such that the first the $\ell$ rows of $A$ generate an $[m, \ell]$ MDS code $\mathbb{F}_q$. Let matrix $A$ be decomposed as
	\begin{eqnarray} \label{eq:matrixA}
		A_{m \times m } & = & \left[ \begin{array}{c} E_{\ell \times m} \\ \hline \\ F_{m - \ell \times m}\end{array} \right].
	\end{eqnarray}
	Thus, the matrix $E_{\ell \times m}$ generates an $[m, \ell]$ MDS code. The coded data stored in the various clusters is generated as follows:
	\begin{eqnarray} \label{eq:encode}
		[ {\bf c}'_{1} \ {\bf c}'_{2} \ \cdots \ {\bf c}'_{m} ] & = & [ {\bf c}_{1} \ {\bf c}_{2} \ \cdots \ {\bf c}_{m} ]A_{m \times m}.
	\end{eqnarray}
	The content of node $j$ in cluster $i$ is given by $[c'_{(i-1)\alpha + 1, j}, c'_{(i-1)\alpha + 2, j},\ldots, c'_{i\alpha, j}]^T, 1 \leq i \leq n, 1 \leq j \leq m$. This completes the description of the construction. A pictorial overview of the description appears in Fig. \ref{fig:code_encode}.
\end{constr}
\begin{figure*}
	\centering
	\includegraphics[height=2.5in]{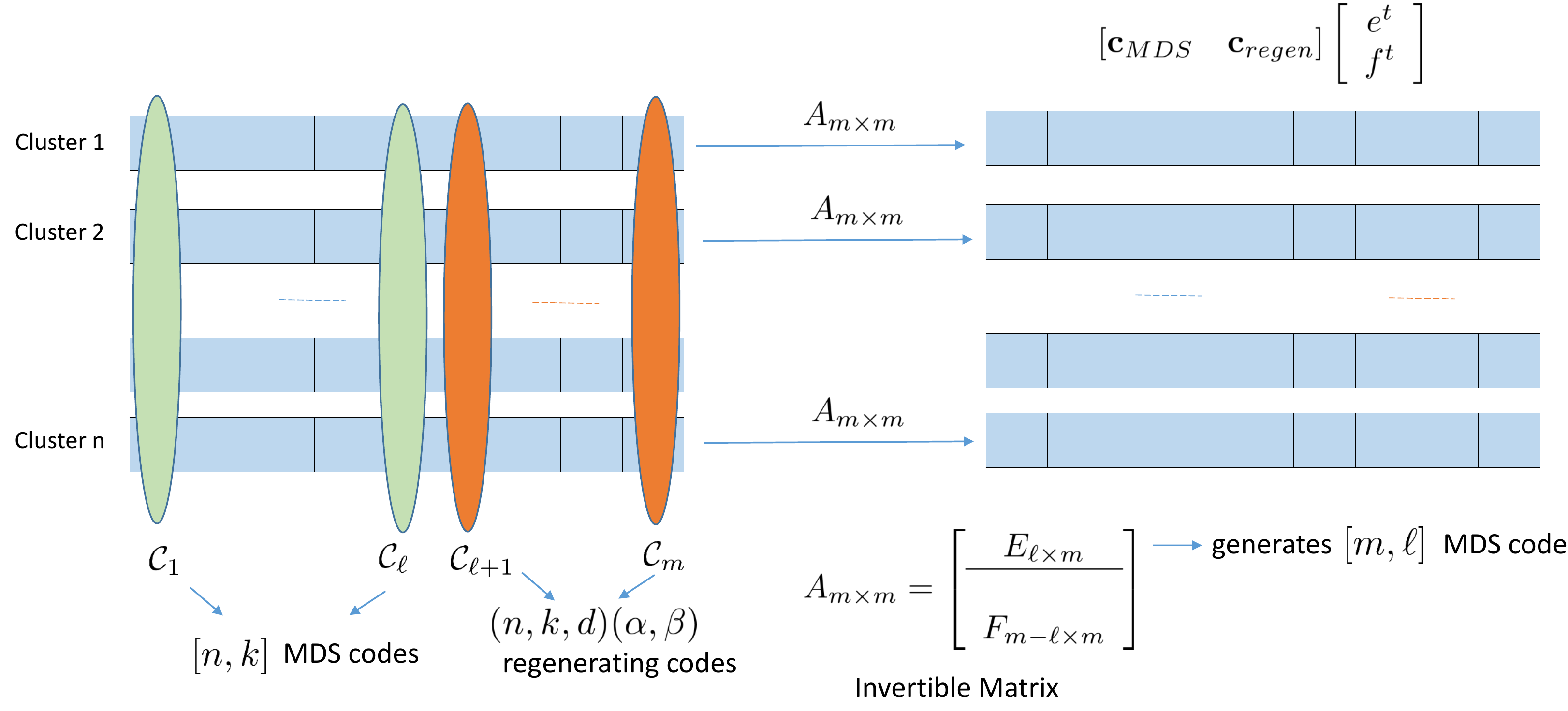}
	\caption{Illustration of the exact repair code construction. We first stack $\ell$ MDS codes and $(m - \ell)$ classical regenerating codes, and then transform each row via the invertible matrix $A$. The first $\ell$ rows of the matrix $A$ generates an $[m, \ell]$ MDS code.}
	\label{fig:code_encode}
\end{figure*}

We next prove optimality property of Construction \ref{constr:exact}.

\vspace{0.1in}

\begin{thm} \label{thm:exact_code}
	The code described in Construction \ref{constr:exact} is an optimal exact repair generalized regenerating code, for any $m, \ell \leq m$. The optimal code can be constructed whenever an optimal $(n, k, d'=\min(d, k)) (\alpha, \beta)$ exact repair linear regenerating code exists, having a file size $B' = \sum_{i = 0}^{k - 1}\min(\alpha, (d' - i)\beta)$.
\end{thm}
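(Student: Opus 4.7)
The proof will verify three properties of Construction~\ref{constr:exact}: data collection from any $k$ clusters, exact repair of any failed node, and attainment of the file size $B = \ell k\alpha + (m-\ell)B' = B^*$. Throughout, let $y_{i,s}\in\mathbb{F}_q^\alpha$ denote the coordinate of the constituent codeword ${\bf c}_s$ placed at cluster $i$, so that the physical storage satisfies $(x_{i,1},\ldots,x_{i,m}) = (y_{i,1},\ldots,y_{i,m})\, A$. The two structural ingredients I will rely on are the invertibility of $A$, which translates between the physical view $(x_{i,j})$ and the constituent-code view $(y_{i,s})$, and the MDS property of the first $\ell$ rows $E$ of $A$, which lets me separate the contributions of $\mathcal{C}_1,\ldots,\mathcal{C}_\ell$ from those of $\mathcal{C}_{\ell+1},\ldots,\mathcal{C}_m$.

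\textbf{Data collection.} At each accessed cluster $i$, the collector applies $A^{-1}$ to $(x_{i,1},\ldots,x_{i,m})$ and recovers $(y_{i,1},\ldots,y_{i,m})$. Aggregating over $k$ clusters yields $k$ coordinates of each $\mathcal{C}_s$; since every $\mathcal{C}_s$ is either $[n,k]$ MDS or a classical regenerating code admitting data collection from any $k$ coordinates, each stripe is decodable and the full file is recovered.

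\textbf{Exact repair.} For the repair of node $(i,j)$ with local helper indices $\mathcal{J}=\{j_1,\ldots,j_\ell\}$, decompose $x_{i,j} = \bar{x}_{i,j} + \tilde{x}_{i,j}$, with $\bar{x}_{i,j}=\sum_{s=1}^\ell A_{s,j}\, y_{i,s}$ and $\tilde{x}_{i,j}=\sum_{s=\ell+1}^m A_{s,j}\, y_{i,s}$, and similarly for each $j_t$. By Lemma~\ref{lem:regen_combo}, the linear combination $\sum_{s=\ell+1}^m A_{s,j}\, \mathcal{C}_s$ is an identical classical regenerating code with the same linear repair procedure. Each remote helper cluster $i'$ computes $(y_{i',\ell+1},\ldots,y_{i',m})$ via $A^{-1}$ and produces the $\beta$-symbol helper data for coordinate-$i$ repair of this combined code; using $d'=\min(d,k)$ such messages, the target reconstructs $\tilde{x}_{i,j}$. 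The same pooling with coefficients $A_{s,j_t}$ yields $\tilde{x}_{i,j_t}$ for every $t$; subtracting from the local helper's content $x_{i,j_t}$ gives $\bar{x}_{i,j_t}$. Because $E$ generates an $[m,\ell]$ MDS code, the $\ell\times\ell$ submatrix $E_\mathcal{J}$ is invertible, so $(y_{i,1},\ldots,y_{i,\ell})$ is recovered, and $\bar{x}_{i,j}$ is then obtained as the linear combination with the $j$-th column of $E$, yielding $x_{i,j} = \bar{x}_{i,j}+\tilde{x}_{i,j}$.

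\textbf{Main obstacle and file size.} The delicate point will be to argue that the inter-cluster bandwidth of the scheme above, which nominally amounts to $\ell+1$ combined-code repairs, fits within the prescribed $d\beta$ symbols. I plan to exploit that the $\ell+1$ required scalar quantities $\tilde{x}_{i,j},\tilde{x}_{i,j_1},\ldots,\tilde{x}_{i,j_\ell}$ are all linear combinations of the same $m-\ell$ underlying coordinates $y_{i,\ell+1},\ldots,y_{i,m}$, so that at each remote helper cluster a single $\beta$-symbol transmission carrying the pooled helper data (per Lemma~\ref{lem:regen_combo}) suffices to let the target extract all of them by post-processing; the parameter regime in which the budget is tight will mirror the example of Section~\ref{sec:ex}, where $m-\ell=1$ makes the pooling collapse to a single classical repair. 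Once exact repair is verified, the file-size claim follows by counting: the first $\ell$ stripes encode $\ell k\alpha$ symbols and the remaining $m-\ell$ stripes encode $B'$ symbols each, for a total $B=\ell k\alpha+(m-\ell)B'$, which coincides with $B^*$ of Theorem~\ref{thm:file_size} at the MSR and MBR operating points where $B'$ attains the classical-regenerating-code capacity.
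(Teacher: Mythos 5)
Your decomposition $x_{i,j} = \bar{x}_{i,j} + \tilde{x}_{i,j}$ and the appeal to Lemma~\ref{lem:regen_combo} are in the right spirit, but the repair plan has a genuine gap whenever $m - \ell > 1$, which you have flagged but not resolved. To obtain $\bar{x}_{i,j_t}$ you propose subtracting $\tilde{x}_{i,j_t}$ from $x_{i,j_t}$ for each of the $\ell$ local helpers. Each $\tilde{x}_{i,j_t} = \sum_{s=\ell+1}^m A_{s,j_t}\, y_{i,s}$ is the coordinate of a \emph{distinct} combined regenerating code, and by linearity the $\beta$-symbol helper message that remote cluster $i'$ would emit for that code is $\sum_{s=\ell+1}^m A_{s,j_t}\, g(y_{i',s})$, where $g$ is the (linear) classical-code helper map. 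For $m - \ell > 1$, these are $\ell+1$ different linear combinations of the $m-\ell$ vectors $\{g(y_{i',s})\}_{s>\ell}\subset\mathbb{F}_q^\beta$; a single $\beta$-symbol transmission cannot encode them all, and the particular message computed for $\tilde{x}_{i,j}$ does not determine the others. Your proposed collapse works only when $m-\ell=1$, exactly as you suspected; the theorem, however, is for general $m,\ell$.

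The paper's argument sidesteps this entirely. It never tries to recover any $\tilde{x}_{i,j_t}$. Instead it observes that the local combination $(x_{i,j_1},\ldots,x_{i,j_\ell})\,E_{\mathcal{J}}^{-1}{\bf e}_j^T$ already produces the \emph{correct} MDS part $\bar{x}_{i,j}$, with an error confined to the regenerating-code components and equal to ${\bf c}^{(i)}_{regen}\bigl({\bf f}_j^T - F_{\mathcal{J}}E_{\mathcal{J}}^{-1}{\bf e}_j^T\bigr)$. Setting $\delta = {\bf f}_j^T - F_{\mathcal{J}}E_{\mathcal{J}}^{-1}{\bf e}_j^T \in \mathbb{F}_q^{m-\ell}$, this error, viewed across all clusters $i$, is the codeword of a \emph{single} combined classical code $\widehat{\mathcal{C}}$ built from the scalar coefficients $\delta_s$ via Lemma~\ref{lem:regen_combo}. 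Each remote cluster therefore needs to transmit helper data for just one classical-code repair, i.e., exactly $\beta$ symbols, and the replacement node regenerates the correction term and adds it to the local estimate. Rewriting your repair step along these lines closes the bandwidth budget without any collapse assumption and completes the argument.
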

\begin{proof}
	It is clear that the code in Construction \ref{constr:exact} has a file size $B^*$, where $B^*$ is as given in Theorem \ref{thm:file_size}. Further, the data collection property of the code is also straightforward to check, and this essentially follows from the facts that $1)$ the matrix $A$ is invertible, and $2)$ each of the codes $\mathcal{C}_i, 1 \leq i \leq m$ is uniquely decodable given its coded data belonging to any $k$ clusters. Towards examining the repair property of the code let us rewrite \eqref{eq:encode} as follows:
	\begin{eqnarray}
		[ {\bf c}'_{1} \ {\bf c}'_{2} \ \cdots \ {\bf c}'_{m} ] & = & [ {\bf c}_{1} \ {\bf c}_{2} \ \cdots \ {\bf c}_{m} ]A_{m \times m} \\
		& = & \left[ {\bf C}_{MDS} \ {\bf C}_{regen}  \right ]A_{m \times m} \\
		& = & \left [ \begin{array}{cc}  {\bf c}^{(1)}_{MDS} & {\bf c}^{(1)}_{regen} \\
			{\bf c}^{(2)}_{MDS} & {\bf c}^{(2)}_{regen} \\
			\vdots & \vdots \\
			{\bf c}^{(n)}_{MDS} & {\bf c}^{(n)}_{regen}
		\end{array}  \right ]A_{m \times m}, \label{eq:encode_2}
	\end{eqnarray}
	where ${\bf C}_{MDS} = [{\bf c}_{1} \ \cdots \ {\bf c}_{\ell} ]$ and ${\bf C}_{regen} = [{\bf c}_{\ell+1} \ \cdots \ {\bf c}_{m} ]$. The matrices ${\bf c}^{(i)}_{MDS}$ and ${\bf c}^{(i)}_{regen}, 1 \leq i \leq n$ denote rows $(i-1)\alpha + 1, \ldots, i\alpha$ of ${\bf C}_{MDS}$ and ${\bf C}_{regen}$, respectively. Let us also expand the decomposition of matrix $A$ in \eqref{eq:matrixA} further as follows:
	\begin{eqnarray}
		A_{m \times m } & = & \left[ \begin{array}{c} E_{\ell \times m} \\ \hline \\ F_{m - \ell \times m}\end{array} \right] \\
		& = &   \left[  \begin{array}{cccc} {\bf e}_1^T  & {\bf e}_2^T & \cdots & {\bf e}_m^T \\ \hline \\ {\bf f}_1^T  & {\bf f}_2^T & \cdots & {\bf f}_m^T \end{array} \right], \label{eq:matrixA_2}
	\end{eqnarray}
	where ${\bf e}_j^T$ and  ${\bf f}_j^T, 1 \leq j \leq m$ denote the $j^{\text{th}}$ column of the matrices $E$ and $F$, respectively. Based on \eqref{eq:encode_2} and \eqref{eq:matrixA_2}, it can be seen that the content of node $j$ in cluster $i$ is given by
	\begin{eqnarray} \label{eq:data_clusteri}
		\left[{\bf c}^{(i)}_{MDS} \  \  {\bf c}^{(i)}_{regen}\right]\left[  \begin{array}{c} {\bf e}_j^T  \\ \hline \\ {\bf f}_j^T   \end{array} \right]
	\end{eqnarray}
	Given the notation above, without loss of generality, consider repairing node $\ell + 1$ in cluster $1$ with the help of $1)$ the first $\ell$ local nodes in cluster $1$ and $2)$ clusters $2, \ldots, d'+1$. Let us first examine the role of the $\ell$ local nodes in the repair process. Let $E'$ and $F'$ denote the first $\ell$ columns of $E$ and $F$, respectively. By assumption, $E$ generates an $[m, \ell]$ MDS code, and hence the submatrix $E'$ is invertible. In this case, the content from the $\ell$ local nodes can be put together to generate
	\begin{eqnarray} \label{eq:local_data}
		\left( \left[{\bf c}^{(1)}_{MDS} \  \  {\bf c}^{(1)}_{regen}\right]\left[  \begin{array}{c} E' \\ \hline \\ F' \end{array} \right] \right) E'^{-1} {\bf e}_{\ell + 1}^T & = & \left[{\bf c}^{(1)}_{MDS} \  \  {\bf c}^{(1)}_{regen}\right]\left[  \begin{array}{c} {\bf e}_{\ell + 1}^T  \\ \hline \\ {\bf \widehat{f}}_{\ell + 1}^T   \end{array} \right],
	\end{eqnarray}
	where ${\bf \widehat{f}}_j^T = F'E'^{-1}{\bf e}_{\ell + 1}^T$. Thus, the local helper nodes serve to recover the part corresponding to the MDS-codes' components given by ${\bf c}^{(1)}_{MDS}{\bf e}_{\ell + 1}^T$. However, the regenerating-codes' components ${\bf c}^{(1)}_{regen}{\bf \widehat{f}}_{\ell + 1}^T$ differs from the original ${\bf c}^{(1)}_{regen}{\bf {f}}_{\ell + 1}^T$.
	
	Let us next examine the role of the $d'$ remote helper clusters. We know that the data stored in cluster $i$ is given by $\left[{\bf c}^{(i)}_{MDS} \  \  {\bf c}^{(i)}_{regen}\right]A$. Since matrix $A$ is invertible, the vector ${\bf c}^{(i)}_{regen}$ can be recovered from this.
	Now, if we define the code $\widehat{\mathcal{C}}_i = \left\{ \sum_{j=1}^{m-\ell}(f_j - \widehat{f}_j) {\bf c}_{\ell + j}, {\bf c}_{\ell + j} \in  \mathcal{C}_{\ell + j} \right\}$, we know from Lemma \ref{lem:regen_combo} that $\widehat{\mathcal{C}}_i$ is an $(n, k, d')(\alpha, \beta)$ classical exact repair linear regenerating code, which is identical to $\mathcal{C}_j, \ell + 1 \leq j \leq m $. Thus, cluster $i, 2 \leq i \leq d' + 1$ generates and passes the helper data ($\beta$ symbols) toward the repair of the first vector symbol for the code $\widehat{\mathcal{C}}_i$. The replacement node regenerates ${\bf c}^{(1)}_{regen}({\bf f}_j^T - \widehat{{\bf f}}_j^T)$ using the helper data from the $d'$ remote clusters, and combines it with the local helper data (see \eqref{eq:local_data}) to correct the regenerating-codes' components, and restore the content of the lost node. A pictorial illustration of the repair process is shown in Fig. \ref{fig:code_repair}.
\end{proof}

\begin{figure*}
	\centering
	\includegraphics[height=2.5in]{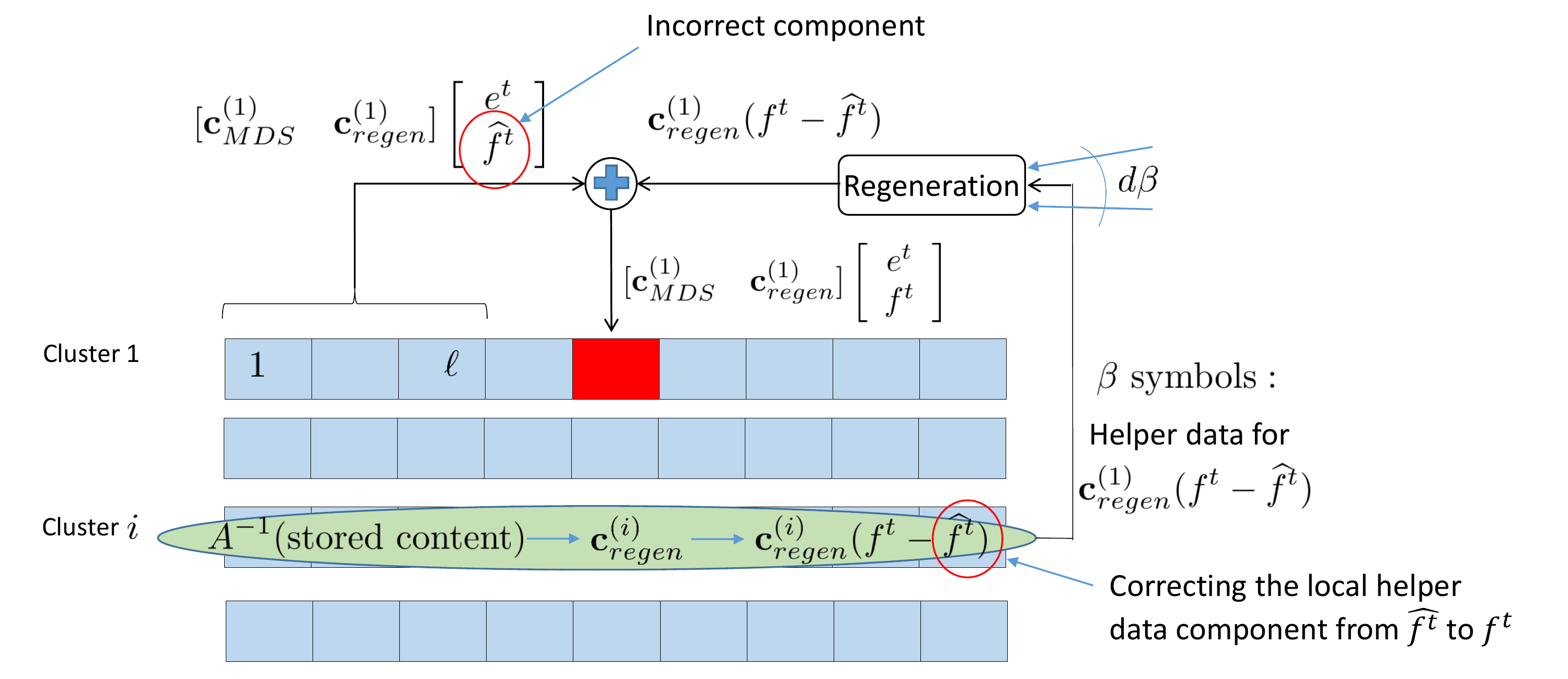}
	\caption{An illustration of the node repair process for exact repair generalized regenerating code obtained in Construction \ref{constr:exact}.}
	\label{fig:code_repair}
\end{figure*}

\subsection{A Functional Repair Code for Arbitrary Number of Failures}

In this section, we show the existence of optimal functional repair codes over a finite field that can tolerate an arbitrary number of repairs for the duration of operation of the system. We show the existence for any $(n,k, d)(\alpha, \beta)(m, \ell = m-1)$. The code construction is similar to the one used in the example in Section \ref{sec:ex},  and combines $m-1$ MDS array codes $\mathcal{C}_1, \ldots, \mathcal{C}_{m-1}$ with an $(n,k, d)(\alpha, \beta)$ functional repair code $\mathcal{C}_m$ for the classical setting. The code $\mathcal{C}_m$ is one that can tolerate an arbitrary number of repairs for the duration of operation of the system. The following lemma is a direct consequence  of Theorem $3$ and the description in Section $V$ of \cite{Wu_regen}, and guarantees the existence of the code $\mathcal{C}_m$ that we use here.

\vspace{0.1in}

\begin{lem} \label{lem:wu_code}
	For any $(n, k, d)(\alpha, \beta)$, there exists an optimal deterministic classical functional repair linear  regenerating code over $\mathbb{F}_q$ that can tolerate an arbitrary number of repairs for the duration of operation of the system, whenever $q > q_0$, where $q_0$ is entirely determined by the parameters $(n, k, d)(\alpha, \beta)$, and is independent of the number of repairs performed over the lifetime of the code.
\end{lem}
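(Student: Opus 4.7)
The plan is to reduce this lemma entirely to the prior work in \cite{Wu_regen}; no new combinatorial machinery needs to be developed here, but I need to verify that the parameters and properties claimed in the lemma are exactly what Wu's construction delivers.

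First, I would recall the classical regenerating-code setup: a file of size $B_1^\ast = \sum_{i=0}^{k-1}\min\{\alpha,(d-i)\beta\}$ is encoded across $n$ nodes of capacity $\alpha$, data collection is possible from any $k$ nodes, and node repair uses $\beta$ symbols from any $d$ helpers. This matches the setting considered in \cite{Wu_regen}, so the parameters $(n,k,d)(\alpha,\beta)$ of our lemma map directly onto the parameters of Wu's construction.

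Second, I would invoke the main structural result (Theorem 3 and the explicit construction in Section V of \cite{Wu_regen}), which provides a \emph{deterministic} linear code for the classical functional-repair problem. The key feature of Wu's construction, as opposed to a generic random-network-coding argument, is that the repair operations are drawn from a structured parameterized family so that the data-collection constraint can be satisfied on \emph{every} finite prefix of the information flow graph without needing to enlarge the field as the graph grows. The field-size threshold $q_0$ that makes the construction work is therefore a function only of $(n,k,d,\alpha,\beta)$; the number of failures and repairs performed during the lifetime of the code does not enter the bound. This independence is precisely the point of the construction in \cite{Wu_regen} and is what distinguishes it from the RLNC-based existence argument of \cite{dimakis}, which has a field-size requirement that grows with the size of the IFG.

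The only thing left to check is that Wu's code yields \emph{linearity} in the sense required by our use in Section~\ref{sec:code}, namely that both encoding and repair are $\mathbb{F}_q$-linear operations on the stored or input symbols. This is explicit in Wu's construction, since the repair maps are described by matrices from a fixed family over $\mathbb{F}_q$. There is no substantive obstacle here: the entire content of the lemma is an appeal to \cite{Wu_regen}, and the proof proper would consist of a short paragraph citing Theorem 3 of that paper and remarking that the parameters and properties carry over verbatim.
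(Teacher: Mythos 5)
Your proposal is correct and takes essentially the same route as the paper: the paper presents the lemma as a direct consequence of Theorem~3 and the Section~V construction of \cite{Wu_regen}, with no additional argument, and your verification that the parameters, linearity, determinism, and repair-count-independent field-size bound all carry over is exactly the content of that citation.
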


\vspace{0.1in}

In the above lemma, by a deterministic regenerating code, we mean that the regenerated data corresponding to a repair operation of a given physical node is uniquely determined given the content of the helper nodes.  As we shall see, the fact the code is deterministic is important to ensure the data collection property of our functional repair construction.

Below, we first describe the code construction, along with the repair procedure, and then show the  optimality property of the code.  In the following construction, we assume that the characteristic of the finite field $\mathbb{F}_q$ is $2$; i.e., $q = 2^w$ for some $w > 0$. The assumption is made only for the ease of description, the construction can be modified to accommodate finite fields of any characteristic.

\vspace{0.1in}

\begin{constr} \label{constr:functional}
	Let $\mathcal{C}_j, 1 \leq j \leq m-1$ denote $[n, k]$ MDS array codes over $\mathbb{F}_q^{\alpha}$. The amount of data that can be encoded with these $m$ codes is $(m-1) k \alpha$. Next, let $\mathcal{C}_{m}$ denote an $(n, k, d' = \min(d, k))(\alpha, \beta)$ classical functional repair linear regenerating code whose existence is guaranteed by Lemma \ref{lem:wu_code}. The code $\mathcal{C}_{m}$ has a file size $B' = \sum_{i = 0}^{k - 1}\min(\alpha, (d' - i)\beta)$. For encoding, we first divide the data file of size $B^* = \ell k \alpha + (m - \ell)B'$ into $m$ stripes, such that first $m-1$ have size $k \alpha$, and the last one has size $B'$. Stripe $j, 1  \leq j \leq m$ is encoded by $\mathcal{C}_j$ to generate the coded symbols  ${\bf c}_j = [c_{1, j}, c_{2, j}, \ldots, c_{n\alpha, j}]^T$. The arrangement of coded data in the various nodes is identical to that in the example in Section \ref{sec:ex}. Thus, node $j, 1 \leq j \leq m-1$ in cluster $i$ stores the vector $[c_{(i-1)\alpha + 1, j}, c_{(i-1)\alpha + 2, j},\ldots, c_{i\alpha, j}]^T$. Node $m$  in cluster $i$ stores the sum of the content of  $m-1$  nodes with symbols  $[c_{(i-1)\alpha + 1, m}, c_{(i-1)\alpha + 2, m},\ldots, c_{i\alpha, m}]^T$ of the regenerating code $\mathcal{C}_{m}$, i.e., content of node $m$ is given by $[\sum_{j = 1}^{m}c_{(i-1)\alpha + 1, j}, \sum_{j = 1}^{m}c_{(i-1)\alpha + 2, j},\ldots, \sum_{j = 1}^{m}c_{i\alpha, j}]^T$.
	This completes the description of the initial layout of the coded data. Since the code is a functional repair code, the code description is not complete unless we specify the procedure for node repair, as well. We do this next.
	
	\emph{\underline{Node Repair}:} Let $Y_{i,j}(t) \in \mathbb{F}_q^{\alpha}$ denote the content of node $j$ in cluster $i$, after the $t^{\text{th}}, t \geq 0$ repair, anywhere in the system. The quantities $\{Y_{i,j}(0), 1 \leq i \leq n, 1 \leq j \leq m\}$ denote the initial content present in the system, and are as described above. The repair procedure is such that the vector $[\sum_{j = 1}^{m}Y_{1,j}(t), \sum_{j = 1}^{m}Y_{2,j}(t),\ldots, \sum_{j = 1}^{m}Y_{n,j}(t)]^T$ remains as a valid codeword of the functional repair regenerating code $\mathcal{C}_{m}$, for every $t \geq 0$ (to be proved in Theorem \ref{thm:func_code}). Clearly, the above statement is true for $t = 0$. The repair procedure can be described recursively as follows: Let the $t^{\text{th}}$ repair be associated with node $i'$ in cluster $j'$. Each of the $d'$ remote helper clusters, say $i$, internally computes  $\sum_{j = 1}^{m}Y_{i,j}(t-1)$, and passes the $\beta$ symbols toward the repair of  $\sum_{j = 1}^{m}Y_{i',j}(t-1)$. The replacement node first of all regenerates $\widehat{Y}_{i'}(t-1)$, as the replacement of $\sum_{j=1}^{m}Y_{i',j}(t-1)$, given the  helper data from the $d$ remote clusters. Next, since $\ell = m-1$, the replacement node gets access to local helper data $\{Y_{i',j}(t-1), 1 \leq j \leq m, j \neq j'\}$. The content that is eventually stored in the replacement node is computed as follows:
	\begin{eqnarray} \label{eq:replacednode}
		Y_{i', j'}(t)  & = & \sum_{\substack{j = 1 \\ j \neq j'}}^{m}Y_{i', j}(t-1)  \ + \ \widehat{Y}_{i'}(t-1).
	\end{eqnarray}
	Also, for any $(i, j) \neq (i', j')$, we assume that
	\begin{eqnarray}  \label{eq:othernodes}
		Y_{i, j}(t) = Y_{i, j}(t-1).
	\end{eqnarray}
	This completes the description of the repair process and the code construction.
\end{constr}

\vspace{0.1in}

In the following theorem, we argue the optimality property of the above construction. Specifically, we show that the code retains the functional repair and data collection properties, after every repair. We assume that the data collector is aware of the entire repair-history of the system. By this we mean that the data collector is aware of $1)$ the exact sequence of $t$ failures and repairs that has happened in the system, and $2)$ the indices of the remote helper clusters that aided in each of the $t$ repairs.

\vspace{0.1in}

\begin{thm}\label{thm:func_code}
	The code described in Construction \ref{constr:functional} is an optimal $(n, k, d)(\alpha, \beta)(m, \ell = m-1)$ functional repair generalized regenerating code.
\end{thm}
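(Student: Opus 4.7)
The plan is to decompose each stored vector into an ``MDS part'' $Y_{i,j}^{MDS}(t)$ coming from stripes $1,\ldots,m-1$ and a ``$\mathcal{C}_m$ part'' $Y_{i,j}^{(m)}(t)$ coming from stripe $m$, and to maintain two structural invariants under the recursive rule \eqref{eq:replacednode}--\eqref{eq:othernodes}. Since the file size $B^*=(m-1)k\alpha+B'$ matches Theorem~\ref{thm:file_size} by construction, optimality reduces to establishing data collection and functional repair for every $t\geq 0$.

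First, I would prove a \emph{sum invariant} by induction on $t$: for every cluster $i$, $\sum_{j=1}^m Y_{i,j}(t)=Z_i(t)$, where $Z_i(t)$ denotes the state of node $i$ in the classical functional-repair code $\mathcal{C}_m$ after the corresponding sequence of repairs has been applied inside $\mathcal{C}_m$ alone. The base case $t=0$ follows because characteristic~$2$ makes the pairs $c_j^{(i)}$, $j<m$, cancel, leaving only $c_m^{(i)}$. For the induction step, the new sum after repairing $(i',j')$ equals $2\sum_{j\ne j'}Y_{i',j}(t-1)+\widehat{Y}_{i'}(t-1)=\widehat{Y}_{i'}(t-1)$; by the inductive hypothesis, each remote helper cluster's internally computed sum is $Z_i(t-1)$, so the $\beta$ symbols it forwards are exactly the $\mathcal{C}_m$ helper symbols for repairing node $i'$, and $\widehat{Y}_{i'}(t-1)$ is precisely the new $Z_{i'}(t)$.

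Second, I would prove an \emph{MDS-part invariant} by induction on $t$: $Y_{i,j}^{MDS}(t)=c_j^{(i)}$ for $j<m$ and $Y_{i,m}^{MDS}(t)=\sum_{j<m}c_j^{(i)}$. The key observation is that $\widehat{Y}_{i'}(t-1)$ is derived purely from the remote cluster sums, which by the sum invariant depend only on the $\mathcal{C}_m$ source, hence contribute nothing to the MDS part. Restricting \eqref{eq:replacednode} to MDS components gives $Y_{i',j'}^{MDS}(t)=\sum_{j\neq j'}Y_{i',j}^{MDS}(t-1)$, and substituting the inductive values with characteristic-$2$ cancellation reproduces the claimed expressions in both cases $j'<m$ and $j'=m$.

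With these invariants in hand, data collection from any $k$ clusters $\{i_1,\ldots,i_k\}$ at time $t$ proceeds in four steps: (a)~sum the $m$ nodes of each chosen cluster to obtain $Z_{i_1}(t),\ldots,Z_{i_k}(t)$ and decode the $B'$ source symbols of $\mathcal{C}_m$ via its classical data-collection property and the known repair history; (b)~exploiting the fact that Lemma~\ref{lem:wu_code} yields a \emph{deterministic} $\mathcal{C}_m$, simulate $\mathcal{C}_m$'s evolution forward from the recovered source to compute $Y_{i_s,j}^{(m)}(t)$ for every $(s,j)$; (c)~subtract from the stored $Y_{i_s,j}(t)$ to extract $Y_{i_s,j}^{MDS}(t)$, which the second invariant identifies with $c_j^{(i_s)}$ for $j<m$; (d)~MDS-decode each of $\mathcal{C}_1,\ldots,\mathcal{C}_{m-1}$ from its $k$ recovered chunks to obtain the remaining $(m-1)k\alpha$ source symbols. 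Functional repair at step $t$ is immediate from \eqref{eq:replacednode}: the sum invariant guarantees the remote clusters supply valid $\mathcal{C}_m$ helper data, so Lemma~\ref{lem:wu_code} ensures $\widehat{Y}_{i'}(t-1)$ is consistently produced for unbounded $t$. The main obstacle I anticipate is the MDS-part invariant: verifying that the regenerated term $\widehat{Y}_{i'}(t-1)$ injects no perturbation into the MDS subspace requires careful use of the characteristic-$2$ cancellation together with the strict separation of remote-helper data (depending only on $\mathcal{C}_m$) from local-helper data (which carries the MDS content).
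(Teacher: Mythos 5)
Your proof is correct, and its core step --- the sum invariant established by induction, using characteristic-$2$ cancellation to show $\sum_{j=1}^m Y_{i',j}(t'+1) = \widehat{Y}_{i'}(t')$ --- coincides exactly with the paper's argument for functional repair. You depart from the paper in two complementary ways. First, you make explicit an \emph{MDS-part invariant}: the component of each node's content attributable to stripes $1,\ldots,m-1$ stays frozen at its $t=0$ value (namely $c_j^{(i)}$ for $j<m$ and $\sum_{j<m}c_j^{(i)}$ for $j=m$). The paper never states this invariant, although it is implicitly true; making it explicit isolates precisely what the repair operator does on the two subspaces, and the obstacle you flag --- that $\widehat{Y}_{i'}(t-1)$ injects nothing into the MDS subspace --- is dispatched exactly as you say, because the sum invariant forces the remote helpers' $\beta$ symbols to be functions of $\mathcal{C}_m$'s state alone. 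Second, for data collection the paper \emph{rewinds} the observed node contents from time $t$ back to $t'=0$ by inverting the recursion \eqref{eq:replacednode}--\eqref{eq:othernodes}, and then decodes the $m-1$ MDS codes from the restored initial layout, whereas you \emph{forward-simulate} the deterministic $\mathcal{C}_m$ from its recovered source to time $t$, subtract its contribution, and appeal to the MDS-part invariant to decode. Both routes use precisely the same two hypotheses (determinism of $\mathcal{C}_m$ and the data collector's knowledge of the full repair history), so neither is strictly stronger; the forward-simulate-and-subtract version is arguably a touch cleaner because the MDS-part invariant does all the bookkeeping.
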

\begin{proof}
	It is clear that the code in Construction \ref{constr:functional} has a file-size $B^*$, as given by Theorem \ref{thm:file_size}. Toward showing that the code retains functional repair property, it is sufficient if we show  that the vector $[\sum_{j = 1}^{m}Y_{1,j}(t),$ $ \sum_{j = 1}^{m}Y_{2,j}(t),\ldots, \sum_{j = 1}^{m}Y_{n,j}(t)]^T$ remains as a valid codeword of the functional repair regenerating code $\mathcal{C}_{m}$, for every $t \geq 0$. We do this inductively. Clearly, the statement is true for $t=0$. Let us next assume that the statement is true for $t  = t' \geq 0$, and show its validity for $t = t' + 1$. Assume that the $(t' + 1)^{\text{th}}$ repair is associated with node $j'$ in cluster $i'$. The relation between the content of the various nodes before and after the $(t' + 1)^{\text{th}}$ repair are obtained via \eqref{eq:replacednode} and \eqref{eq:othernodes}. In this case, the quantities $\{\sum_{j = 1}^{m}Y_{i,j}(t'+1), 1 \leq i \leq n\}$ are given by
	\begin{eqnarray}
		\sum_{j = 1}^{m}Y_{i',j}(t'+1) & \stackrel{(a)}{=} &  \widehat{Y}_{i'}(t'), \\
		\sum_{j = 1}^{m}Y_{i,j}(t'+1) & = & \sum_{j = 1}^{m}Y_{i,j}(t'), \  1 \leq i \leq n, \ i \neq i',
	\end{eqnarray}
	where $(a)$ follows from our assumption that the finite field $\mathbb{F}_q$ has characteristic $2$. Now, recall that $\widehat{Y}_{i'}(t')$ is the replacement of $\sum_{j = 1}^{m}Y_{i',j}(t')$,  which is regenerated using the helper data generated using $d$ elements of the set $\{\sum_{j = 1}^{m}Y_{i,j}(t'), 1 \leq i \leq n, i \neq i'\}$. Combining with the induction hypothesis for $t = t'$, it follows that the induction statement holds good for $t = t'+1$ as well. This completes the proof of functional repair property of the code.
	
	Let us next see how data collection is accomplished after $t, t \geq 0$ repairs in the system. Without loss of generality assume that a data collector connects to  clusters $1, 2, \ldots, k$, and accesses $\{Y_{i, j}(t), 1 \leq i \leq k, 1 \leq j \leq m\}$. The data collector as a first step computes the vector $[\sum_{j = 1}^{m}Y_{1,j}(t), \sum_{j = 1}^{m}Y_{2,j}(t),\ldots, \sum_{j = 1}^{m}Y_{k,j}(t)]^T$, and uses this to decode the data corresponding to the code $\mathcal{C}_m$. Now, recall the fact that the code $\mathcal{C}_m$ is deterministic, and also our assumption that the data collector is aware of the entire repair-history of the system. In this case, having decoded $\mathcal{C}_m$, using \eqref{eq:replacednode} and \eqref{eq:othernodes}, the data collector can iteratively recover $\{Y_{i, j}(t'), 1 \leq k \leq, 1 \leq j \leq m\}$, for $ t \geq t' \geq 0$ by starting at $t' = t$ and proceeding backwards until the content at $t' = 0$ is recovered (essentially, we are rewinding the system by eliminating the effects of all the repairs, starting from the last one and proceeding backwards in time). Finally, from Construction \ref{constr:functional}, we know that the content $\{Y_{i, j}(0), 1 \leq k \leq, 1 \leq j \leq m-1\}$ is the stacked coded data corresponding to the $m-1$ $[n, k]$ MDS coded $\mathcal{C}_1, \ldots, \mathcal{C}_{m-1}$, and thus these codes can also be decoded. The completes the proof of data collection, and also the theorem.
	
\end{proof}

\section{Intra-cluster bandwidth for Optimal Codes} 
\label{sec:intra_cluster_bandwidth}

We now turn our attention to calculate the amount of intra-cluster repair bandwidth that is needed for a ${(n, k, d)(\alpha, \beta)(m, \ell)}$ code $\mathcal{C}_m$ to have optimal file-size $B^*$, given by Theorem \ref{thm:file_size}. As discussed in Section \ref{sec:sys_model}, there are two contributors to intra-cluster repair bandwidth: $1)$ the local helper bandwidth $\gamma$, which is the amount of data that each of the local helper nodes contributes to repair, and $2)$ the remote helper bandwidth $\gamma'$, which is amount of the data  that each of the $\ell'$ nodes of a remote helper cluster contribute toward computing the $\beta$ symbols of the cluster. In this section, we   study individually the minimum requirements on the parameters $\gamma$, $\gamma'$ and $\ell'$. For obtaining lower bound on $\gamma$, we continue to work with the IFG model in Section \ref{sec:IFG_inter}, except for the fact that links that connect the local helper out-nodes to the in-node of the replacement node, will have a capacity $\gamma$, instead of $\alpha$.  The IFG model in Section \ref{sec:IFG_intra} will be used when we compute lower bound on $\gamma'$ and $\ell'$. We also prove the tightness of the bound on $\gamma$ via a converse; however, no such converse is known to us regarding the bound on $\gamma'$. We note that, while computing the bound on $\gamma$, we ignore the effects of limited $\gamma'$ and $\ell'$ (and vice versa), i.e., we assume that $\ell' = m$ and $\gamma' = \alpha$. Further, the bounds on $\gamma$ and $\gamma'$ (or $\ell'$) are obtained under the assumptions that $d > 0$ and $d \geq k$, respectively.

\subsection{Bound on Local Helper Node Intra-cluster Repair Bandwidth, $\gamma$}

\begin{figure}
	\centering
	\includegraphics[height=2.3in]{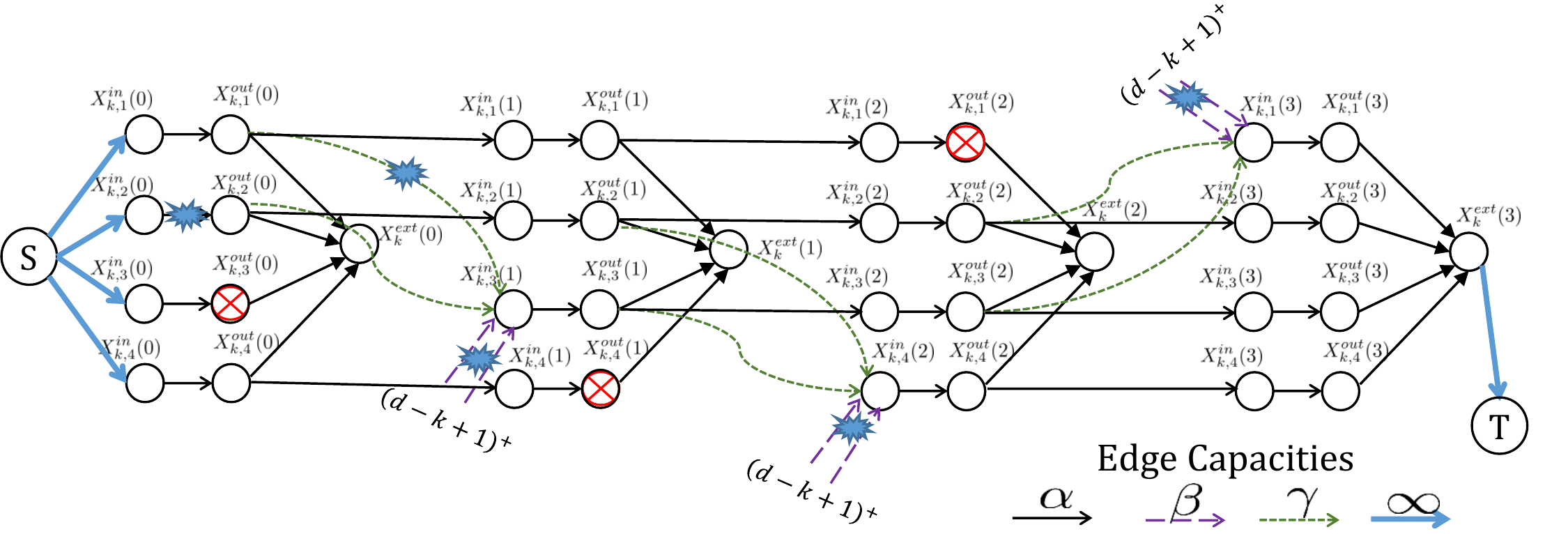}
	\caption{An illustration of the evolution of the $k$-th cluster of the information flow graph used in cut-set based lower bound for $\gamma$ in Theorem \ref{thm:local_BW}. In this figure, we assume that $m = 4, \ell = 2$. Nodes $3, 4, 1$ fail in this respective order. For the repair of node $3$, nodes $1$ and $2$ act as the local helper nodes. For the repair of the remaining two nodes, nodes $2$ and $3$ act as the local helper nodes. Also indicated is our choice of the $S$-$T$ cut used in the bound derivation.}
	\label{fig:cut_gamma}
	\vspace{-10pt}
\end{figure}

\begin{thm} \label{thm:local_BW}
	For an optimal functional repair generalized regenerating code with parameters  $\{(n, k, d > 0),$ $ (\alpha, \beta), (m, \ell)\}, \gamma' = \alpha, \ell' = m$, local helper node bandwidth $\gamma$ is lower-bounded by
	\begin{eqnarray}
		\gamma & \geq \gamma^* \triangleq & \alpha - (d-k+1)^+\beta.
	\end{eqnarray}
	Further, if there is a known upper bound on the number of repairs that occur over the life-time of the system, the above bound is sharp; i.e., the functional repair capacity of the system remains as $B^*$ as long as $\gamma \geq \gamma^*$.
\end{thm}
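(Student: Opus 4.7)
The plan is to prove both the converse and the achievability by working with the IFG model of Section~\ref{sec:IFG_inter} modified so that every local-helper edge carries capacity $\gamma$ rather than $\alpha$. Throughout, the non-trivial range is $(d-k+1)^{+}\beta < \alpha$, in which $\gamma^{*} > 0$; otherwise the bound is vacuous.

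For the converse, the plan is to exhibit a single IFG together with one $S$-$T$ cut whose value is exactly $B^{*} - (\gamma^{*} - \gamma)$; whenever $\gamma < \gamma^{*}$ this cut lies strictly below $B^{*}$, which forces the min-cut (and hence the functional-repair capacity) to be strictly smaller than $B^{*}$, contradicting optimality. For clusters $i = 1, \ldots, k-1$ I will recycle the standard $(m-\ell)$-failure pattern from the proof of Theorem~\ref{thm:file_size}, which contributes $\ell\alpha + (m-\ell)\min(\alpha,(d-i+1)^{+}\beta)$ per cluster. In cluster $k$ I will use the cascading pattern illustrated by Fig.~\ref{fig:cut_gamma}: fail $m-\ell+1$ nodes in total, starting with node $\ell+1$ using helpers $\{1, 2, \ldots, \ell\}$, then failing the remaining nodes of $\{\ell+2, \ldots, m, 1\}$ in sequence, each time using helpers $\{2, \ldots, \ell, (\ell+1)^{*}\}$. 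I will then place node $1$ on the source side throughout all states up to the one just before its own failure, place nodes $2, \ldots, \ell$ and every replacement (and every subsequent copy of a replacement) on the sink side. Nodes $2, \ldots, \ell$ then each contribute one state-$0$ in-to-out edge of weight $\alpha$; node $1$ contributes a \emph{single} helper edge of weight $\gamma$, because it serves as a local helper only for the first repair; and each of the $m - \ell + 1$ replacements contributes $(d - k + 1)^{+}\beta$ from the $(d - k + 1)^{+}$ source-side remote clusters. Tallying, the cluster-$k$ contribution is $(\ell - 1)\alpha + \gamma + (m - \ell + 1)(d - k + 1)^{+}\beta$, which a short algebraic check rewrites as $\ell\alpha + (m - \ell)(d - k + 1)^{+}\beta + (\gamma - \gamma^{*})$; summing with the standard contributions from the remaining $k-1$ clusters gives total cut value $B^{*} + \gamma - \gamma^{*}$, as required.

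For the achievability, I will repeat the cut lower-bound derivation in the converse of Theorem~\ref{thm:file_size}, with the sole modification that local-helper contributions now carry coefficient $\gamma$ instead of $\alpha$. The analogue of \eqref{eq:useinintra} reads
\begin{equation*}
\text{mincut}(S\text{-}T)\;\geq\;\sum_{i=1}^{k}\Bigl(a_i\alpha + \sum_{j=a_i+1}^{m}\min\bigl(\alpha,\,(\ell-j+1)^{+}\gamma + (d-(i-1))^{+}\beta\bigr)\Bigr).
\end{equation*}
The hypothesis $\gamma \geq \gamma^{*}$ gives $\gamma + (d-i+1)^{+}\beta \geq \gamma + (d-k+1)^{+}\beta \geq \alpha$ for every $i \leq k$, so each inner minimum with $j \leq \ell$ collapses to $\alpha$; the same simplification used in passing from \eqref{eq:useinintra} to \eqref{eq:useinintra1} then yields $\text{mincut} \geq B^{*}$ for every valid IFG. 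Invoking the Koetter--Medard multicast result~\cite{KoetterMedard} converts this into achievability of $B^{*}$ whenever a finite upper bound on the number of repairs is known.

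The main obstacle will be designing the cluster-$k$ failure pattern and verifying its cut. The construction has to simultaneously (i) push the replacement count to $m - \ell + 1$, (ii) leave a unique original local helper (node $1$) on the source side, and (iii) route every repair after the first through the sink-side replacement $(\ell + 1)^{*}$, so that only \emph{one} helper edge crosses from source to sink; naive patterns that keep several originals on the source side produce several $\gamma$-cuts and only deliver weaker bounds of the form $\gamma \geq \gamma^{*}/c$ for some $c > 1$. A careful accounting of the copies of the unfailed nodes $2, \ldots, \ell$ and of the chain of replacements is needed to confirm that no additional source-to-sink crossings arise along the way.
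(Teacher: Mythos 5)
Your proposal is correct and follows essentially the same approach as the paper. You reproduce the paper's key idea of using $m-\ell+1$ failures in cluster $k$ with the cascading helper pattern (node $1$ helps only the first repair, and the replacement of node $\ell+1$ substitutes for node $1$ in subsequent repairs) so that only a single $\gamma$-weighted edge crosses the cut; your cut value $B^{*}+\gamma-\gamma^{*}$ and your achievability argument via the modified version of \eqref{eq:useinintra} both coincide with the paper's. The only cosmetic difference is that the paper writes the cluster-$k$ replacement contribution as $(m-\ell+1)\min\{\alpha,(d-k+1)^{+}\beta\}$ rather than $(m-\ell+1)(d-k+1)^{+}\beta$, but these agree on the operating range $\alpha\geq(d-k+1)^{+}\beta$ that you correctly restrict to.
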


\begin{proof}
	For the bound, we consider a system evolution similar to that used in proof of Theorem \ref{thm:file_size}, and demonstrate a cut-set whose value depends on $\gamma$. The lower bound on $\gamma$ follows from  the observation that the value of this cut is necessarily lower bounded by $B^*$ for a capacity-achieving code. We shall then prove that, as long as $\gamma \geq \gamma^*$, the min-cut of any valid IFG is necessarily lower bounded by $B^*$; in this case, like in the proof of Theorem \ref{thm:file_size}, we know that the functional repair capacity remains as $B^*$, as long as there is a known upper bound on the number of repairs in the system. We start with the proof of the lower bound. Consider the same system evolution as in the proof of the bound in Theorem \ref{thm:file_size}, except for the $k$-th cluster accessed by the data collector. Thus, physical nodes $X_{i, \ell + 1}, X_{i,\ell+2}, \ldots X_{i,m}$ fail in this order in cluster $i = 1$, then in cluster $i=2$, and so on, until cluster $i=k-1$. Note that each of the first $k-1$ clusters experiences a total of $m - \ell$ node failures. For cluster $k$, we consider failure of $m - \ell + 1$ nodes, corresponding to physical nodes $X_{k, \ell + 1}, X_{k,\ell+2}, \ldots X_{k,m}, X_{k,1}$ in this respective order. In terms of the notation  introduced in \ref{sec:IFG_inter}, the sequence of failures in the $k^{\text{th}}$ cluster correspond to IFG nodes $X_{k, \ell + 1}(0), X_{k,\ell+2}(1), \ldots, X_{k,m}(m - \ell-1), X_{k,1}(m - \ell)$. For the repair of $X_{k, \ell + 1}(0)$, the local helper nodes used are $X_{k, 1}(0), X_{k, 2}(0), \ldots, X_{k, \ell}(0)$. For the repair of any of the remaining nodes $X_{k, (\ell + t) \mod (m)  + 1}(t), 1 \leq t \leq m-\ell$, the local helper nodes used are $X_{k, 2}(t), X_{k, 3}(t), \ldots, X_{k, \ell + 1}(t)$. Also, clusters $\mathcal{X}_1(m - \ell), \mathcal{X}_2(m - \ell), \ldots, \mathcal{X}_{\min(d, k - 1)}(m - \ell)$ are included in the set of remote clusters that aid in the repair of the $m - \ell + 1$ nodes in the $k^{\text{th}}$ cluster. An illustration of the IFG, for the $k^{\text{th}}$ cluster is shown in Fig. \ref{fig:cut_gamma}. Note in this figure that the edges corresponding to local help have capacity $\gamma$.
	
	Let data collector $T$ connect to clusters $\mathcal{X}_1(m-\ell), \ldots, \mathcal{X}_{k-1}(m-\ell), \mathcal{X}_{k}(m-\ell+1)$. Consider an $S$-$T$ cut in the IFG that partitions the graph nodes in clusters $1,\cdots,k-1$ in the same way as in \Cref{thm:file_size}; however it differs in the way the nodes of cluster $k$ are partitioned. The overall set of edges in the cut-set is given as below: \newline
	\emph{Clusters $1, \ldots, k-1$}:
	\begin{itemize}
		\item $\{(X_{i,j}^{in}(0) \to X_{i,j}^{out}(0), i\in [k-1], j\in [\ell]\}$.  Total capacity of these edges is $(k-1)l\alpha$.
		\item For each $ i\in [k-1], t \in [m-\ell]$, either the set of edges $\{(X_{i'}^{ext}(0) \to X_{i,\ell+ t}^{in}(t)), i'\in \{$remote helper cluster indices for the replacement node $X_{i,\ell+ t}^{in}(t) \} \backslash [\min\{i-1, d\}]$ or the  edge $ (X_{i,\ell + t}^{in}(t) \to  X_{i,\ell+t}^{out}(t))$. Between the two possibilities, we pick the one which has smaller sum-capacity. In this case, the total capacity of this part of the cut is given by $\sum_{i=1}^{k-1} \sum_{j=\ell+1}^{m} \min \{\alpha, (d-\min\{i-1, d\})\beta \} = (m - \ell)\sum_{i=1}^{k-1} \min \{\alpha, (d-i+1)^+\beta \}$.
	\end{itemize}
	\emph{Cluster $k$}:
	\begin{itemize}
		\item $(X_{k,1}^{out}(0)\to {X}^{in}_{k,\ell+1}(1))$ of capacity $\gamma$.
		\item $(X_{k,j}^{in}(0) \to X_{k,j}^{out}(0)), \forall j\in [2,\ell]$. 		Total capacity of these edges is $(\ell-1)\alpha$.
		\item Either the set of edges $\{(X_{i'}^{ext}(0) \to X_{k,(\ell + t) \mod (m)  + 1)}^{in}(t + 1)), i'\in \{$remote helper cluster indices for the replacement node $X_{k,(\ell + t) \mod (m)  + 1)}^{in}(t + 1)\} \backslash [\min\{i-1, d\}], 0 \leq t \leq (m-\ell)\}$ or the set of edges
		$ \{ (X_{k,(\ell + t) \mod (m)   + 1)}^{in}(t + 1) \to  X_{k,(\ell + t) \mod (m)   + 1)}^{out}(t + 1)), 0 \leq t \leq (m -\ell) \}$. Among the two sets, we pick the one which has smaller sum-capacity. In this case, the total capacity of these edges is $(m-\ell+1)\min\{\alpha, (d-k+1)^+ \beta\}$.
	\end{itemize}
	The value (say, $C_{cut}$) of the cut is given by
	\begin{align*}
		C_{cut}&=(k-1)\ell \alpha + (m - \ell)\sum_{i=0}^{k-2} \min \{\alpha, (d-i)^+\beta \}  + \gamma
		+ (\ell-1)\alpha + (m-l+1) \min\{\alpha, (d-k+1)^+ \beta\} \\
		&= k\ell \alpha + (m - \ell)\sum_{i=0}^{k-1} \min \{\alpha, (d-i)^+\beta \} -\alpha + \min\{\alpha, (d-k+1)^+ \beta\} + \gamma \\
		&= B^* -\alpha + \min\{\alpha, (d-k+1)^+ \beta\} + \gamma.
	\end{align*}
	Since we assume an optimal code, it must be true that $C_{cut} \geq B^*$, which results in $\gamma \geq \alpha - \min\{ \alpha, (d-k+1)^+\beta\}$. Finally, note that
	$\alpha - \min\{ \alpha, (d-k+1)^+\beta\} = \gamma^*$ in the two cases $d < k$, and $d \geq k$. The equivalence for the case $d \geq k$ follows since we only consider $\alpha \geq (d-k+1)\beta$, when $d \geq k$ (see Section \ref{sec:tradeoff}).
	
	We next prove the converse, where we show that,  as long as $\gamma \geq \gamma^*$, the min-cut of any valid IFG is necessarily lower bounded by $B^*$. Toward this, consider the proof of converse part of Theorem \ref{thm:file_size}, where we obtained a lower bound on the min-cut of any valid IFG. One can repeat the sequence of arguments exactly as in the proof of converse part of Theorem \ref{thm:file_size}, except with the change that the edges corresponding to local help have capacity $\gamma$ (instead of $\alpha$).  In this case, it can be seen that instead of \eqref{eq:useinintra}, we get the following lower bound on min-cut:
	\begin{eqnarray} \label{eq:gamma_1}
		\text{mincut}(S-T) & \geq & \sum_{i = 1}^{k} \left( a_i\alpha  + \sum_{j = a_i + 1}^{m}\min(\alpha, (\ell - j+1)^+\gamma + (d - (i-1))^+\beta) \right)
	\end{eqnarray}
	In the above expression, observe that if $\gamma \geq \gamma^*$, we have
	\begin{eqnarray}
		(\ell - j+1)^+\gamma + (d - (i-1))^+\beta & \geq & \alpha,
	\end{eqnarray}
	whenever $j \leq \ell, i \leq k$. In this case, it follows  that \eqref{eq:gamma_1} can be written as \eqref{eq:useinintra1}. It is then clear that  $\text{mincut}(S-T)$  is indeed lower bounded by $B^*$ as long as $\gamma \geq \gamma^*$. This completes the proof of the converse, and also the theorem.
\end{proof}

\subsection{Bounds on $\ell', \gamma'$}

In this section, we provide bounds on the parameters $\gamma'$ and $\ell'$. We use the second IFG model in Section \ref{sec:IFG_intra} here. Recall that in our setting, for any of the remote helper clusters, we allow any subset of $\ell'$ nodes in the cluster to be used to generate the $\beta$ symbols contributed by the cluster. Also, in this section, we make the assumption that the number of remote helper clusters $d \geq k$.

\begin{figure*}
	\centering
	\includegraphics[height=3in]{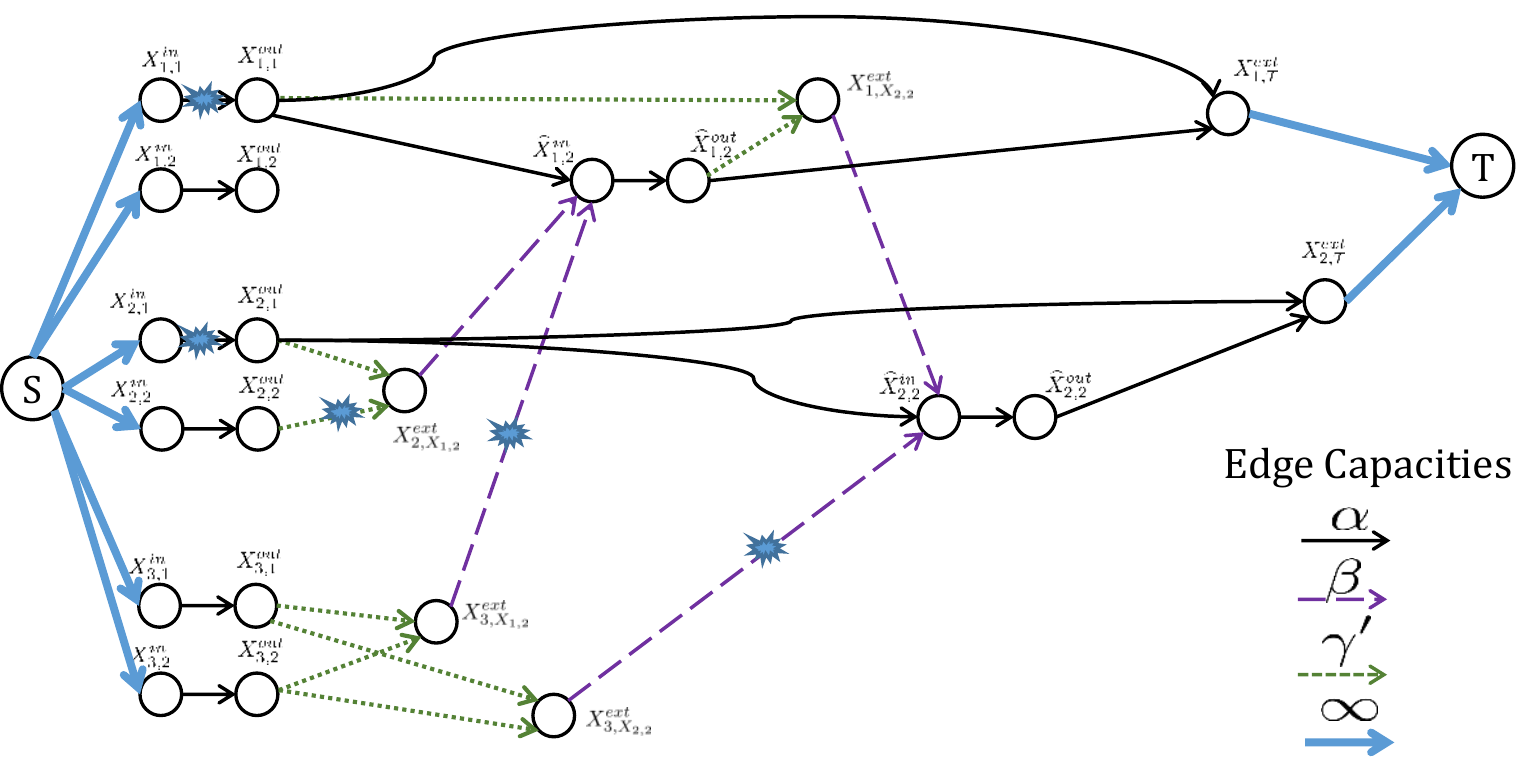}
	\caption{An illustration of the IFG used in cut-set based lower bound for $\gamma'$ in Theorem \ref{thm:helper_BW}. In this example, we assume $(n = 3, k = 2, d =3)(m = 2, \ell = 1)(\ell' = 2, \gamma = \alpha)$. The second node fails in clusters $1$ and $2$ in the respective order. Also indicated is our choice of the $S$-$T$ cut used in the bound derivation. }
	\label{fig:cut_gamma_prime}
	\vspace{-10pt}
\end{figure*}

\vspace{0.1in}

\begin{thm} \label{thm:helper_BW}
	For an optimal functional repair generalized regenerating code with parameters  $\{(n, k, d \geq k)$ $ (\alpha, \beta), (m, \ell)\}, \gamma = \alpha, \ell' = m$, the remote helper-node repair bandwidth $\gamma'$ is lower-bounded by
	\begin{eqnarray}
		\gamma' & \geq & \beta/(m-\ell), \label{eq:g'bound} \\
	\end{eqnarray}
	whenever $\alpha\geq(d-k+2)\beta$.
\end{thm}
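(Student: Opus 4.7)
The plan is to extend the cut-set style of \Cref{thm:file_size} and \Cref{thm:local_BW} to the second IFG model of \Cref{sec:IFG_intra}, and to exhibit a single $S$--$T$ cut whose value makes $\gamma'$ appear explicitly. Since any valid IFG for an optimal code must have min-cut at least $B^*$, inequalities on the value of the cut we construct will translate into lower bounds on $\gamma'$. The condition $\alpha \geq (d-k+2)\beta$ will be used to keep $(d-k)\beta+(m-\ell)\gamma'$ strictly below $\alpha$ in the relevant range of $\gamma'$, so that min-terms in the cut always pick the $\beta$-expression rather than $\alpha$; this is the exact analogue of how \Cref{thm:local_BW} uses $d\geq k$ to collapse $(d-k+1)^+\beta$ to $(d-k+1)\beta$.

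First I would reuse the failure sequence from \Cref{thm:file_size} on the DC clusters $1,\ldots,k$: in each DC cluster, $m-\ell$ node failures occur in order, and the replacements draw local helpers from the first $\ell$ nodes. I would then arrange for a fixed out-of-DC cluster, say $c=k+1$, to act as one of the $d-k+1$ remote helpers for \emph{every} repair in the $k$-th DC cluster; each such use creates a fresh external helper node $X_{c,\widehat{X}_{k,\cdot}}^{ext}$ whose incoming edges have capacity $\gamma'$ and connect to all $\ell'=m$ active out-nodes of cluster $c$.

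Next I would construct the cut. On clusters $1,\ldots,k-1$ it would be identical to the \Cref{thm:file_size} cut and contribute $(k-1)\ell\alpha+(m-\ell)\sum_{i=0}^{k-2}\min(\alpha,(d-i)\beta)$. In cluster $k$ I would pay $\ell\alpha$ for the in-out edges of the first $\ell$ originals; but for each of the $m-\ell$ replacements, instead of cutting the $\beta$ edge out of $X_{c,\widehat{X}_{k,\cdot}}^{ext}$, I would place that external helper on the sink side. Finally, in cluster $c$ I would place $\ell$ of the $m$ out-nodes on the sink side by cutting their in-out edges at total cost $\ell\alpha$. With this placement, the $\beta$ output edges of the $X_{c,\widehat{X}_{k,\cdot}}^{ext}$'s become sink-to-sink (not cut), the $\gamma'$ edges from the $\ell$ sink-side out-nodes of cluster $c$ are sink-to-sink (not cut), and only the $(m-\ell)$ edges from the source-side out-nodes of cluster $c$ contribute, giving $(m-\ell)\gamma'$ in place of $\beta$ for each of the $m-\ell$ repairs in cluster $k$ that use cluster $c$. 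Equating the total cut value with $B^*$ from above and cancelling common terms leaves $\ell\alpha+(m-\ell)(m-\ell)\gamma'\ \geq\ (m-\ell)\beta$; amortizing the one-time setup cost $\ell\alpha$ by replicating the role of cluster $c$ across many repairs (which the functional-repair model allows, since an arbitrary number of failures/repairs can be scheduled) drives the $\ell\alpha$ term to be negligible compared to the cumulative savings, and yields $(m-\ell)\gamma'\geq \beta$.

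The main obstacle is the amortization step. A single use of cluster $c$ produces a bound that falls short of the stated inequality by an $\ell\alpha/(m-\ell)$ term, and naively increasing the number of failures in cluster $k$ also inflates the baseline contribution $(d-k)\beta$ per repair, threatening the bookkeeping. The role of $\alpha\geq(d-k+2)\beta$ in this step is to guarantee that each additional repair in cluster $k$ contributes exactly $(d-k)\beta+(m-\ell)\gamma'$ (not $\alpha$) to the modified cut, so that the per-use saving is precisely $\beta-(m-\ell)\gamma'$ and no more; with that assurance, iterating the modified external-helper argument and passing to the limit tightens the bound to $\gamma'\geq\beta/(m-\ell)$. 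Carrying out this bookkeeping rigorously, and in particular verifying that the chosen cut remains minimal among the admissible cuts throughout the evolution, is the technical heart of the proof.
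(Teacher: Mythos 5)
You correctly identified the second IFG model and the cut-set strategy, but the amortization step (which you flag as the main obstacle) is a genuine gap that cannot be closed within this scheme. Adding more repairs that route through cluster $c$ does not shrink the cut's deficit: each extra replacement node is accessed neither by the data collector (only the newest node versions matter) nor by any later node in your schedule, so it, together with its fresh external helper node $X^{ext}_{c,\cdot}$, can be placed entirely on the source side of the $S$-$T$ cut at zero incremental cost. Since you want the cut value to be as small as possible (to force $\gamma'$ up), you would always make that placement, so the number of ``paying'' repairs stays bounded and the one-time $\ell\alpha$ setup cost never gets divided away. Concretely, the best your construction gives is $\gamma' \geq \beta/(m-\ell) - \ell\alpha/(m-\ell)^2$, which is strictly weaker than the claim and is vacuous once $\ell\alpha \geq (m-\ell)\beta$.

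The paper's proof never introduces a spare out-of-DC cluster and hence never incurs an extra $\ell\alpha$. Instead, for the repair of nodes in cluster $i$, the remote helpers in DC clusters $i+1,\ldots,k$ are exploited: those clusters already have their first $\ell$ in-out edges in the cut (that $\ell\alpha$ is paid for the data-collection part of the argument regardless), so placing the external helper node $X^{ext}_{i',\cdot}$ for $i'\in\{i+1,\ldots,k\}$ on the sink side is free, leaving only the $m-\ell$ source-side out-node edges of capacity $\gamma'$ each to cut (the set $A_2(i)$ in the paper). This accumulates a saving of $(k-i)\bigl(\beta-(m-\ell)\gamma'\bigr)$ per repair in cluster $i$, and summing over $i=1,\ldots,k$ and comparing term-by-term against $B^*$, the hypothesis $(m-\ell)\gamma'<\beta$ gives a strict inequality exactly at $i=k-1$ (this is where $\alpha\geq(d-k+2)\beta$ is actually needed) and non-strict ones elsewhere, producing the contradiction in a single failure sequence with no amortization and no auxiliary cluster.
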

\begin{proof}
	We consider data collection from clusters $1$ to $k$. Before data collection, the system experiences $k(m - \ell)$ repairs. Nodes $\ell+1, \ldots, m$ fail and get repaired in cluster $1$ in this respective order. This is followed by failure and repair of nodes  $\ell+1, \ldots, m$ in cluster $2$, and so on, until we consider failure and repair of nodes  $\ell+1, \ldots, m$ in cluster $k$. In terms of physical nodes, it may be noted that this is the same sequence of failures that was considered in the proof of Theorem \ref{thm:file_size}; however, in here, we will impose additional restrictions on the choice of the remote helper clusters. In this proof, external help is taken from the set of the first $d+1$ clusters, excluding the cluster where the failed node resides. Thus, for the repair of $X_{i, j}$, the indices of remote helper clusters are $\{1, \ldots, i-1, i+1, \ldots, k, k+1, \ldots, d+1\}$. The choice of  local helper nodes remain same as in the proof of Theorem \ref{thm:file_size}, where we used the first $\ell$ nodes in the cluster. An illustration of the IFG is shown in \Cref{fig:cut_gamma_prime}.
	
	It can be seen that the following cut-set separates the source from the data collector:
	\begin{itemize}
		\item $\{(X_{i,j}^{in} \to X_{i,j}^{out}), i\in [k], j\in [\ell]\}$. Total capacity of these edges is $k\ell\alpha$.
		\item For each $i, 1 \leq i \leq k$, the edge set with smaller capacity out of $A_1(i) \cup A_2(i)$ and $A_3(i)$ where
		\begin{itemize}
			\item $A_1(i) \triangleq \{(X_{i'}^{ext} \to \widehat{X}_{i,j}^{in}), i'\in [k+1,d+1],  j\in[\ell+1,m]\}$. Total capacity of edges in $A_1(i)$ is $(d-k+1)(m-\ell)\beta$. Recall that $(\widehat{X}_{i,j}^{in}, \widehat{X}_{i,j}^{out})$ is simply the replacement node for the failed node  $({X}_{i,j}^{in}, {X}_{i,j}^{out})$,  in the second IFG model that is used here.
			\item $A_2(i) \triangleq \{(X_{i',j'}^{out} \to X_{i',X_{i,j}}^{ext}), j \in [\ell+1,m],  i' \in [i+1,k], j' \in [\ell+1,m]\}$. Total capacity of edges in $A_2(i)$ is $(m-\ell)(k-i)(m-\ell)\gamma'$
			\item $A_3(i) \triangleq \{(\widehat{X}_{i,j}^{in}\to  \widehat{X}_{i,j}^{out}), j\in [\ell+1,m]\}$. Total capacity of edges in $A_3(i)$ is $(m-\ell)\alpha$.
		\end{itemize}
	\end{itemize}
	The capacity of the cut-set is given by
	\begin{eqnarray}
		C_{cut}& = & k\ell\alpha \ + \ (m-\ell)\sum_{i=1}^{k} \min\{\alpha, (d-k+1)\beta+(k-i)(m-\ell)\gamma'\}.
	\end{eqnarray}
	Since we consider optimal codes, we have
	\begin{eqnarray}
		C_{cut} & \geq & B^* \ = \  k\ell\alpha + (m-\ell)\sum_{i=1}^{k} \min\{\alpha, (d-i+1)\beta\}.
	\end{eqnarray}
	In this case, since we assume that $\alpha\geq(d-k+2)\beta$, we claim that
	\begin{eqnarray} \label{eq:gammadash}
		\gamma' & \geq  &\beta/(m-\ell).
	\end{eqnarray}
	To see why \eqref{eq:gammadash} is true, if we suppose on the contrary that $(m - \ell)\gamma' < \beta$, we have
	\begin{eqnarray}
		(d-k+1)\beta+(k-i)(m-\ell)\gamma' & < & (d-i+1)\beta, \ 1 \leq i \leq k - 1.
	\end{eqnarray}
	The above equation implies that
	\begin{eqnarray}
		\min\{\alpha, (d-k+1)\beta+(k-i)(m-\ell)\gamma'\} & \leq & \min\{\alpha, (d-i+1)\beta\}, 1 \leq i \leq  k - 2  \label{eq:gammadash_0}\\
		\min\{\alpha, (d-k+1)\beta+(k-i)(m-\ell)\gamma'\} & < & \min\{\alpha, (d-i+1)\beta\}, i =  k - 1, \label{eq:gammadash_1}
	\end{eqnarray}
	where \eqref{eq:gammadash_1} follows from the assumption that $\alpha\geq(d-k+2)\beta$. Clearly, adding up the two corresponding sides (L.HS. and R.H.S.) of \eqref{eq:gammadash_0}  and \eqref{eq:gammadash_1}, and comparing them contradicts the fact that $C_{cut}  \geq  B^* $. Thus, it must be true that $\gamma'  \geq  \beta/(m-\ell)$, whenever $\alpha\geq(d-k+2)\beta$.
\end{proof}

\vspace{0.1in}

The following theorem establishes the necessary condition on $\ell'$ for optimal codes. The proof is along the lines of proof of Theorem \ref{thm:helper_BW}, and is omitted.

\vspace{0.1in}

\begin{thm} \label{thm:elldash}
	For an optimal functional repair GRC with parameters  $\{(n, k, d \geq k) $ $(\alpha, \beta), (m, \ell)\}, \gamma = \gamma' = \alpha$, whenever $\alpha\geq(d-k+2)\beta$, each remote helper cluster must necessarily access all the $m$ nodes in the cluster while generating the $\beta$ symbols; i.e., whenever $\alpha\geq(d-k+2)\beta$, we have $\ell' = m$ for an optimal functional repair GRC.
\end{thm}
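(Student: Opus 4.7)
The plan is to mimic the proof of Theorem~\ref{thm:helper_BW}, replacing the role of $\gamma'$ with $\ell'$. First I would set up the identical system evolution used there: nodes $\ell+1,\ldots,m$ fail sequentially in clusters $1,2,\ldots,k$; for each failure $X_{i,j}$ the local helpers are $X_{i,1},\ldots,X_{i,\ell}$ and the remote helpers are the clusters indexed by $\{1,\ldots,d+1\}\setminus\{i\}$; and the data collector $T$ accesses clusters $1,\ldots,k$ through their dedicated external nodes. This gives me the same skeleton IFG as in Fig.~\ref{fig:cut_gamma_prime}, the only change being that now $\gamma'=\alpha$ is unrestricted and $\ell'<m$ is the quantity I wish to rule out.

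Next I would choose the $\ell'$-subset used by each external helper adversarially (which is legitimate since the code is required to work for \emph{any} choice of $\ell'$ active out-nodes). For an external helper $X_{i',X_{i,j}}^{ext}$ with $i'\in[i+1,k]$, the active out-nodes of cluster $i'$ split into $\ell$ sink-side originals (whose in-to-out edges are cut) and $m-\ell$ source-side originals (not yet failed at that point). Picking the subset with the maximum number of sink-side elements leaves exactly $(\ell'-\ell)^{+}$ source-side inputs feeding the external helper, so the analog of $A_2(i)$ contributes $(\ell'-\ell)^{+}\gamma'=(\ell'-\ell)^{+}\alpha$ per external helper if it is placed on the sink side, versus $\beta$ if placed on the source side. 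Keeping the same $A_1$ and $A_3$ as in Theorem~\ref{thm:helper_BW}, the resulting cut value is
\begin{equation*}
C_{cut}=k\ell\alpha+(m-\ell)\sum_{i=1}^{k}\min\bigl\{\alpha,\,(d-k+1)\beta+(k-i)(\ell'-\ell)^{+}\alpha\bigr\}.
\end{equation*}

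Finally, optimality of the code forces $C_{cut}\geq B^{*}$. Assuming $\ell'<m$ and $\alpha\geq(d-k+2)\beta$, I would argue by contradiction: inserting the upper bound $(\ell'-\ell)^{+}\leq m-\ell-1$ into each summand and comparing term-by-term with $\sum_{i=0}^{k-1}\min\{\alpha,(d-i)\beta\}$ in $B^{*}$, exactly as in the displayed chain \eqref{eq:gammadash_0}--\eqref{eq:gammadash_1} of Theorem~\ref{thm:helper_BW}, each term of $C_{cut}$ is majorized by the corresponding term in $B^{*}$ with at least one strict inequality (coming from the $i=k-1$ slot, which is where the hypothesis $\alpha\geq(d-k+2)\beta$ is used). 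Summing these inequalities yields $C_{cut}<B^{*}$, a contradiction, and hence $\ell'=m$.

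The main obstacle will be the strict-inequality bookkeeping in the final comparison: the weak inequality $C_{cut}\leq B^{*}$ holds for every $\ell'\in(\ell,m]$, so the delicate step is to identify the specific summand at which the assumption $\ell'<m$ forces a strict drop (and to do so uniformly on the permissible range $\alpha\geq(d-k+2)\beta$). This mirrors how the $i=k-1$ term carried the strictness in the proof of Theorem~\ref{thm:helper_BW}, and I expect the same slot, combined with the $(\ell'-\ell)^{+}\leq m-\ell-1$ substitution, to close the argument cleanly.
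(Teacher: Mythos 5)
Your plan closes only the sub-case $\ell'\leq\ell$; whenever $\ell<m-1$, the remaining range $\ell<\ell'<m$ is not ruled out. The adversarial choice you make for the $\ell'$-subsets --- maximizing, for each external helper, the number of sink-side out-nodes it reads --- goes the wrong way once $\ell'>\ell$. With $\gamma'=\alpha$, placing an external $X_{i',X_{i,j}}^{ext}$ ($i'\in[i+1,k]$) on the sink side then costs $(\ell'-\ell)^{+}\alpha\geq\alpha\geq\beta$, so a minimal $S$--$T$ cut would instead keep that external on the source side and sever only its single $\beta$-edge into $\widehat{X}_{i,j}^{in}$. Taking the cheaper alternative per $i'$, the actual min-cut of your IFG for $\ell'>\ell$ collapses back to
\[
k\ell\alpha+(m-\ell)\sum_{i=1}^{k}\min\bigl\{\alpha,(d-i+1)\beta\bigr\}=B^{*},
\]
giving no contradiction. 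The specific cut value $C_{cut}$ you wrote down is in fact \emph{larger}: for $i<k$ and $\ell'>\ell$, the inner argument $(d-k+1)\beta+(k-i)(\ell'-\ell)^{+}\alpha$ is $\geq\alpha$, so $C_{cut}$ majorizes $B^{*}$ term by term --- the opposite direction of what you need. The strict drop at the $i=k-1$ slot in Theorem~\ref{thm:helper_BW} came from $(m-\ell)\gamma'<\beta$; its analogue here, $(\ell'-\ell)^{+}\gamma'=(\ell'-\ell)^{+}\alpha$, is never below $\beta$ once $\ell'>\ell$, so the strictness you hope to invoke never materializes.

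The missing idea is a different adversarial routing: have \emph{every} external node of a fixed helper cluster read the \emph{same} $\ell'$ out-nodes, rather than spreading the reads to minimize per-external source-side inputs. Funneling all helper data of a cluster through a fixed $\ell'$-subset creates a bottleneck: cutting those $\ell'$ in-to-out edges once (cost $\ell'\alpha$) places all of that cluster's externals on the sink side simultaneously, and the $\beta$-edges they feed drop out of the cut all at once. For $(n,m,\ell,k,d,\ell')=(3,2,0,2,2,1)$ and $\alpha=2\beta$ (so $B^{*}=6\beta$): fail $X_{1,1},X_{1,2},X_{2,1},X_{2,2}$ in order with remote helpers $\{2,3\},\{2,3\},\{1,3\},\{1,3\}$; let each of cluster $3$'s four externals read only $X_{3,2}$, cluster $2$'s two externals read only $X_{2,2}$, and cluster $1$'s two externals read only $\widehat{X}_{1,2}$. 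Then the cut consisting of the single edge $(X_{3,2}^{in}\to X_{3,2}^{out})$ together with the two $\beta$-edges from cluster $2$'s externals into $\widehat{X}_{1,1}^{in}$ and $\widehat{X}_{1,2}^{in}$ has value $\alpha+2\beta=4\beta<B^{*}$, whereas your construction on these parameters evaluates to exactly $6\beta$. It is this bottleneck routing, not per-external sink-side maximization, that forces $\ell'=m$.
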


\section{Security Under Passive Eavesdropping} \label{sec:security}

In this section, we analyze resilience of the clustered storage system against passive eavesdropping. Our model of clustered storage systems is in part motivated by the need to provide security against an eavesdropper who may gain access to a subset of the clusters. In this context, we extend result in Theorem \ref{thm:file_size} to settings that require security. Below, we first introduce the model for security, and then present the revised file bound. Optimal code construction for security can be provided along the  same lines of Construction \ref{constr:exact}.

\subsection{Passive Eavesdropper Model}

The security model is along the lines of the passive eavesdropper model considered in \cite{pawar_security}, where the authors study security under the classical regenerating code framework. An eavesdropper (say, Eve) gains access to the entire content of any subset of $e$ clusters, where $1 \leq e \leq k$. Eve also gets to observe all the helper data that gets downloaded for repair of any node in these $e$ clusters. Eve is passive in the sense that Eve does not change any stored or repair data.  The properties of data collection and disk repair remain same as in the case of no eavesdropper (see Section \ref{sec:sys_model}). In this model we $1)$ ignore the effects of intra-cluster bandwidth, and $2)$ restrict ourselves to the setting of deterministic exact repair codes. By deterministic exact repair code, we mean that the helper data for the repair of any node is uniquely determined given the indices of the failed node, local helper nodes and remote helper clusters. We avoid the possibility that the same set of helpers can pass two possible sets of helper data for the repair of the same node.

We wish to store a file such that Eve does not gain any information about it, by having eavesdropped into any subset $E$ of $e$ clusters. To be precise, let $\mathcal{F}^{(s)}$ denote the random variable corresponding to the data file that gets securely stored. We assume the file $\mathcal{F}^{(s)}$ to be uniformly distributed over  $\mathbb{F}_q^{B^{(s)}}$, and thus $B^{(s)}$ denotes the file-size. We wish to ensure that the mutual information $\mathcal{I}(\mathcal{F}^{(s)}; \text{data observed by Eve}) = 0$. Note that the data observed by Eve not only includes the content of the $E$ clusters, but also any inter-cluster helper data that is received toward the repair of nodes in these clusters. We shall write $\mathcal{C}^{(s)}_m$ to denote a secure generalized regenerating code, and its parameter set will be identified with $\{(n, k, d),(\alpha, \beta),(m, \ell),(e)$\}.

\subsection{File Size Under Exact Repair}

In this section, we obtain an upper bound on the file-size $B^{(s)}$ of the exact repair secure generalized regenerating code $\mathcal{C}^{(s)}_m$. To derive the bound, we use information theoretic techniques similar to those used  in \cite{rbt}, \cite{pawar_security}. We begin with some necessary notation. Let $Y_{i, j} \in \mathbb{F}_q^{\alpha}, 1 \leq i \leq n, 1 \leq j \leq m$ denote the content stored in node $j$ of cluster $i$. We write $\bf{Y}_i$ to denote $[Y_{i,1} \ldots Y_{i,m}], 1 \leq i \leq n$. The property of data collection requires that
\begin{eqnarray} \label{eq:data_collect}
	H\left(\mathcal{F}^{(s)} | \{{ \bf Y}_i, i \in S \}\right) & = & 0 \ \forall S \subset [n], |S| = k,
\end{eqnarray}
where $H(.)$ denotes the entropy function computed with respect to $\log q$. Next, consider the repair of node $j$ in cluster $i$. Let $\mathcal{H} \subset [n]\backslash\{i\}, |\mathcal{H}| = d$, and $\mathcal{L} \subset [m]\backslash\{j\}, |\mathcal{L}| = \ell$ respectively denote the indices of remote helper clusters and local helper nodes that aid in the repair process. Let $Z_{i', i, j}^{\mathcal{H}, \mathcal{L}}$ denote helper data passed by cluster $i'$. Recall our assumption that the exact repair code is deterministic, and thus $Z_{i', i, j}^{\mathcal{H}, \mathcal{L}}$ is uniquely determined as a function of $(i, j), \{Y_{i, j'}, j' \in \mathcal{L}\}$ and $\{Y_{i'}, i' \in \mathcal{H}\}$. The property of exact repair is jointly characterized by the following set of inequalities:
\begin{eqnarray}
	H\left(Z_{i', i, j}^{\mathcal{H}, \mathcal{L}} | \bf{Y}_{i'} \right) & = & 0, \label{eq:exact_prop1}\\
	H\left(Z_{i', i, j}^{\mathcal{H}, \mathcal{L}}\right) & \leq & \beta, \label{eq:exact_prop2} \\
	H\left(Y_{i, j}| \{Z_{i', i, j}^{\mathcal{H}, \mathcal{L}}, Y_{i, j'}, i' \in \mathcal{H},  j' \in \mathcal{L}\}\right)& = & 0, \ \forall \mathcal{H}  \subset [n]\backslash \{i'\}, |\mathcal{H}| = d, \forall \mathcal{L} \subset [m]\backslash \{j'\}, |\mathcal{H}| = \ell. \label{eq:exact_rep}
\end{eqnarray}
Next, define $U_i$ to denote the collection of all the inter-cluster helper data ever received toward the repair of nodes in cluster $i$, i.e.,
\begin{eqnarray}
	U_i & = & \{Z_{i', i, j}^{\mathcal{H}, \mathcal{L}}, \ \text{for all choices of } i', j, \mathcal{H}, \mathcal{L}\}.
\end{eqnarray}
The property of being secure against the passive eavesdropper Eve is equivalent to saying that
\begin{eqnarray}
	\mathcal{I}(\mathcal{F}^{(s)}; \{{\bf Y}_i, U_i, \ i \in E\}) & = & 0, \ \forall E \subset [n], |E| = e. \label{eq:secure_prop}
\end{eqnarray}

The following theorem characterizes the file-size bound under passive eavesdropping.

\vspace{0.1in}

\begin{thm} \label{thm:secure}
	The file-size of a secure exact-repair deterministic generalized regenerating code having parameters $\{(n, k,d),(\alpha, \beta),(m, \ell), (e)\}$ is upper bounded by
	\begin{eqnarray}\label{eq:file_size_secure_rep}
		B^{(s)} & \leq & \ell (k-e)\alpha + (m - \ell)\sum_{i=e}^{k-1} \min \{\alpha, (d-i)^+\beta \}.
	\end{eqnarray}
\end{thm}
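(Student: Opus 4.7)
The plan is to adapt the information-theoretic converse of Pawar \emph{et al.}~\cite{pawar_security} for classical secure regenerating codes, combining the perfect-secrecy condition, the data-collection condition, and the deterministic exact-repair structure. The core idea is to use secrecy to eliminate the eavesdropper's contribution to the file entropy, then use data collection to convert the remainder into the joint entropy of $k-e$ ``new'' cluster contents, and finally decompose this via the chain rule and bound each new cluster's marginal entropy using the exact-repair relations \eqref{eq:exact_prop1}--\eqref{eq:exact_rep}.

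Concretely, I would fix an eavesdropped set $E \subset [n]$ with $|E| = e$ and a data-collection set $S \supset E$ with $|S| = k$. Writing $\mathcal{E} = \{{\bf Y}_i, U_i : i \in E\}$, the secrecy condition \eqref{eq:secure_prop} gives $B^{(s)} = H(\mathcal{F}^{(s)}) = H(\mathcal{F}^{(s)} \mid \mathcal{E})$. Adding $\{{\bf Y}_i : i \in S \setminus E\}$ to the conditioning and invoking \eqref{eq:data_collect} to kill the file-given-all-$k$-clusters term yields
\[
B^{(s)} \ \leq \ H\bigl(\{{\bf Y}_i : i \in S \setminus E\} \ \big| \ \mathcal{E}\bigr).
\]
Enumerating $S \setminus E = \{j_1, \ldots, j_{k-e}\}$ and expanding by the chain rule rewrites this bound as $\sum_{t=1}^{k-e} H({\bf Y}_{j_t} \mid {\bf Y}_{j_1}, \ldots, {\bf Y}_{j_{t-1}}, \mathcal{E})$.

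To bound the $t$-th summand, I would partition the $m$ nodes of cluster $j_t$ into two groups. The first $\ell$ nodes trivially contribute at most $\ell \alpha$. For each of the remaining $m - \ell$ nodes $(j_t, j')$ with $j' > \ell$, the exact-repair property \eqref{eq:exact_rep}, applied with local helpers $\mathcal{L} = \{1, \ldots, \ell\}$ and $\mathcal{H} \subset [n] \setminus \{j_t\}$ chosen to maximize overlap with the $e + t - 1$ clusters already in the conditioning set, shows that $Y_{j_t, j'}$ is a function of the local helpers together with $d$ remote symbols $Z_{i', j_t, j'}^{\mathcal{H}, \mathcal{L}}$. For $i' \in E \cup \{j_1, \ldots, j_{t-1}\}$, property \eqref{eq:exact_prop1} makes these symbols deterministic functions of the already-known ${\bf Y}_{i'}$, so they add zero entropy; the remaining $(d - e - t + 1)^+$ helper symbols each contribute at most $\beta$ by \eqref{eq:exact_prop2}. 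Thus each of the $m-\ell$ residual nodes adds at most $\min\{\alpha, (d - e - t + 1)^+ \beta\}$, giving $H({\bf Y}_{j_t} \mid \cdots) \leq \ell \alpha + (m - \ell)\min\{\alpha, (d - e - t + 1)^+ \beta\}$. Summing over $t$ and reindexing by $i = e + t - 1$ produces exactly \eqref{eq:file_size_secure_rep}.

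The main technical care is the bookkeeping for the helper-cluster choice at each step of the chain rule: one must verify that a set $\mathcal{H} \subset [n] \setminus \{j_t\}$ of size $d$ containing $\min(d, e+t-1)$ already-conditioned clusters always exists, which reduces to the standing requirement $d \leq n-1$. A secondary subtlety worth making explicit is that \eqref{eq:exact_rep} must be applied even though node $(j_t, j')$ has not actually failed in the system's evolution: this is legitimate because \eqref{eq:exact_rep} is a structural property of the code, asserting for every admissible $(i, j, \mathcal{H}, \mathcal{L})$ that the original symbol $Y_{i,j}$ is recoverable from the helpers that would regenerate it, which is precisely what exact repair guarantees.
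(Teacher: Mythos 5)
Your proposal is correct and follows essentially the same path as the paper's proof: invoke the secrecy condition to make the file entropy conditional on the eavesdropped content, bound it by the conditional entropy of the remaining $k-e$ clusters via data collection, expand by the chain rule over clusters and then over nodes within each cluster, and bound the last $m-\ell$ per-node terms using \eqref{eq:exact_prop1}--\eqref{eq:exact_rep} with a helper-cluster set that maximally overlaps the already-conditioned clusters. The only cosmetic differences are that you keep the full eavesdropper observation $\mathcal{E}$ (including the $U_i$) in the conditioning throughout rather than dropping $U_i$ after the first step, and that you make explicit the point that \eqref{eq:exact_rep} applies to any admissible $(i,j,\mathcal{H},\mathcal{L})$ regardless of whether that node actually fails --- both of which are sound and do not change the argument.
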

\begin{proof}
	Without loss of generality, let us assume that $E = \{1, 2, \ldots, e\}$. Using \eqref{eq:secure_prop}, the file-size $B^{(s)}$ is given by
	\begin{eqnarray}
		B^{(s)} & = & H(\mathcal{F}^{(s)}) \ = \ H(\mathcal{F}^{(s)}| \{{\bf Y}_i, U_i, \ i \in E\}) \ \leq \ H(\mathcal{F}^{(s)}| \{{\bf Y}_i \ i \in E\}). \label{eq:proof_sec_1}
	\end{eqnarray}
	From \eqref{eq:data_collect}, we further get that
	\begin{eqnarray} \label{eq:proof_sec_2}
		H(\mathcal{F}^{(s)}| \{{\bf Y}_i \ i \in E\}) & \leq & H(\{{\bf Y}_i, \ i \in \{e+1, \ldots, k\}\}| \{{\bf Y}_i \ i \in E\}).
	\end{eqnarray}
	Combining \eqref{eq:proof_sec_1} and \eqref{eq:proof_sec_2}, we get
	\begin{eqnarray}
		B^{(s)} & \leq & H(\{{\bf Y}_i, \ i \in \{e+1, \ldots, k\}\}| \{{\bf Y}_i \ i \in E\}) \\
		& = & \sum_{i = e+1}^{k}H(\bf{Y}_i | \bf{Y}_1, \ldots, \bf{Y}_{i-1}) \\
		& = & \sum_{i = e+1}^{k}\sum_{j = 1}^{m}H(Y_{i,j} | Y_{i,1}, \ldots, Y_{i,j-1}, \bf{Y}_1, \ldots, \bf{Y}_{i-1})\\
		& = & \sum_{i = e+1}^{k}\sum_{j = 1}^{\ell}H(Y_{i,j} | Y_{i,1}, \ldots, Y_{i,j-1}, {\bf Y}_1, \ldots, {\bf Y}_{i-1}) + \sum_{i = e+1}^{k}\sum_{j = \ell+1}^{m}H(Y_{i,j} | Y_{i,1}, \ldots, Y_{i,j-1}, {\bf Y}_1, \ldots, {\bf Y}_{i-1}) \nonumber \\
		& & \label{eq:exact_pf_0} \\
		& \leq & \ell(k-e) \alpha + \sum_{i = e+1}^{k}\sum_{j = \ell+1}^{m}H(Y_{i,j} | Y_{i,1}, \ldots, Y_{i,j-1}, {\bf Y}_1, \ldots, {\bf Y}_{i-1}) \label{eq:exact_pf_1} \\
		& \leq & \ell(k-e) \alpha + \sum_{i = e+1}^{k}\sum_{j = \ell+1}^{m}\min(\alpha, (d-(i-1))^+\beta)) \label{eq:exact_pf_2} \\
		& = & \ell(k-e) \alpha + (m - \ell)\sum_{i=e}^{k-1} \min \{\alpha, (d-i)^+\beta \}, \label{eq:exact_pf_01}
	\end{eqnarray}
	where \eqref{eq:exact_pf_2} follows from \eqref{eq:exact_prop1}-\eqref{eq:exact_rep}. This completes the proof of the upper bound.
\end{proof}

\vspace{0.1in}

As before, whenever $d > 0$, we associate the operating point $\alpha = d\beta$ with the minimum repair bandwidth secure regenerating codes. Construction \ref{constr:exact} can be easily adapted to construct optimal secure exact repair codes at the MBR point. In the modified construction,  one combines $\ell$ secure MDS codes for the wiretap-II channel~\cite{wiretap}\cite{arunkumar_secure}, and $(m-\ell)$ classical exact-repair secure MBR codes~\cite{pawar_security}\cite{rashmi_secure}. The construction and proof of optimality are similar to Construction \ref{constr:exact}; we avoid a full description here. A pictorial illustration of the secure code construction appears in Fig. \ref{fig:code_secure}.  Finally, we note that the bound in \eqref{eq:exact_pf_01} is weak at the MSR point, since it is known that classical exact-repair secure MSR codes in general cannot  achieve file size $\sum_{i=e}^{k-1} \min \{\alpha, (d-i)^+\beta \}=(k-e)\alpha$ \cite{sasidharan2014optimality,rawat2016note}. It is an interesting question as to whether the analysis like in \cite{sasidharan2014optimality,rawat2016note} can be used to provide a tighter bound for secure MSR codes under the framework of GRCs considered in this paper.

\begin{figure*}
	\centering
	\includegraphics[height=2.5in]{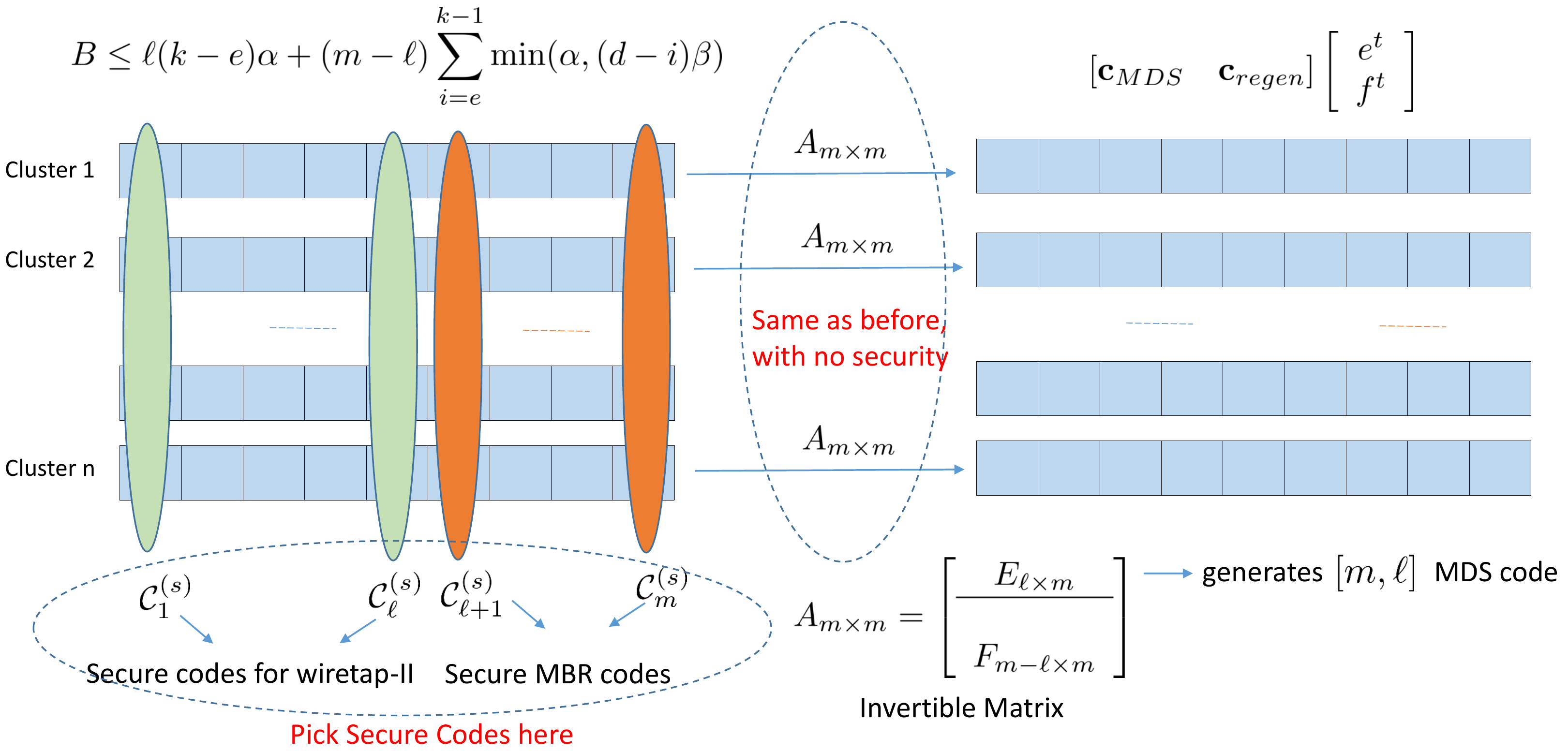}
	\caption{Illustration of the exact repair secure code construction. We first stack $\ell$ secure-MDS codes for the wiretap-II channel, and $(m - \ell)$ classical secure MBR codes, and then transform each row via the invertible matrix $A$. The first $\ell$ rows of the matrix $A$ generate an $[m, \ell]$ MDS code.}
	\label{fig:code_secure}
\end{figure*}

\section{Conclusions} \label{sec:con}

To conclude, we study the problem of storage-overhead vs repair-bandwidth overhead in clustered storage systems. The notion of clustering is used in both data collection and node repair. For data collection, we require retrievability using content from any set of $k$ clusters. For node repair, we take the help of surviving local nodes in the host cluster, as well as from other remote clusters.
We first characterize the optimal file-size that is achievable while ignoring intra-cluster bandwidth costs, and then obtain bounds on intra-cluster bandwidth costs that is needed to achieve this file-size. Our results show that while it is beneficial to increase the number of local helper nodes ($\ell$) during repair in order to simultaneously improve both storage and inter-cluster bandwidth costs, increasing $\ell$ has an adverse effect on intra-cluster repair bandwidth. Our bounds on file-size and intra-cluster bandwidth give guidelines for choosing the desired number of local helper nodes in practice, based on the relative costs of the various metrics. We present constructions of optimal exact and functional repair codes, which enable operating points for clustered systems that are not achievable via previously known coding solutions. We also analyze the resilience of the system against passive eavesdropping.

Two key questions remain at the end of this work. Firstly, our bounds on intra-cluster bandwidth are derived under the assumption of functional repair. It is unclear if these bounds hold under exact repair; specifically at the minimum-inter-cluster bandwidth operating point. The exact repair constructions in this paper, although they have optimal file-size (and inter-cluster bandwidth), incur the maximum possible intra-cluster bandwidth. Secondly, the bound on any one of the intra-cluster bandwidth related parameters (say, $\gamma$) was derived without limiting the other two other parameters ($\gamma', \ell'$). It is of special interest to know the simultaneously optimality of the the bounds in \eqref{eq:gamma} and \eqref{eq:intra_bound_helper2}. We believe that a first step in this direction would be to prove a converse statement (achievability) to \eqref{eq:intra_bound_helper2}. Achievability of \eqref{eq:intra_bound_helper2} is indeed suggested by RLNC-based simulations.

Finally, model extensions of interest include the case of simultaneous recovery from multiple node failures in a given cluster. While studying recovery for multiple node failures, it is of interest to consider presence of local parity relations in each cluster, even when $d > 0$.

\bibliographystyle{IEEEtranN}
\bibliography{citations}

\end{document}